%
\documentclass[runningheads]{llncs}
\usepackage[T1]{fontenc}
%
\usepackage{graphicx}
%

\usepackage{hyperref}

\hypersetup{
    colorlinks=true,
    linkcolor=blue,
    filecolor=blue,      
    urlcolor=blue,
    citecolor=blue,
    }

\usepackage{color}
\usepackage{url}

%
\usepackage{amsmath, amssymb}

\usepackage{tikz}
\usetikzlibrary{automata, positioning, arrows}
\tikzset{
  ->, 
  >=stealth, 
  node distance=3cm, 
  initial text=$ $, 
}
\usepackage{thmtools}
\usepackage[shortlabels]{enumitem}
\setlist[enumerate,1]{label={\arabic*.}}

\let\origendproof=\endproof
\def\endproof{\qed\origendproof}

\newcommand{\bb}{\mathbb}
\newcommand{\mc}{\mathcal}

\DeclareMathOperator{\dom}{dom}

\DeclareMathOperator{\Sym}{Sym}

\DeclareMathOperator{\Ldim}{Ldim}
\DeclareMathOperator{\Cdim}{Cdim}


\newif\ifappendix

\appendixtrue

\begin{document}
\title{Query Learning Bounds for Advice and Nominal Automata}
%
%
\author{Kevin Zhou}
%
\authorrunning{K. Zhou}
%
\institute{University of Illinois Chicago, Chicago, IL 60607, USA \\
\email{kzhou23@uic.edu}\\
}

\maketitle              
\begin{abstract}
Learning automata by queries is a long-studied area initiated by Angluin in 1987 with the introduction of the $L^*$ algorithm to learn regular languages, with a large body of work afterwards on many different variations and generalizations of DFAs.
Recently, Chase and Freitag introduced a novel approach to proving query learning bounds by computing combinatorial complexity measures for the classes in question, which they applied to the setting of DFAs to obtain qualitatively different results compared to the $L^*$ algorithm.
Using this approach, we prove new query learning bounds for two generalizations of DFAs.
The first setting is that of advice DFAs, which are DFAs augmented with an advice string that informs the DFA's transition behavior at each step.
For advice DFAs, we give the first known upper bounds for query complexity.
The second setting is that of nominal DFAs, which generalize DFAs to infinite alphabets which admit some structure via symmetries.
For nominal DFAs, we make qualitative improvements over prior results.

\keywords{Advice automata \and Nominal automata \and Query learning}
\end{abstract}
\section{Introduction}

Learning various forms of automata with queries is a well-studied class of problems with many applications, including in AI, automatic verification, and model checking. 
The field was initiated by Angluin in 1987 with the introduction of the $L^*$ algorithm that learns deterministic finite automata (DFAs) using a polynomially bounded number of equivalence and membership queries \cite{angluin:learning-regular-sets}.
The $L^*$ algorithm has seen many applications and has also inspired generalizations to other types of automata, such as tree automata \cite{yasubumi:learning-context-free-grammars}, nondeterministic finite automata \cite{bollig-etal:learning-NFA}, $\omega$-automata \cite{angluin-fisman:learning-regular-omega-languages}, symbolic automata \cite{drews-dantoni:learning-symoblic-automata}, and fully-ordered lattice automata \cite{fisman-saadon:learning-characterizing-fully-ordered-lattice-automata}. 
The setting of learning with queries is especially well-suited for learning automata---in real-world applications where the task is to identify the behavior of some black-box system that can be modelled by an automaton, queries can be simulated by interacting with the system and observing the output.

Historically, the study of query learning of automata has centered around adapting Angluin's $L^*$ algorithm to different settings.
We do not do this, and instead build on a method introduced by Chase \& Freitag \cite{chase-freitag:bounds-in-query-learning}, who give query learning bounds in terms of \emph{Littlestone dimension} and \emph{consistency dimension}, which are complexity measures for concept classes. 
One can then prove a query learning bound in a given setting by computing the Littlestone and consistency dimensions.
This approach seems to be particularly effective for learning automata, since many types of automata exhibit form of Myhill-Nerode characterization, which can be used to bound the consistency dimension.
For example, Chase and Freitag apply this method for regular languages and regular $\omega$-languages and obtain qualitatively different results from prior work.
Our work serves as further proof of concept of this method.

Our work studies the learnability of two generalizations of DFAs. 
The first is advice DFAs, which were studied as early as 1968 by Salomaa \cite{salomaa:finite-automata-time-variant-structure}, though we follow the notation of Kruckman et al. \cite{kruckman-etal:myhill-nerode-automata-advice}.
Advice DFAs generalize classical DFAs by allowing the automata access to an additional advice string that it reads concurrently with the input, and are useful in modeling situations where the transition behavior can vary over time. 
They are also of interest to the logic community: it is a classical result that DFAs correspond to weak monadic second-order formulas over the structure of natural numbers with the successor operation; advice DFAs correspond to formulas over expansions of this structure by unary predicates, a frequently studied setting; see e.g. \cite{elgot-rabin:decidability-undecidability-extensions-of-successor,carton-thomas:monadic-theory-morphic-infinite-words,rabinovich-thomas:decidable-theories-of-natural-numbers-unary-predicates,barany:automatic-omega-words-decidable-MSO-theory}.
Another motivating factor for advice DFAs comes from the study of \emph{automatic structures}, which are structures whose domain and atomic relations are recognized by DFAs. 
It turns out some natural structures are not automatic or even isomorphic to an automatic structure, such as $(\bb Q, +)$, the additive group of the rationals \cite{tsankov:additive-group-of-rationals-not-automatic}.
However, $(\bb Q, +)$ \textit{is} isomorphic to a structure whose domain and atomic relations are recognized by advice DFAs \cite{nies:describing-groups,kruckman-etal:myhill-nerode-automata-advice}.

The second setting we consider is that of nominal DFAs, introduced by Boja\'nczyk, Klin, and Lasota \cite{bojanczyk-klin-lasota:automata-theory-in-nominal-sets}, which generalize DFAs to infinite alphabets. 
Generalizing formal language theory to infinite alphabets is highly non-trivial, since without any restrictions, the fact that there are uncountably many subsets of any infinite set makes computation intractable.
However, in most reasonable applications, there is additional structure that can be leveraged to make computation reasonable, and nominal DFAs utilize the notion of \textit{nominal sets} (first introduced by Gabbay \& Pitts \cite{gabbay-pitts:new-approach-to-abstract-syntax}) to formalize this idea. 
Aside from the development of the theory of automata over infinite alphabets, nominal sets have also found much use in the concurrency and semantics communities as a formalism for modeling name binding (see e.g. \cite{montanari-pistore:history-dependent-automata-an-introduction,pitts:nominal-sets-names-and-symmetry-in-computer-science}).

\subsubsection{Organization of Paper and Summary of Results}

In \autoref{section:preliminaries}, we present the basic definitions and notation necessary for the remainder of the paper.

In \autoref{section:complexity of advice DFAs}, we study query learning of advice DFAs, and give the first known bound for the query complexity of advice DFAs.
Our result for advice DFAs is as follows: let $\mc L^{\text{adv}}_k(n,m)$ be the set of languages over an alphabet of size $k$ recognized by an advice DFA on at most $n$ states, restricted to strings of length at most $m$.
\begin{restatable*}{theorem}{adviceLC} \label{thm:EQ+MQ learning complexity of DFAs with advice}
    The (EQ+MQ)-query complexity of $\mc L^{\text{adv}}_k(n,m)$ with queries from $\mc L^{\text{adv}}_k(2n,m)$ is $O(n^3mk \log n)$.
\end{restatable*}

In \autoref{section:complexity of nominal DFAs}, we study query learning of nominal DFAs.
Query learning of nominal DFAs was previously studied by Moerman et al. \cite{moerman-etal:learning-nominal-automata}.
We give a qualitative improvement over this previous result; a detailed comparison is given in the discussion below.
Our result for nominal DFAs is as follows: given a $G$-alphabet $A$, let $\mc L^{\text{nom}}_A(n,k)$ denote the set of $G$-languages recognized by a nominal DFA with at most $n$ orbits and nominal dimension at most $k$.
\begin{restatable*}{theorem}{nominalLCfixedalphabet}
    For a fixed $G$-alphabet $A$, the (EQ+MQ)-query complexity of \\$\mc L^{\text{nom}}_A(n,k)$ with queries from $\mc L^{\text{nom}}_A(n,k)$ is at most $\frac{n^{O(k)}}{k^k}$.
\end{restatable*}


\ifappendix \else Due to space constraints, some proofs are omitted, and given in the appendix of the full version of this paper. \fi

\subsubsection{Discussion \& Related Work}

The approach of Chase \& Freitag of deriving query complexity bounds by computing bounds on various combinatorial notions of dimension is a fairly novel one, and in \cite{chase-freitag:bounds-in-query-learning} they illustrate its usefulness by easily proving polynomial query learning bounds for regular languages. 
The approach gives a qualitatively different bound compared to the one given by the $L^*$ algorithm---their results have no dependence on the length of the longest counterexample returned by the oracle (while the $L^*$ algorithm does), but a worse dependence on the number of states. 
They also do not give any computational complexity guarantees---their algorithm requires computing the Littlestone dimension and finding witnesses to the consistency dimension, which in general may be hard computational problems.

As mentioned, in the setting of advice DFAs, we give the first known bounds on the query complexity. 
The restriction on the length of the strings in $\mc L^{\text{adv}}_k(n,m)$ is necessary to make the problem tractable and can be thought of as giving a dependence on the length of the longest counterexample for equivalence queries.
It may be of interest to see if there is an analogous version of the $L^*$ algorithm for advice DFAs, and if so, how the bounds with that approach compare to ours.

The prior best known bound for query complexity of nominal DFAs is given by Moerman et al. in \cite{moerman-etal:learning-nominal-automata}, using a generalization of Angluin's $L^*$ algorithm to nominal DFAs. The bound that they derive is a complicated quantity (see \cite[Corollary 1]{moerman-etal:learning-nominal-automata}), but in particular we observe the following: if the target automaton has $n$ orbits and nominal dimension $k$, $p$ is the nominal dimension of the (fixed) alphabet, and $m$ is the length of the longest counterexample returned by the oracle, then their bound is lower bounded by both
\begin{enumerate}[(a)]
    \item $\min\left(\left(\frac{nk}{e}\right)^m, \left(\frac{m}{e}\right)^{nk}\right)$ and 
    \item $(k^n n!)^p$.
\end{enumerate}
These are not explicitly stated in the original work, but follow with some additional work (see \autoref{remark:factorial bounds in moerman etal}).

Our result improves on (a) since our bound does not depend on the length of the longest counterexample, and improves the asymptotic dependence on $n$ compared to (b) (polynomial in $n$ with respect to $k$, instead of factorial in $n$).
This is especially important in light of \autoref{cor:dimension of A^*/equiv_L bounded in terms of orbits}, which says that in this setting, $k$ is at most a constant multiple of $n$.
However, as with Chase \& Freitag's result on classical DFAs, we give no computational complexity guarantees.

\section{Preliminaries} \label{section:preliminaries}

\subsection{Query learning}

Let $X$ be a set (the \emph{instance space}). 
A \emph{concept} is a function $C : X \to \{0,1\}$, and a \emph{concept class} $\mc C$ on $X$ is a nonempty set of concepts. 
We note that a concept is sometimes equivalently defined as a subset of $X$, but for our purposes it will be easier to work with functions. 
Fix a concept $C \in \mc C$, which we call the \textit{target}, and another concept class $\mc H \supseteq \mc C$, which we call the \emph{hypothesis class}. 
An \emph{equivalence query} (EQ) consists of a hypothesis $H \in \mc H$, to which the oracle answers \textit{yes} if $H = C$, or with a counterexample $x \in X$ for which $H(x) \neq C(x)$. 
A \emph{membership query} (MQ) consists of an element $x \in X$, to which the oracle responds with the value of $C(x)$. 
The learning procedure proceeds interactively in rounds: in each round, the learner poses a query to the oracle, and the oracle responds with the corresponding answer. 
The learner is allowed to choose queries based on the responses to previous queries, and succeeds if they submit the target as an equivalence query. 
In \emph{EQ-learning}, the learner is only allowed to submit equivalence queries, while in \emph{(EQ+MQ)-learning}, the learner can use both equivalence and membership queries. 

\begin{definition}[Query Complexity]
    Let $\mc C \subseteq \mc H$ be two concept classes. 
    The \emph{EQ-query complexity of $\mc C$ with queries from $\mc H$} is defined to be the least $n$ such that there is an algorithm for the learner to submit equivalence queries from $\mc H$ with the property that for any $C \in \mc C$, the learner can identify $C$ within at most $n$ queries, or $\infty$ if no such $n$ exists.
    
    The \emph{(EQ+MQ)-query complexity of $\mc C$ with queries from $\mc H$} is defined in the same way, except that the learner is allowed to also use membership queries.
\end{definition}

A significant stream of prior work studied bounds for the query complexity of $\mc C$ with hypotheses from $\mc H$ in terms of combinatorial complexity measures of $\mc C$ and $\mc H$ \cite{hellerstein-pillaipakkamnatt-raghavan-wilkins:how-many-queries-are-needed-to-learn,balcazar-castro-guijarro-simon:consistency-dimension-query-learning,chase-freitag:bounds-in-query-learning,hanneke-livni-moran:online-learning-simple-predictors}.
These bounds are formulated in terms of the \emph{Littlestone dimension}, \emph{consistency dimension}, and \emph{strong consistency dimension} (also known as the \emph{dual Helly number}), which we define now.

A \emph{binary element tree} is a complete binary tree whose internal nodes are labeled by elements of $X$. 
A binary element tree $T$ is \emph{shattered} by $\mc C$ if there is a way to label all the leaves of $T$ with elements of $\mc C$ such that the following condition holds: given a leaf node labeled by $A \in \mc C$, for each internal node above $A$ labeled by $x \in X$, $A(x)=1$ if and only if the (unique) path from the root to $A$ goes through the left child of $x$. 

An example of a (labelled) binary element tree of height 2 is given below, where $x_0, x_1, x_2 \in X$, and $C_0, C_1, C_2, C_3 \in \mc C$:

\begin{center}
\begin{tikzpicture}[level/.style={sibling distance=8em/#1},scale=0.7]
  \node {$x_0$}
    child {node {$x_1$}
      child {node {$C_0$}}
      child {node {$C_1$}}}
    child {node {$x_2$}
      child {node {$C_2$}}
      child {node {$C_3$}}}
    ;
\end{tikzpicture}
\end{center}

This tree is shattered exactly when $C_0(x_0) = C_0(x_1) = 1$, $C_1(x_0) = 1$ but $C_1(x_1) = 0$, $C_2(x_0) = 0$ but $C_2(x_2) = 1$, and $C_3(x_0) = C_3(x_2) = 0$.

\begin{definition}[Littlestone dimension]
    The \emph{Littlestone dimension} of a concept class $\mc C$, denoted $\Ldim(\mc C)$, is the maximum $n$ such that there exists a binary element tree $T$ of height $n$ which is shattered by $\mc C$. 
    If no such $n$ exists, we say that $\Ldim(\mc C) = \infty$. 
\end{definition}

\begin{remark} \label{remark:littlestone dimension of a finite class}
    A straightforward bound for the Littlestone dimension of a finite class $\mc C$ is $\Ldim(\mc C) \leq \log |\mc C|$. To see this, note that a binary element tree $T$ of height $>\log |\mc C|$ has more than $|\mc C|$ leaves, so there must be two leaves labeled by the same element $C \in \mc C$. 
    Consider the internal node where the paths leading to these two leaves differ; suppose it is labeled by $x$. 
    Then there is a root-to-leaf path ending at $C$ that goes through the left child of $x$, implying that $x \in C$.
    However, there is also a root-to-leaf path ending at $C$ that goes through the right child of $x$, implying that $x \notin C$, so $T$ cannot be shattered.
\end{remark}

We now define consistency dimension.
For the following bulleted definitions, let $A,B$ be \underline{partial} functions from $X$ to $\{0,1\}$.
\begin{itemize}
    \item $\dom(A)$ denotes domain of $A$.
    \item The \emph{size} of $A$ refers to the cardinality of $\dom(A)$.
    \item For a set $Y \subseteq \dom(A)$, the \emph{restriction} of $A$ to $Y$ is the partial function $A|_Y$ defined by $A|_Y(x) = A(x)$ for $x \in Y$ and undefined outside of $Y$.
    \item We say that $B$ \emph{extends} $A$ if $\dom(A) \subseteq \dom(B)$ and $B|_{\dom(A)} = A$. 
    On the other hand, we say that $A$ is a \emph{restriction} of $B$. 
    \item Given a concept class $\mc C$, $A$ is \emph{$n$-consistent with $\mc C$} if every size $n$ restriction of $A$ has an extension in $\mc C$. 
    Otherwise, $A$ is \emph{$n$-inconsistent}. 
\end{itemize}

\begin{definition}[Consistency Dimension]
    The \emph{consistency dimension} of $\mc C$ with respect to $\mc H$, denoted $\Cdim(\mc C, \mc H)$, is the least $n$ such that for every concept $A : X \to \{0,1\}$ that is $n$-consistent with $\mc C$, we have that $A \in \mc H$. 
    If no such $n$ exists, we say that $\Cdim(\mc C, \mc H) = \infty$. 
    In the case that $\mc H = \mc C$, we will write $\Cdim(\mc C)$ to denote $\Cdim(\mc C, \mc C)$. 
\end{definition}

\begin{remark}
    By contrapositive, $\Cdim(\mc C, \mc H) \leq n$ if for every concept $A \notin \mc H$, $A$ is $n$-inconsistent with $\mc C$. That is, we can find a subset $Y \subseteq X$ of size $n$ such that $A|_Y$ cannot be extended to anything in $\mc C$. 
\end{remark}

It is often easier to work concretely with the contrapositive characterization of consistency dimension. 
Our proofs of bounds on consistency dimension will go through this direction.

\begin{theorem}[{\cite[Theorem 2.24]{chase-freitag:bounds-in-query-learning}}]
\label{thm:query complexity bounds}
Let $\mc C \subseteq \mc H$ be two concept classes on a set $X$, $c = \Cdim(\mc C, \mc H)$, and $d = \Ldim(\mc C)$.
Then (EQ+MQ)-query complexity of $\mc C$ with queries from $\mc H$ is $O(cd)$. 
\end{theorem}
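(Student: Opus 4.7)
The plan is to adapt the Standard Optimal Algorithm (SOA) from the classical mistake-bound setting, augmenting it with a membership query routine to handle the possibility that the SOA hypothesis does not lie in $\mc H$. Throughout, maintain a sample $S$ of true values of the target $C$ obtained so far, and let $\mc V = \{C' \in \mc C : C' \text{ agrees with } S\}$ be the current version space. Since only true labels are added, $C \in \mc V$ at all times, and $\Ldim(\mc V) \leq d$ initially. The goal is to show that each round either terminates the algorithm or strictly decreases $\Ldim(\mc V)$, so that at most $d+1$ rounds suffice.

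At the start of each round, compute the SOA hypothesis $H^* : X \to \{0,1\}$ by $H^*(x) = \arg\max_b \Ldim(\mc V^b_x)$, where $\mc V^b_x = \{C' \in \mc V : C'(x) = b\}$, breaking ties arbitrarily. If $H^* \in \mc H$, submit $H^*$ as an equivalence query. Either we win, or we receive a counterexample $x$ with $C(x) \neq H^*(x)$, which we add to $S$. If $H^* \notin \mc H$, then by the definition of $c = \Cdim(\mc C, \mc H)$ there exists $Y \subseteq X$ with $|Y| \leq c$ such that $H^*|_Y$ has no extension in $\mc C$; spend $|Y|$ membership queries to determine $C|_Y$ and add these labels to $S$. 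Because $C$ itself extends $C|_Y$ while $H^*|_Y$ has no extension in $\mc C$, $C|_Y$ and $H^*|_Y$ must disagree on some $y \in Y$, and that $y$ is a point on which the SOA prediction was wrong. In either case, the updated version space is contained in $\mc V^{C(z)}_z$ for some point $z$ where SOA erred.

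The standard SOA analysis now gives the Littlestone-dimension drop: $\mc V^{C(z)}_z$ contains $C$ and is thus nonempty, and since SOA chose $1 - C(z)$, the class $\mc V^{1-C(z)}_z$ is nonempty as well with $\Ldim(\mc V^{1-C(z)}_z) \geq \Ldim(\mc V^{C(z)}_z)$. Placing trees of height $\Ldim(\mc V^{C(z)}_z)$ shattered by $\mc V^0_z$ and $\mc V^1_z$ under a new root labeled by $z$ exhibits a shattered tree for $\mc V$, giving $\Ldim(\mc V^{C(z)}_z) \leq \Ldim(\mc V) - 1$. Thus $\Ldim(\mc V)$ drops by at least $1$ per round, and once it reaches $0$ all functions in $\mc V$ agree, so $\mc V$ singles out $C \in \mc C \subseteq \mc H$; one final EQ finishes the algorithm. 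Since each round uses at most $c + 1$ queries and at most $d$ rounds precede the final one, the total is $d(c+1) + 1 = O(cd)$. The main conceptual step beyond the classical SOA proof is the MQ case: one must verify that when $H^* \notin \mc H$, consistency dimension really does let us find a disagreement between $H^*$ and $C$ within $c$ targeted membership queries, and this is exactly what the definition delivers, since $C$ itself witnesses an extension of the true labeling on $Y$ to $\mc C$.
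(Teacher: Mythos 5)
Your argument is correct and follows essentially the same strategy as the Chase--Freitag proof this theorem is cited from: run the Standard Optimal Algorithm to maintain a version space with shrinking Littlestone dimension, using equivalence queries when the SOA hypothesis lies in $\mc H$ and falling back on at most $c$ membership queries on a witness set for $c$-inconsistency when it does not, with the observation that the target itself extends the true labeling on that set forcing a disagreement point where the SOA erred. The accounting $d(c+1)+1 = O(cd)$ and the termination condition $\Ldim(\mc V)=0$ are handled correctly.
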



\subsection{Deterministic finite automata}

In this subsection, we set notation and terminology for deterministic finite automata and regular languages, and briefly review the Myhill-Nerode theorem.

\begin{definition}[Deterministic Finite Automata]
    Let $\Sigma$ be a finite set (the \emph{input alphabet}). 
    A \emph{deterministic finite automaton (DFA) over $\Sigma$} consists of the following data:
    \begin{itemize}
        \item a finite set $Q$ (the set of \emph{states}); 
        \item a function $\delta : Q \times \Sigma \to Q$ (the \emph{transition function});
        \item a state $q_0 \in Q$  (the \emph{initial state}); and
        \item a set of states $F \subseteq Q$  (the set of \emph{accepting states}).
    \end{itemize}
\end{definition}

As is typical, $\Sigma^*$ will denote the set of finite strings over $\Sigma$.
Given an input string $x = x_1 x_2 \cdots x_n \in \Sigma^*$, and a DFA $M$, define the \emph{run} of $M$ on $x$ to be the sequence of states $\alpha_0, \ldots, \alpha_n \in Q$ such that $\alpha_0 = q_0$, and for $1 \leq i \leq n$, we have that $\delta(\alpha_{i-1}, x_i) = \alpha_i$. 
A string $x \in \Sigma^*$ is \emph{accepted} by $M$ if the last state appearing in the run of $M$ on $x$ is in $F$.
A \emph{language} is a function $L : \Sigma^* \to \{0,1\}$.
We note that languages are often defined as subsets of $\Sigma^*$, but here we use functions in order to align with our definition of a concept.
A language is \emph{recognized} by a DFA $M$ if $M$ accepts a string $x$ if and only if $x \in L$.
We say that a language $L$ is \emph{regular} if it is recognized by some DFA.

The Myhill-Nerode theorem provides a useful syntactic characterization of regular languages.
Given a language $L : \Sigma^* \to \{0,1\}$, define an equivalence relation $\equiv_L$ on $\Sigma^*$ as follows: for strings $x,y \in \Sigma^*$, we say that $x \equiv_L y$ iff $L(xz)= L(yz)$ for all $z \in \Sigma^*$.

\begin{theorem}[Myhill-Nerode] \label{thm:myhill-nerode}
  A language $L$ is regular if and only if $\equiv_L$ has finitely many equivalence classes.
  Moreover, $\equiv_L$ has exactly $n$ classes if and only if the minimal DFA recognizing $L$ has exactly $n$ states.
\end{theorem}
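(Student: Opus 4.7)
The plan is to prove the two directions of the biconditional, extracting from the constructions the exact count of states needed for the ``moreover'' clause.

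For the forward direction, I would assume $L$ is recognized by some DFA $M = (Q, \delta, q_0, F)$ and define a map $\phi : \Sigma^* \to Q$ sending $x$ to the state at the end of the run of $M$ on $x$. The key observation is that $\phi(x) = \phi(y)$ implies $x \equiv_L y$: if both strings drive $M$ to the same state, then for any suffix $z$, the runs on $xz$ and $yz$ end in the same state, so $L(xz) = L(yz)$. Hence the number of equivalence classes of $\equiv_L$ is at most $|Q|$, which is finite.

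For the backward direction, I would construct the ``Nerode automaton'' $M_L$ whose states are the equivalence classes $[x]$ of $\equiv_L$, with initial state $[\varepsilon]$, transition function $\delta([x], a) := [xa]$, and accepting set $F := \{[x] : L(x) = 1\}$. The main thing to verify (and the one genuine subtlety in this otherwise routine argument) is that these are well defined: if $x \equiv_L y$, then applying the definition of $\equiv_L$ to the suffix $az$ yields $xa \equiv_L ya$, so $\delta$ does not depend on the choice of representative, and applying it to $z = \varepsilon$ gives $L(x) = L(y)$, so $F$ is well defined. An easy induction on $|x|$ shows that the run of $M_L$ on $x$ ends at $[x]$, so $M_L$ accepts $x$ iff $L(x) = 1$, and $L$ is regular.

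For the ``moreover'' clause, the backward construction produces a DFA with exactly $|\Sigma^*/\equiv_L|$ states that recognizes $L$. Conversely, the forward direction shows that any DFA recognizing $L$ has at least $|\Sigma^*/\equiv_L|$ states, since the map $\phi$ above factors through $\Sigma^*/\equiv_L$ and hence this quotient injects into $Q$. Therefore the minimal DFA recognizing $L$ has exactly $|\Sigma^*/\equiv_L|$ states, giving both directions of the count. There is no real obstacle here; the only step requiring care is checking well-definedness of $\delta$ and $F$ in the Nerode construction, which is where the right-congruence property of $\equiv_L$ is essential.
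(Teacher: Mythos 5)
The paper states this theorem as classical background in the preliminaries and does not include a proof, so there is no internal argument to compare against. Your proof is the standard textbook argument (forward direction via the ``reaching the same state'' map, backward direction via the Nerode quotient automaton, with the state-count correspondence extracted from both directions), and it is correct and complete; in particular you correctly isolate the well-definedness of $\delta$ and $F$ as the only non-trivial check, and the ``moreover'' clause follows from the two inequalities you establish.
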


\section{Learning Advice DFAs} \label{section:complexity of advice DFAs}

\subsection{Overview of advice DFAs}

In this subsection, we give an introduction to automata with advice.
For a more comprehensive overview, see \cite{kruckman-etal:myhill-nerode-automata-advice}.

\begin{definition}[Advice DFA]
  Let $\Sigma$ and $\Gamma$ be finite sets (the \emph{input} and \emph{advice alphabets}, respectively). 
  An \emph{advice DFA $M$ over $\Sigma$ with advice from $\Gamma$} consists of the following data: 
  \begin{itemize}
        \item a finite set $Q$ (the set of \emph{states}); 
        \item an infinite length string $A \in \Gamma^\omega$ over the advice alphabet (the \emph{advice string}); 
        \item a function $\delta : Q \times \Sigma \times \Gamma \to Q$ (the \emph{transition function});
        \item a state $q_0 \in Q$  (the \emph{initial state}); and
        \item a set of states $F \subseteq Q$  (the set of \emph{accepting states}).
  \end{itemize}
\end{definition}

Given an input string $x = x_1 x_2 \cdots x_n \in \Sigma^*$ and advice DFA $M$, define the \emph{run} of $M$ on $x$ to be the sequence of states $\alpha_0, \ldots, \alpha_n \in Q$ such that $\alpha_0 = q_0$, and for $1 \leq i \leq n$, we have that $\delta(\alpha_{i-1}, x_i, A_i) = \alpha_i$. 
A string $x \in \Sigma^*$ is accepted by $M$ if the last state appearing in the run of $M$ on $x$ is in $F$.
A language $L$ is \emph{recognized} by an advice DFA $M$ if $M$ accepts a string $x$ if and only if $x \in L$.
We say that a language $L$ is \emph{regular with advice} if it is recognized by some advice DFA.

Advice DFAs extend classical DFAs with the addition of the (infinite) advice string $A$ (which is fixed beforehand as part of the automaton). 
The advice string is read in parallel with the input string; i.e., when $M$ reads the $n^\text{th}$ character of the input, it also has access to the $n^\text{th}$ character of the advice when deciding which transition to make.
One way to think about the advice string is that it allows the transition function to vary at each step of the computation (although at a fixed step $i$, the transition behavior is the same regardless of the input string).

Advice DFAs satisfy a Myhill-Nerode characterization, under a variant of the $\equiv_L$ relation.
Define an equivalence relation $\equiv_{L,m}$ on $\Sigma^m$ by $x \equiv_{L,m} y$ iff $xz \in L \iff yz \in L$ for all $z \in \Sigma^*$. Notice that $\equiv_{L,m}$ is simply $\equiv_L$ restricted to strings of length $m$. 

\begin{theorem}[Myhill-Nerode for advice DFAs, cf. {\cite[Theorem 4]{kruckman-etal:myhill-nerode-automata-advice}}] \label{thm:advice myhill-nerode with bounds}
    Let $L$ be a language.
    \begin{enumerate}[(i)]
        \item Suppose $L$ is accepted by an advice DFA that has $n$ states. 
        Then $\equiv_{L,m}$ has at most $n$ classes for all $m \in \bb N$. 
        
        \item Suppose $\equiv_{L,m}$ has at most $n$ classes for all $m \in \bb N$. 
        Then there is an advice DFA on $2n$ states that recognizes $L$.
    \end{enumerate}
\end{theorem}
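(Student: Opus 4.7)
The plan is to handle the two directions separately, in analogy with the classical Myhill--Nerode argument but accounting for the time-varying behavior of advice DFAs.

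For \emph{(i)}, suppose $M$ is an advice DFA with $n$ states and advice $A$ recognizing $L$. For $x \in \Sigma^m$, let $\delta^*(x)$ denote the state reached at the end of the run of $M$ on $x$. I would show that $\delta^*(x) = \delta^*(y)$ implies $x \equiv_{L,m} y$: the portion of the run of $M$ on $xz$ after position $m$ depends only on the starting state $\delta^*(x)$ and the advice suffix $A_{m+1} A_{m+2} \cdots$, so the runs on $xz$ and $yz$ coincide beyond position $m$, and hence $xz$ is accepted iff $yz$ is. Thus the number of $\equiv_{L,m}$-classes is at most the number of states, namely $n$.

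For \emph{(ii)}, the idea is to build an advice DFA whose state after reading a length-$m$ string $x$ encodes the $\equiv_{L,m}$-class of $x$. First I would verify compatibility with appending a symbol: if $x \equiv_{L,m} y$ then $x\sigma \equiv_{L,m+1} y\sigma$, which follows from the definition by replacing $z$ with $\sigma z'$. This makes the ``successor'' map from $P_m$ (the set of classes of $\equiv_{L,m}$) to $P_{m+1}$ under appending $\sigma$ well-defined.

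To encode these varying partitions using a fixed finite state set, take $Q$ of size $2n$, split into $n$ accepting states $q_1^+, \ldots, q_n^+$ and $n$ non-accepting states $q_1^-, \ldots, q_n^-$, and declare $F = \{q_1^+, \ldots, q_n^+\}$. For each $m$, choose an injection $\lambda_m : P_m \to Q$ sending classes $C \subseteq L$ to accepting states and classes $C \subseteq L^c$ to non-accepting states; this is possible since each sort contains at most $|P_m| \leq n$ classes, and any single $\equiv_{L,m}$-class lies entirely in $L$ or $L^c$ (take $z = \varepsilon$). Let $\Gamma$ be the (finite) set of functions $Q \times \Sigma \to Q$, set the $(m+1)$-st advice character to the function $(\lambda_m(C), \sigma) \mapsto \lambda_{m+1}(C\sigma)$ (extended arbitrarily off the image of $\lambda_m$), and define $\delta(q, \sigma, f) = f(q, \sigma)$. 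A straightforward induction then shows the run on $x_1 \cdots x_m$ reaches $\lambda_m([x_1 \cdots x_m]_m)$, which lies in $F$ iff $x_1 \cdots x_m \in L$.

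The main subtlety, and the reason the bound is $2n$ rather than $n$, lies in this final encoding step: the split of $P_m$ into accepting and non-accepting classes can vary arbitrarily with $m$ subject only to $|P_m| \leq n$, yet the accepting set $F$ of the automaton is fixed once and for all. Reserving $n$ labels for each of the two categories ensures that compatible injections $\lambda_m$ exist for every $m$ simultaneously while respecting a globally fixed $F$.
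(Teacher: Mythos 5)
Your proof is correct, and it fills in the argument that the paper defers to Kruckman et al.\ (the paper states the quantitative bounds as ``easily derived from its proof'' but does not reproduce it). Part (i) is the standard surjection from states onto $\equiv_{L,m}$-classes, using that the run after position $m$ depends only on the state at position $m$ and the advice suffix. Part (ii) is also the intended construction: encode the evolving partition $P_m$ by time-indexed injections into a fixed $2n$-element state set, let the advice alphabet be functions $Q \times \Sigma \to Q$ encoding the successor map $C \mapsto C\sigma$, and observe that well-definedness of that map reduces to the elementary implication $x \equiv_{L,m} y \Rightarrow x\sigma \equiv_{L,m+1} y\sigma$. Crucially, you correctly identify and justify the reason the bound is $2n$ and not $n$: the number of accepting versus non-accepting classes in $P_m$ may shift arbitrarily with $m$ while $F$ must be chosen once, so reserving $n$ labels for each side is needed in the worst case (and the paper's appendix exhibits a language showing $2n$ is tight, confirming this is not an artifact of the construction).
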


In general, the bound of $2n$ states in the second item is tight.
\ifappendix A example witnesing this is given in Appendix \ref{appendix:witness to tightness of advice myhill-nerode bounds}. \fi
Also, note that this statement is more precise than the original version in \cite{kruckman-etal:myhill-nerode-automata-advice}; in particular, the relationship between the number of states and the number of $\equiv_{L,m}$-classes does not appear in the original theorem. 
However, this relationship is easily derived from its proof.

\subsection{Learning bound for advice DFAs}

For the remainder of the section, fix $k \in \bb N$ and let $\Sigma$ be an alphabet of size $k$. 
We will not fix the advice alphabet $\Gamma$; however, notice that if we are constructing an automaton with at most $n$ states, we may take $\Gamma$ to have size $n^{nk}$, since the advice string can be thought of as coding the transition function at a given step, and there are $n^{nk}$ functions from $Q \times \Sigma \to Q$ (possible transition functions). 

Note that the language consisting of the finite prefixes of any $\omega$-word $A$ is regular with advice $A$, and each of these languages is distinct, so there are uncountably many languages that are regular with advice.
Thus we cannot have any finitary representation of arbitrary languages that are regular with advice.
Hence, we consider the case where we restrict to strings of bounded length.
Let $\mc L^{\text{adv}}_k(n,m)$ denote the set 
\begin{align*}
    \mc L^{\text{adv}}_k(n,m) := \{L \subseteq \Sigma^{\leq m} \ \mid \ &L \text{ is recognized by an advice} \\
    &\text{DFA with at most } n \text{ states}\},
\end{align*}
and let $\mc E_k(n,m)$ denote the set
\begin{align*}
  \mc E_k(n,m) := \{L \subseteq \Sigma^{\leq m}  \ \mid \ &\equiv_{L,\ell} \text{ has at most } \\
  &n \text{ classes for all } \ell \leq m\}.
\end{align*}

By \autoref{thm:advice myhill-nerode with bounds}, $\mc L^{\text{adv}}_k(n,m) \subseteq \mc E_k(n,m) \subseteq \mc L^{\text{adv}}_k(2n,m)$ for any $n,m \in \bb N$. 
Because of this, it is more convenient to compute the query complexity of $\mc L^{\text{adv}}_k(n,m)$ with queries $\mc L^{\text{adv}}_k(2n,m)$. 

\begin{proposition} \label{prop:Cdim of DFAs with advice}
  The consistency dimension of $\mc L^{\text{adv}}_k(n,m)$ with respect to \\$\mc L^{\text{adv}}_k(2n,m)$ is at most $n(n+1)$.
\end{proposition}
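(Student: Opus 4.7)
The plan is to use the contrapositive characterization of consistency dimension: I will show that for any $A : \Sigma^{\leq m} \to \{0,1\}$ with $A \notin \mc L^{\text{adv}}_k(2n,m)$, there is a set $Y \subseteq \Sigma^{\leq m}$ of size at most $n(n+1)$ such that $A|_Y$ has no extension in $\mc L^{\text{adv}}_k(n,m)$.

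The first step is to leverage the containment $\mc E_k(n,m) \subseteq \mc L^{\text{adv}}_k(2n,m)$ from \autoref{thm:advice myhill-nerode with bounds}: since $A \notin \mc L^{\text{adv}}_k(2n,m)$ we have $A \notin \mc E_k(n,m)$, so there is some $\ell \leq m$ for which $\equiv_{A,\ell}$ has at least $n+1$ equivalence classes. Pick representatives $x_0, \ldots, x_n \in \Sigma^\ell$ of $n+1$ distinct classes. Then for each pair $0 \leq i < j \leq n$, there is a witness string $z_{i,j} \in \Sigma^*$ with $A(x_i z_{i,j}) \neq A(x_j z_{i,j})$, and I may truncate $z_{i,j}$ so that $|z_{i,j}| \leq m - \ell$, keeping both $x_i z_{i,j}$ and $x_j z_{i,j}$ inside $\Sigma^{\leq m}$. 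Define
\[
Y = \{x_i z_{i,j}, \, x_j z_{i,j} : 0 \leq i < j \leq n\},
\]
which has size at most $2\binom{n+1}{2} = n(n+1)$.

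Now suppose for contradiction that $A|_Y$ extends to some $L \in \mc L^{\text{adv}}_k(n,m)$, and let $L'$ be a language on all of $\Sigma^*$ that is recognized by an advice DFA with at most $n$ states and such that $L = L' \cap \Sigma^{\leq m}$ (viewed as a function). By \autoref{thm:advice myhill-nerode with bounds}(i), $\equiv_{L',\ell}$ has at most $n$ classes. Applying pigeonhole to $x_0, \ldots, x_n$ gives indices $i < j$ with $x_i \equiv_{L', \ell} x_j$, so $L'(x_i z_{i,j}) = L'(x_j z_{i,j})$. Since both strings lie in $\Sigma^{\leq m}$, the same equality holds for $L$, and then since $L$ agrees with $A$ on $Y$, we get $A(x_i z_{i,j}) = A(x_j z_{i,j})$, contradicting the choice of $z_{i,j}$.

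The argument is conceptually straightforward once the right set $Y$ is identified, and I do not expect a serious obstacle. The only minor subtlety to handle carefully is the interaction between $L \in \mc L^{\text{adv}}_k(n,m)$ (which is only a function on $\Sigma^{\leq m}$) and the full Myhill-Nerode characterization from \autoref{thm:advice myhill-nerode with bounds}, which is stated for a language on $\Sigma^*$; this is resolved by lifting to the language $L'$ recognized by the underlying advice DFA and then restricting back.
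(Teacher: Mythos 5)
Your proof is correct and follows essentially the same approach as the paper: both arguments pass through $\mc E_k(n,m)$ via the Myhill--Nerode characterization, choose $n+1$ pairwise $\equiv_{L,\ell}$-inequivalent strings with pairwise distinguishing suffixes, and take $Y$ (the paper's $B$) to be the $n(n+1)$ resulting strings. The paper argues directly that any extension of $L|_B$ must have $\geq n+1$ classes and so lies outside $\mc E_k(n,m)$, while you phrase it as a proof by contradiction via pigeonhole; these are the same argument. One small note: the truncation of $z_{i,j}$ is not an optional cleanup step but is forced by the definitions, since $L$ is only defined on $\Sigma^{\leq m}$ and so $\equiv_{L,\ell}$ can only be witnessed by suffixes of length at most $m-\ell$.
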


\begin{proof}
    Let $L : \Sigma^{\leq m} \to \{0,1\}$, and suppose that $L \notin \mc L^{\text{adv}}_k(2n,m)$. 
    Since $\mc E_k(n,m) \subseteq \mc L^{\text{adv}}_k(2n,m)$, we have that $L \notin \mc E_k(n,m)$.
    This means that there are strings $x_0, \ldots, x_n \in \Sigma^\ell$ which are pairwise $\equiv_{L,\ell}$-inequivalent for some $\ell \leq m$. 
    For each $0 \leq i < j \leq n$, let $z_{ij} \in \Sigma^*$ such that $L(x_i z_{ij}) \neq L(x_j z_{ij})$ (i.e., $z_{ij}$ distinguishes $x_i$ and $x_j$ according to $\equiv_{L,\ell}$). 
    Consider the set 
    \[
        B = \{x_k z_{ij} \mid 0 \leq i < j \leq n, k = i,j\},
    \]
    which has size $2 \binom{n+1}{2} = n(n+1)$. 
    Let $L' : \Sigma^{\leq m} \to \{0,1\}$ extend $L|_B$.
    $L'$ must have at least $n+1$ $\equiv_{L',\ell}$-classes, since $x_0, \ldots, x_n$ are forced to be $\equiv_{L',\ell}$-inequivalent. 
    Hence $L' \notin \mc E_k(n,m)$, so in particular $L' \notin \mc L^{\text{adv}}_k(n,m)$.
    Since we found a set of size $n(n+1)$ which witnesses the fact that $L$ is $n$-inconsistent with $\mc L^{\text{adv}}_k(n,m)$, the consistency dimension of $\mc L^{\text{adv}}_k(n,m)$ with respect to $\mc L^{\text{adv}}_k(2n,m)$ is at most $n(n+1)$. 
\end{proof}

\begin{proposition} \label{prop:Ldim of DFAs with advice}
    The Littlestone dimension of $\mc L^{\text{adv}}_k(n,m)$ is $O(nmk \log n)$. 
\end{proposition}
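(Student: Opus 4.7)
The plan is to bound the Littlestone dimension via the finite-class bound from Remark 1, namely $\Ldim(\mathcal{C}) \leq \log |\mathcal{C}|$. So the goal reduces to counting how many distinct languages in $\Sigma^{\leq m}$ can be recognized by an advice DFA on at most $n$ states.

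The key observation is that although the advice string $A \in \Gamma^\omega$ is infinite, only the first $m$ symbols of $A$ affect acceptance on strings of length at most $m$. Moreover, at each step $i \leq m$, the advice symbol $A_i$ together with $\delta$ determines an "effective" transition $\delta_i : Q \times \Sigma \to Q$, and there are at most $n^{nk}$ such functions. So for the purpose of counting languages in $\mathcal{L}^{\text{adv}}_k(n,m)$, we can replace the (infinite) pair $(\delta, A)$ by a sequence $(\delta_1, \ldots, \delta_m)$ of transition functions, giving at most $(n^{nk})^m = n^{nkm}$ choices. Combining with the $n$ choices for the initial state and $2^n$ choices for the accepting set $F$, we obtain
\[
|\mathcal{L}^{\text{adv}}_k(n,m)| \;\leq\; n \cdot 2^n \cdot n^{nkm}.
\]

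Taking logarithms gives $\log |\mathcal{L}^{\text{adv}}_k(n,m)| \leq \log n + n + nkm \log n = O(nmk \log n)$, and applying the finite-class bound yields the result. There is no serious obstacle: the only subtle point is justifying that truncating the advice to its first $m$ symbols does not lose any language in $\mathcal{L}^{\text{adv}}_k(n,m)$ (since those are the only symbols ever read on inputs of length $\leq m$), but this is immediate from the definition of the run on an input. One could alternatively sharpen the count slightly by noting that the state set can be taken as $\{1, \ldots, n\}$ without loss of generality and that isomorphic automata give the same language, but this only affects constants and not the asymptotic bound.
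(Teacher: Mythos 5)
Your proposal is correct and uses essentially the same argument as the paper: bound $|\mathcal{L}^{\text{adv}}_k(n,m)|$ by viewing each advice symbol as selecting one of $n^{nk}$ per-step transition functions, multiply across $m$ steps and the $2^n$ choices of accepting set, and apply the $\Ldim \leq \log|\mathcal{C}|$ bound. The only cosmetic difference is that you allow $n$ choices for the initial state while the paper normalizes $q_0=1$ by permuting states; this does not affect the asymptotics.
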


\begin{proof}
    We can bound the Littlestone dimension of $\mc L^{\text{adv}}_k(n,m)$ by bounding the size of $\mc L^{\text{adv}}_k(n,m)$ and applying \autoref{remark:littlestone dimension of a finite class}.
    As noted at the beginning of the section, we may interpret $\Gamma$ as coding all possible transition functions, so the advice string $A$ simply tells the automaton which transition function to use at each step. 
    
    Consider an advice DFA such that $Q = [n]$ and $q_0 = 1$. 
    Then to fully specify the behavior of the automaton on strings of length $m$, it is enough to choose one transition function for each step, as well as the set of accepting states. 
    There are $(n^{nk})^m = n^{nmk}$ ways to choose the transition functions, and $2^n$ ways to choose the set of accepting states. 
    So there are $n^{nmk} 2^n$ total possible advice DFAs of this form. 
    
    Now notice that every language on strings of length up to $m$ that is accepted by an advice DFA with at most $n$ states is accepted by an advice DFA of this form, simply by appropriately permuting the states and the letters of the advice string. 
    So indeed the size of the class $\mc L^{\text{adv}}_k(n,m)$ is upper bounded by $n^{nmk} 2^n$.
    Therefore
    \begin{align*}
    \Ldim(\mc L^{\text{adv}}_k(n,m)) &\leq \log |\mc L^{\text{adv}}_k(n,m)| \\
    &= \log(n^{nmk} 2^n) \\
    &= nmk \log n + n \\
    &= O(nmk \log n).
    \end{align*}
\end{proof}

Combining Propositions \ref{prop:Cdim of DFAs with advice} and \ref{prop:Ldim of DFAs with advice} with \autoref{thm:query complexity bounds}(1), we obtain:

\adviceLC

\section{Learning Nominal DFAs} \label{section:complexity of nominal DFAs}

\subsection{Overview of nominal DFAs}

In this subsection, we give an introduction to nominal automata.
For a more comprehensive treatment, see \cite{bojanczyk-klin-lasota:automata-theory-in-nominal-sets}.

As mentioned in the introduction, one must leverage some underlying structure to properly generalize automata theory to infinite alphabets.
For example, consider the alphabet $A = \bb N$, and define $L : A^* \to \{0,1\}$ by $L(w) = 1$ if and only if $w = aa$ for some $a \in A$.
An equivalently defined language over a finite alphabet is easily seen to be regular, and so we expect this example to be ``regular'' as well. We can draw an infinite automaton that recognizes $L$ in the following way:

\begin{center}
\begin{tikzpicture}[scale=.8]
  \node[state, initial] (qI) at (0,0) {$q_I$};
  \node[state] (q0) at (2,2) {$q_0$};
  \node[state] (q1) at (2,0) {$q_1$};
  \node (vdots) at (2,-2) {$\vdots$};
  \node[state, accepting] (qA) at (4,0) {$q_A$};
  \node[state] (qR) at (6,0) {$q_R$};

  \draw[->] (qI) edge[above] node{$0$} (q0)
  (qI) edge[below] node{$1$} (q1)
  (qI) edge (vdots)
  (q0) edge[above] node{$0$} (qA)
  (q1) edge[below] node{$1$} (qA)
  (vdots) edge (qA)
  (q0) edge[above, bend left] node{$\neq 0$} (qR)
  (q1) edge[below, bend right] node{$\neq 1$} (qR)
  (vdots) edge[bend right] (qR)
  (qA) edge[above] node{$A$} (qR)
  (qR) edge[loop right] node{$A$} (qR)
  ;
\end{tikzpicture}
\end{center}

Notice that in some cases, we have combined infinitely many transitions into a single arrow---here, we assume that we are able to compare whether or not two data values are equal, and so are leveraging a form of symmetry.
We can further condense this representation into an actually finite diagram:

\begin{center}
\begin{tikzpicture}[scale=.8]
  \node[state, initial] (qI) at (0,0) {$q_I$};
  \node[state] (qx) at (2,0) {$q_x$};
  \node[state, accepting] (qA) at (4,0) {$q_A$};
  \node[state] (qR) at (6,0) {$q_R$};

  \node at (2,.75) {$\forall x \in A$};

  \draw[->] (qI) edge[above] node{$x$} (qx)
  (qx) edge[above] node{$x$} (qA)
  (qx) edge[above, bend left] node{$\neq x$} (qR)
  (qA) edge[above] node{$A$} (qR)
  (qR) edge[loop right] node{$A$} (qR)
  ;
\end{tikzpicture}
\end{center}

We can compress the distinct states for each character from $A$ into a single ``state'' because it is enough to be able to compare whether or not the first and second characters read are equal or not.
In order to formalize this intuition, we use the notion of \emph{nominal sets}.
Note: we do not work in the most general setting of \cite{bojanczyk-klin-lasota:automata-theory-in-nominal-sets}, and instead only focus on what they call the \textit{equality symmetry}.
However, it is reasonable to expect that our results can generalize to other well-behaved symmetries.

Given a set $A$, let $\Sym(A)$ denote the group of permutations on $A$, i.e., the set of bijections from $A$ to $A$ with the operation of function composition.
For the rest of the paper, let $G$ denote $\Sym(\bb N)$
Given a set $X$, a \emph{(left) action} of $G$ on $X$ is an operation $\cdot : G \times X \to X$ such that:
\begin{enumerate}[1)]
    \item for all $x \in X$, $e \cdot x = x$, where $e$ is the identity function on $\bb N$, and 
    \item for all $\pi,\pi' \in G$ and $x \in X$, $(\pi \circ \pi') \cdot x = \pi \cdot (\pi' \cdot x)$.
\end{enumerate}

\begin{example} \label{eg:examples of group actions}
    Let $X = \bb N^2$. 
    Define an action of $G$ on $X$ by $\pi \cdot (n,m) = (\pi(n), \pi(m))$.
    Observe that there is a permutation $\pi$ for which $\pi \cdot (n_1, m_1) = (n_2, m_2)$ if and only if either (1) $n_1 = m_1$ and $n_2 = m_2$, or (2) $n_1 \neq m_1$ and $n_2 \neq m_2$. 
    This example illustrates the action of $\Sym(\bb N)$ can be used to formalize the idea of being able to compare data values for equality.
\end{example}

\begin{definition}[Support; Least Support]
    Fix $X$ and an action of $G$ on $X$.
    Given a subset $D \subseteq \bb N$ and an element $x \in X$, we say that $D$ \emph{supports} $x$ if for every $\pi \in G$ such that $\pi$ fixes every element of $D$, we have that $\pi \cdot x = x$.

    A finite set $D \subseteq \bb N$ is called the \emph{least support of $x$} if 
    \begin{enumerate}[(1)]
        \item $D$ supports $x$, 
        \item no proper subset of $D$ supports $x$, and 
        \item no other finite subset of $\bb N$ has properties (1) and (2).
    \end{enumerate}
    We will use $supp(x)$ to denote the least support of $x$.
\end{definition}

An equivalent characterization of an element having least support is that the intersection of any two finite supports of $x$ also supports $x$.

\begin{definition}[Nominal Set]
    A \emph{nominal set} is a set $X$ along with an action of $G$ on $X$ such that every element of $X$ has a least support.
\end{definition}

\begin{lemma}[{\cite[Lemma 4.9]{bojanczyk-klin-lasota:automata-theory-in-nominal-sets}}] \label{lem:size of support preserved under action}
    Let $X$ be a nominal set and let $x \in X$.
    If $D$ supports $x$, then for any $\pi \in G$, $\pi(D)$ supports $\pi \cdot x$.

    As a corollary, $|supp(x)| = |supp(\pi \cdot x)|$ for any $\pi \in G$.
\end{lemma}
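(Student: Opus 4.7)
The plan is to unpack the definition of support directly and exploit conjugation by $\pi$. To show $\pi(D)$ supports $\pi \cdot x$, take an arbitrary $\sigma \in G$ that fixes every element of $\pi(D)$, and consider the conjugate $\tau = \pi^{-1} \circ \sigma \circ \pi$. For any $d \in D$ we have $\pi(d) \in \pi(D)$, so $\sigma(\pi(d)) = \pi(d)$, and applying $\pi^{-1}$ gives $\tau(d) = d$. Thus $\tau$ fixes $D$ pointwise. Since $D$ supports $x$, this yields $\tau \cdot x = x$. Using the two axioms of a group action we get
\[
x = \tau \cdot x = (\pi^{-1} \circ \sigma \circ \pi) \cdot x = \pi^{-1} \cdot \bigl(\sigma \cdot (\pi \cdot x)\bigr),
\]
and applying $\pi$ to both sides yields $\sigma \cdot (\pi \cdot x) = \pi \cdot x$, as desired.

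For the corollary, the key observation is that $\pi$ is a bijection on $\bb N$, so $|\pi(D)| = |D|$ whenever $D$ is finite. Taking $D = supp(x)$, the main claim shows that $\pi(D)$ is a finite support of $\pi \cdot x$, so $|supp(\pi \cdot x)| \leq |\pi(D)| = |supp(x)|$, using the fact that the least support has the minimum size among finite supports. Applying the same argument symmetrically with $\pi^{-1}$ and $\pi \cdot x$ in place of $\pi$ and $x$ gives $|supp(x)| = |supp(\pi^{-1} \cdot (\pi \cdot x))| \leq |supp(\pi \cdot x)|$, yielding the desired equality.

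The only real content here is the conjugation trick in the first step; once one sees that $\sigma$ fixing $\pi(D)$ corresponds to $\pi^{-1} \sigma \pi$ fixing $D$, the rest is bookkeeping with the group action axioms. I do not expect any genuine obstacle — the argument is short and purely algebraic, with the corollary following formally from the main claim together with the size-preservation of bijections.
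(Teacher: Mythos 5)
Your proof is correct. The paper cites this lemma from Boja\'nczyk--Klin--Lasota without reproving it, and your conjugation argument ($\sigma$ fixes $\pi(D)$ pointwise iff $\pi^{-1}\sigma\pi$ fixes $D$ pointwise) is precisely the standard proof in that reference; the corollary follows exactly as you say, using that the least support is contained in (hence no larger than) every finite support.
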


\begin{definition}[Nominal Dimension]
    Let $X$ be a nominal set. 
    The \emph{nominal dimension} of $X$ is $\displaystyle \sup_{x \in X} |supp(x)|$ (i.e., the largest size of a least support).
\end{definition}

In the literature on nominal sets, this is simply referred to as the dimension of $X$. 
However, in order to prevent confusion with other notions of dimension used in this paper, we will use the term nominal dimension.

\begin{definition}[Orbit; Orbit-finiteness]
    Let $X$ be a nominal set.
    The \emph{orbit} of an element $x \in X$, denoted $G \cdot x$, is the set
    \[
    G \cdot x = \{\pi \cdot x \mid \pi \in G\}.
    \]
    Every nominal set is partitioned into the disjoint union of its orbits.
    We say that $X$ is \emph{orbit-finite} if $X$ is the union of only finitely many orbits. 
\end{definition}

By \autoref{lem:size of support preserved under action}, elements in the same orbit have the same size of least support, so orbit-finite nominal sets have finite nominal dimension.

\begin{example}
    Let $X = \bb N^2$, and define the action of $G$ on $X$ as in \autoref{eg:examples of group actions}.
    The orbits of $X$ are $G \cdot (n,n)$ and $G \cdot (n,m)$, where $n,m \in \bb N$ and $n \neq m$.
    These are the sets of pairs whose coordinate are either equal or unequal, respectively.
    Thus $X$ is orbit-finite with two orbits.
    Additionally, every element $(n,m)$ has least support $\{n,m\}$, and so the nominal dimension of $X$ is 2.
\end{example}

For $i=1, \ldots, n$, let $X_i$ be a set with an action of $G$ on $X_i$. 
The \emph{pointwise action} of $G$ on $\prod_{i=1}^n X_i$ is the action given by $\pi \cdot (x_1, \ldots, x_n) = (\pi \cdot x_1, \ldots, \pi \cdot x_n)$.

\begin{definition}[Equivariance]
    Let $X$ be a nominal set.
    A subset $Y \subseteq X$ is \emph{equivariant} if for any $y \in Y$ and $\pi \in G$, $\pi \cdot y \in Y$. 
    
    A relation $R$ on $\prod_{i=1}^n X_i$, where each $X_i$ is a nominal set, is \emph{equivariant} if it is equivariant when considered as a subset of the product equipped with the pointwise action. 
    
    In particular, a function $f : X \to Y$ is equivariant exactly when
    \[
    f(\pi \cdot x) = \pi \cdot f(x).
    \]
\end{definition}

\begin{remark}
    $Y$ is an equivariant subset of $X$ if and only if $Y$ is a union of orbits.
\end{remark}

The following lemmas demonstrate how equivariant functions behave nicely with orbits and supports.

\begin{lemma} \label{lem:orbit of image under equivariant function}
    Let $f : X \to Y$ be an equivariant function, and let $x \in X$. 
    The image of the orbit of $x$ (in $X$) under $f$ is equal to the orbit of $f(x)$ (in $Y$).

    As a corollary, if $f$ is surjective, then $Y$ has at most as many orbits as $X$.
\end{lemma}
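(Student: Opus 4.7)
The main statement is an essentially direct computation from the definitions, so the plan is short.

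For the main claim, I would unfold the definition of orbit on the left-hand side: by definition,
\[
f(G \cdot x) = \{f(\pi \cdot x) : \pi \in G\}.
\]
Then I would apply the equivariance hypothesis $f(\pi \cdot x) = \pi \cdot f(x)$ termwise to conclude
\[
f(G \cdot x) = \{\pi \cdot f(x) : \pi \in G\} = G \cdot f(x),
\]
which is exactly the orbit of $f(x)$ in $Y$. This is the entirety of the first part; there is no real obstacle because equivariance is tailor-made for this kind of set equality.

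For the corollary, I would use that every nominal set is the disjoint union of its orbits, so writing $X = \bigsqcup_{i \in I} O_i$ where the $O_i$ are the orbits of $X$ and $|I|$ is the number of orbits, surjectivity of $f$ gives
\[
Y = f(X) = \bigcup_{i \in I} f(O_i).
\]
By the first part, each $f(O_i)$ is an orbit of $Y$. Hence every orbit of $Y$ appears among the $f(O_i)$ (any $y \in Y$ lies in some $f(O_i)$, and orbits are equivalence classes, so the orbit of $y$ equals $f(O_i)$). This exhibits the orbits of $Y$ as a surjective image of $I$, so $Y$ has at most $|I|$ orbits.

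The only thing to be a bit careful about is that the map from orbits of $X$ to orbits of $Y$ induced by $f$ need not be injective (two distinct orbits in $X$ may map to the same orbit in $Y$), which is exactly why the conclusion is an inequality rather than an equality; this is worth mentioning but poses no obstacle. I do not expect any genuinely hard step here: everything follows immediately from the definitions of orbit, equivariant function, and surjectivity.
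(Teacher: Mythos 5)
Your proof is correct and is essentially the same computation the paper uses: unfold the definition of an orbit as a set of $\pi$-translates and apply equivariance termwise (the paper just writes the chain of equalities starting from $G\cdot f(x)$ rather than from $f(G\cdot x)$, which is immaterial). Your filled-in argument for the corollary, which the paper leaves implicit, is also correct.
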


\begin{proof}
    The orbit of $f(x)$ is the set $\{\pi \cdot f(x) \mid \pi \in G\}$.
    By equivariance of $f$, this is equal to $\{f(\pi \cdot x) \mid \pi \in G\} = f(\{\pi \cdot x \mid \pi \in G\})$, which is the image under $f$ of the orbit of $x$.
\end{proof}

\begin{lemma}[{\cite[Lemma 4.8]{bojanczyk-klin-lasota:automata-theory-in-nominal-sets}}] \label{lem:supports preserved under equivariant functions}
    Let $f : X \to Y$ be an equivariant function, $x \in X$, and $D \subseteq \bb N$.
    If $D$ supports $x$, then $D$ supports $f(x)$. 

    As a corollary, $supp(f(x)) \subseteq supp(x)$, and hence if $f$ is surjective, the nominal dimension of $Y$ is at most the nominal dimension of $X$.
\end{lemma}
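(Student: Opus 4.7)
The plan is to prove the first statement directly from the definitions of support and equivariance, and then bootstrap from it to get the corollaries on least support and nominal dimension.

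For the first part, I would assume $D$ supports $x$ and show $D$ supports $f(x)$. Take any $\pi \in G$ that fixes every element of $D$. Since $D$ supports $x$, we have $\pi \cdot x = x$. Then by equivariance of $f$,
\[
\pi \cdot f(x) = f(\pi \cdot x) = f(x),
\]
which is exactly the condition that $D$ supports $f(x)$. This is the essential content of the lemma, and it is routine.

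For the corollary $supp(f(x)) \subseteq supp(x)$, I would use the equivalent characterization of having a least support noted just before the definition of nominal set: namely, that the intersection of any two finite supports of an element is again a support. By the first part, $supp(x)$ is a (finite) support of $f(x)$, and of course $supp(f(x))$ is itself a support of $f(x)$. Their intersection $supp(x) \cap supp(f(x))$ is therefore also a support of $f(x)$, and by minimality of $supp(f(x))$ we must have $supp(f(x)) \subseteq supp(x) \cap supp(f(x))$, which gives $supp(f(x)) \subseteq supp(x)$.

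Finally, for the nominal dimension statement, suppose $f$ is surjective. Any $y \in Y$ is of the form $f(x)$ for some $x \in X$, and by the previous step $|supp(y)| = |supp(f(x))| \leq |supp(x)|$. Taking the supremum over $y \in Y$ (and then over the corresponding $x \in X$) gives that the nominal dimension of $Y$ is at most the nominal dimension of $X$. The only subtlety in the whole argument is invoking the intersection-of-supports characterization for the corollary; once that is in hand, everything is a one-line computation, so I do not expect any real obstacle.
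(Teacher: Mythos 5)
Your proof is correct. Note that the paper itself does not prove this lemma---it is cited directly from Boja\'nczyk, Klin, and Lasota, Lemma~4.8---so there is no in-paper argument to compare against; your argument is the standard one, and your use of the intersection-of-supports characterization to deduce $supp(f(x)) \subseteq supp(x)$ from the first part is exactly the right way to handle the corollary.
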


We will work extensively with quotients by equivariant equivalence relations, so we state some important general facts about them.
Recall that if $R$ is an equivalence relation on $X$, $X/R$ denotes the set of equivalence classes of $R$, and is called the \emph{quotient} of $X$ by $R$.

\begin{lemma}[{\cite[Lemma 3.5]{bojanczyk-klin-lasota:automata-theory-in-nominal-sets}}] \label{lem:quotient by equivariant equivalence relation is nominal}
    Let $X$ be a nominal set and $R \subseteq X \times X$ be an equivariant equivalence relation. 
    Then the quotient $X / R$ is a nominal set, under the action $\pi \cdot [x]_R = [\pi \cdot x]_R$, and the quotient map $x \mapsto [x]_R$ is a surjective equivariant function.
\end{lemma}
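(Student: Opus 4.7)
The plan is to verify the statement in four steps: (i) the action $\pi \cdot [x]_R := [\pi \cdot x]_R$ is well-defined on $X/R$; (ii) this satisfies the axioms of a group action; (iii) the quotient map $q : x \mapsto [x]_R$ is surjective and equivariant; and (iv) every class in $X/R$ admits a least support. For (i), well-definedness amounts to showing $x R y \implies (\pi \cdot x) R (\pi \cdot y)$, which is exactly equivariance of $R$ as a subset of $X \times X$ under the pointwise action, and holds by hypothesis. For (ii), the identity axiom $e \cdot [x]_R = [x]_R$ and composition axiom $(\pi_1 \circ \pi_2) \cdot [x]_R = \pi_1 \cdot (\pi_2 \cdot [x]_R)$ transfer directly from the corresponding axioms on the action of $G$ on $X$, by expanding the definition on a representative. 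For (iii), surjectivity of $q$ is immediate, and equivariance is precisely the definition of the action on $X/R$.

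The main content is (iv). I would first establish the easy direction: if $D \subseteq \bb N$ supports $x \in X$, then $D$ supports $[x]_R$, because any $\pi$ fixing $D$ pointwise satisfies $\pi \cdot x = x$, and hence $\pi \cdot [x]_R = [\pi \cdot x]_R = [x]_R$. In particular, $supp(x)$ is a finite support of $[x]_R$, so each equivalence class has at least one finite support. To upgrade this to the existence of a \emph{least} support, I would invoke the intersection characterization mentioned in the excerpt immediately before the definition of nominal set: an element has a least support iff the intersection of any two of its finite supports also supports it. So the remaining task is to show that whenever $D_1, D_2$ are finite supports of $[x]_R$, the intersection $D_1 \cap D_2$ also supports $[x]_R$.

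The main obstacle is this intersection step, since $D_1$ and $D_2$ support $[x]_R$ without necessarily supporting any particular representative. The standard tool is a fresh-name argument: given $\pi \in G$ fixing $D_1 \cap D_2$ pointwise, choose a set $B \subseteq \bb N$ disjoint from $D_1 \cup D_2 \cup \pi(D_1 \cup D_2)$ with $|B| = |D_1 \setminus D_2|$, which is possible because $\bb N$ is infinite. Let $\sigma$ swap $D_1 \setminus D_2$ with $B$ bijectively and fix everything else; then $\sigma$ fixes $D_2$ pointwise, so $\sigma \cdot [x]_R = [x]_R$, and similarly for suitably chosen conjugates that fix $D_1$ pointwise. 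Composing $\pi$ with such permutations reduces the desired equality $\pi \cdot [x]_R = [x]_R$ to iterated application of permutations that fix either $D_1$ or $D_2$ individually. The delicate part is the bookkeeping: choosing fresh names and verifying which support is preserved at each intermediate step. Everything else in the proof is routine or inherited directly from the structure already present on $X$.
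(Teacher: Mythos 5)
The paper does not contain its own proof of this lemma; it is imported verbatim from Boja\'nczyk, Klin, and Lasota (their Lemma 3.5), so there is no paper-side argument to compare against. Your four-step plan is the natural structure for a from-scratch proof, and steps (i)--(iii) are correct and routine: well-definedness of the action is precisely equivariance of $R$, the group-action axioms transfer through representatives, and equivariance of the quotient map is the definition of the induced action.

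The substance is step (iv), and there your sketch leaves a genuine gap. The claim that $\pi \cdot [x]_R = [x]_R$ ``reduces to iterated application of permutations that fix either $D_1$ or $D_2$ individually'' is true but is exactly the thing that needs to be proved, and your text does not exhibit the decomposition. To close it, one must do more than swap $D_1 \setminus D_2$ off to fresh names: after taking $\sigma$ with $\sigma \cdot [x]_R = [x]_R$ (since $\sigma$ fixes $D_2$ pointwise), one needs \autoref{lem:size of support preserved under action} to conclude that $\sigma(D_1) = (D_1 \cap D_2) \cup B$ is a \emph{new} support of $[x]_R$, and then construct a spliced permutation $\pi'$ that agrees with $\pi$ on $D_1$ and is the identity on $B$ (injectivity of the splice is where the requirement $B \cap \pi(D_1 \cup D_2) = \emptyset$ is used). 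Then $\pi'$ fixes the support $\sigma(D_1)$ while $\pi'^{-1}\pi$ fixes $D_1$, giving $\pi \cdot [x]_R = \pi' \cdot [x]_R = [x]_R$; if one insists on your phrasing, the factorization $\pi = \sigma\,(\sigma^{-1}\pi'\sigma)\,\sigma^{-1}\,(\pi'^{-1}\pi)$ makes the ``fixes $D_1$ or $D_2$'' pattern explicit. The sketch omits both the introduction of the intermediate support $\sigma(D_1)$ and the construction of $\pi'$, so as written a reader cannot verify that the reduction goes through. A shorter route is to simply cite the standard fact about the equality symmetry that any element possessing some finite support already has a least support, which together with your (correct) observation that $\mathrm{supp}(x)$ supports $[x]_R$ finishes the argument.
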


\begin{lemma} \label{lem:dimension of quotient at most dimension of original}
    Let $X$ be a nominal set and $R \subseteq X \times X$ be an equivariant equivalence relation.
    Then $supp([x]_R) \subseteq supp(x)$ for every $x \in X$, and the nominal dimension of $X/R$ is at most the nominal dimension of $X$.
\end{lemma}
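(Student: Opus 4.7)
The plan is to observe that this lemma follows almost immediately by chaining together the two preceding lemmas in the text. By \autoref{lem:quotient by equivariant equivalence relation is nominal}, the quotient $X/R$ is itself a nominal set (so ``nominal dimension of $X/R$'' makes sense), and the canonical quotient map $q : X \to X/R$ defined by $q(x) = [x]_R$ is both surjective and equivariant.

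With this in hand, I would apply \autoref{lem:supports preserved under equivariant functions} to the map $q$. That lemma gives $supp(q(x)) \subseteq supp(x)$, which is exactly the first claim $supp([x]_R) \subseteq supp(x)$. The corollary of the same lemma, applied to the surjective equivariant map $q$, yields immediately that the nominal dimension of $X/R$ is at most the nominal dimension of $X$, which is the second claim.

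There is essentially no obstacle here: both facts used are previously stated in the excerpt (\autoref{lem:quotient by equivariant equivalence relation is nominal} supplies the nominal structure on $X/R$ and the equivariance/surjectivity of $q$; \autoref{lem:supports preserved under equivariant functions} does the rest). The only ``care'' needed is to make sure the quotient action is well-defined so that $q$ really is equivariant, but this is already part of \autoref{lem:quotient by equivariant equivalence relation is nominal} and does not require repeating. So the proof should consist of one or two sentences identifying $q$ as the relevant equivariant surjection and invoking the prior lemma.
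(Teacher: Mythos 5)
Your proposal is correct and matches the paper's own argument exactly: the paper's proof is the single sentence that the claim follows from \autoref{lem:supports preserved under equivariant functions} together with the surjectivity and equivariance of the quotient map $x \mapsto [x]_R$, which is precisely the chain you describe.
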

\begin{proof}
    This follows immediately from \autoref{lem:supports preserved under equivariant functions} and the fact that the quotient map $x \mapsto [x]_R$ is surjective and equivariant.
\end{proof}

\begin{lemma} \label{lem:induced equivariant equivalence relations}
    Let $X, Y$ be nominal sets.
    An equivariant function $F : X \to Y$ and equivariant equivalence relation $\equiv_Y$ on $Y$ induce an equivariant equivalence relation $\equiv_X$ on $X$ and an equivariant function $f : X / \equiv_X \to Y / \equiv_Y$.

    Furthermore, $f$ is injective, and if $F$ is surjective, then $f$ is also surjective.
\end{lemma}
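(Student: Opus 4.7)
The plan is to define $\equiv_X$ as the pullback of $\equiv_Y$ along $F$, namely $x_1 \equiv_X x_2 \iff F(x_1) \equiv_Y F(x_2)$, and to define $f$ on equivalence classes by $f([x]_{\equiv_X}) = [F(x)]_{\equiv_Y}$. Essentially all of the required properties then follow directly from standard facts about quotient sets, combined with the equivariance of $F$ and $\equiv_Y$.

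First I would verify that $\equiv_X$ is an equivalence relation, which is immediate since reflexivity, symmetry, and transitivity are inherited from the corresponding properties of $\equiv_Y$. To show $\equiv_X$ is equivariant, suppose $x_1 \equiv_X x_2$, i.e. $F(x_1) \equiv_Y F(x_2)$; then for any $\pi \in G$, equivariance of $\equiv_Y$ gives $\pi \cdot F(x_1) \equiv_Y \pi \cdot F(x_2)$, and equivariance of $F$ rewrites this as $F(\pi \cdot x_1) \equiv_Y F(\pi \cdot x_2)$, which is exactly $\pi \cdot x_1 \equiv_X \pi \cdot x_2$. By \autoref{lem:quotient by equivariant equivalence relation is nominal}, $X/\equiv_X$ is therefore a nominal set.

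Next I would check that $f$ is well-defined: if $x_1 \equiv_X x_2$, then by definition $F(x_1) \equiv_Y F(x_2)$, so $[F(x_1)]_{\equiv_Y} = [F(x_2)]_{\equiv_Y}$. Equivariance of $f$ is a one-line computation using the equivariance of $F$ and the action on quotients given in \autoref{lem:quotient by equivariant equivalence relation is nominal}:
\[
f(\pi \cdot [x]_{\equiv_X}) = f([\pi \cdot x]_{\equiv_X}) = [F(\pi \cdot x)]_{\equiv_Y} = [\pi \cdot F(x)]_{\equiv_Y} = \pi \cdot f([x]_{\equiv_X}).
\]

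Finally, injectivity of $f$ is essentially a tautology from the definition of $\equiv_X$: if $f([x_1]_{\equiv_X}) = f([x_2]_{\equiv_X})$, then $F(x_1) \equiv_Y F(x_2)$, hence $x_1 \equiv_X x_2$. For surjectivity under the assumption that $F$ is surjective, given any class $[y]_{\equiv_Y} \in Y/\equiv_Y$, pick any preimage $x \in F^{-1}(y)$; then $f([x]_{\equiv_X}) = [F(x)]_{\equiv_Y} = [y]_{\equiv_Y}$. Since every step is forced by the definitions and direct from equivariance, I do not anticipate any real obstacle; the only thing to be careful about is keeping the two layers of quotienting straight and making sure the definition of the action on $X/\equiv_X$ is the one provided by \autoref{lem:quotient by equivariant equivalence relation is nominal} so that equivariance of $f$ can be expressed cleanly.
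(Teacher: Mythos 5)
Your proposal matches the paper's proof essentially verbatim: both define $\equiv_X$ as the pullback of $\equiv_Y$ along $F$, define $f$ by $f([x]_{\equiv_X}) = [F(x)]_{\equiv_Y}$, and verify well-definedness, equivariance, injectivity, and surjectivity via the same direct computations. The only minor difference is that you spell out that equivariance of $\equiv_X$ uses equivariance of both $F$ and $\equiv_Y$, a detail the paper elides but which is implicit in its argument.
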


\ifappendix A proof of \autoref{lem:induced equivariant equivalence relations} is given in Appendix \ref{appendix:proof of induced equivariant equivalence relations}. \fi

We are now ready to define nominal DFAs and state the nominal Myhill-Nerode theorem.
Fix an orbit-finite nominal set $A$, which we will call the \textit{$G$-alphabet}. 
The action of $G$ on $A$ naturally extends to $A^*$: given a string $w \in A^*$, $\pi \cdot w$ is the string obtained by letting $\pi$ act on each individual character of $w$.  

\begin{definition}[$G$-language]
    A \emph{$G$-language} is a function $L : A^* \to \{0,1\}$ such that the set $\{x \in A^* \mid L(x) = 1\}$ is an equivariant subset of $A^*$. 
\end{definition}

\begin{definition}[Nominal DFA]
    Let $A$ be an orbit-finite nominal set (the \emph{input alphabet}).
    A \emph{nominal DFA $M$ over $A$} consists of the following data:
    \begin{itemize}
        \item an orbit-finite nominal set $Q$ (the set of \emph{states});
        \item an equivariant function $\delta : Q \times A \to Q$ (the \emph{transition function})
        \item a state $q_0 \in Q$ such that the orbit of $q_0$ is $\{q_0\}$ (the \emph{initial state})
        \item an equivariant subset $F \subseteq Q$ (the set of \emph{accepting states})
    \end{itemize}
\end{definition}

As a shorthand, we say that a nominal DFA $M$ has $n$ orbits or nominal dimension $k$ when the state set has $n$ orbits or nominal dimension $k$.

Given an input string $x = x_1 x_2 \cdots x_n \in A^*$, define the \emph{run} of $M$ on $x$ to be the sequence of states $\alpha_0, \ldots, \alpha_n \in Q$ such that $\alpha_0 = q_0$, and for $1 \leq i \leq n$, we have that $\delta(\alpha_{i-1}, x_i) = \alpha_i$. 
A string $x \in \Sigma^*$ is \emph{accepted} by $M$ if the last state appearing in the run of $M$ on $x$ is in $F$.
We say that a langauge $L$ is \emph{recognized} by a nominal DFA $M$ if $M$ accepts a string $x$ if and only if $x \in L$.
As noted in \cite[Definition 3.1]{bojanczyk-klin-lasota:automata-theory-in-nominal-sets}, the language recognized by a nominal DFA must be a $G$-language.
We say that a $G$-language $L$ is \emph{nominal regular} if it is recognized by some nominal DFA.

\begin{example}
    Let $A = \bb N$, and let $L : A \to \{0,1\}$ be defined by $L(w) = 1$ if and only if $w = aa$ for some $a \in A$ as in the previous example. 
    It is straightfoward to see this language is a $G$-language.
    Additionally, in the automaton that recognizes $L$, we can define an action of $G$ on the set of states by $\sigma \cdot q_i = q_{\sigma(i)}$ for any $\sigma \in G$ and $i \in \bb N$, while $\sigma \cdot q = q$ for all $\sigma \in G$ for the states $q_I, q_A,$ and $q_R$. 
    This automaton is a nominal DFA, and hence $L$ is nominal regular.
\end{example}

Let $L$ be a $G$-language, and define the usual relation $\equiv_L$ on $A^*$ by: $x \equiv_L y$ if and only if for all $z \in A^*$, we have $L(xz) = L(yz)$. 
This relation is equivariant (see \cite[Lemma 3.4]{bojanczyk-klin-lasota:automata-theory-in-nominal-sets}), and hence by \autoref{lem:quotient by equivariant equivalence relation is nominal}, $A^* / \equiv_L$ is a nominal set.
We will write $[x]_L$ to denote the $\equiv_L$-equivalence class of a string $x \in A^*$.

\begin{theorem}[Myhill-Nerode for nominal DFAs {\cite[Theorem 5.2]{bojanczyk-klin-lasota:automata-theory-in-nominal-sets}}] \label{thm:nominal myhill-nerode}
    Let $A$ be an orbit-finite nominal set, and let $L$ be a $G$-language.
    Then the following are equivalent:
    \begin{enumerate}
        \item $A^* / \equiv_L$ has at most $n$ orbits and has nominal dimension at most $k$;
        \item $L$ is nominal regular, and in particular is recognized by a nominal DFA with at most $n$ orbits and nominal dimension at most $k$.
    \end{enumerate}
\end{theorem}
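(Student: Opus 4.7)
My plan is to apply \autoref{thm:query complexity bounds}, which reduces the query-complexity bound to bounding the Littlestone dimension and consistency dimension of $\mc L^{\text{nom}}_A(n,k)$ and then multiplying. Both bounds will rest crucially on the nominal Myhill-Nerode theorem (\autoref{thm:nominal myhill-nerode}), analogously to how the advice-DFA bound went through \autoref{thm:advice myhill-nerode with bounds}.

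For the Littlestone dimension, I will apply \autoref{remark:littlestone dimension of a finite class} and count nominal DFAs up to isomorphism. The state set $Q$, being orbit-finite with at most $n$ orbits and nominal dimension at most $k$, admits a compact description: each orbit is specified by a dimension $d \le k$ together with a subgroup of $\Sym([d])$ (its stabilizer type), giving a finite catalogue of orbit types whose size depends only on $k$. Since $A$ is fixed and $Q \times A$ is orbit-finite, the number of orbits of $Q \times A$ is polynomial in $n$ and $k$, and the equivariant transition function $\delta : Q \times A \to Q$ is determined by its values on orbit representatives, with \autoref{lem:supports preserved under equivariant functions} further restricting which target orbits each value can occupy. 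A similar analysis counts the possible equivariant subsets $F \subseteq Q$. Altogether this will give $|\mc L^{\text{nom}}_A(n,k)| \le n^{O(k)}$, hence $\Ldim(\mc L^{\text{nom}}_A(n,k)) = O(k \log n)$.

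For the consistency dimension, I mirror \autoref{prop:Cdim of DFAs with advice}. Suppose $L \notin \mc L^{\text{nom}}_A(n,k)$. By \autoref{thm:nominal myhill-nerode}, either $A^*/\equiv_L$ has more than $n$ orbits or has nominal dimension exceeding $k$. In the first case, I pick strings $x_0, \ldots, x_n$ whose classes lie in pairwise distinct orbits; for each pair $(i,j)$, being in different orbits means $\pi \cdot x_i \not\equiv_L x_j$ for \emph{every} $\pi \in G$, and for each such $\pi$ I select a distinguishing suffix $z$. By equivariance of $\equiv_L$ I only need $\pi$ up to its action on the finite pool of names occurring in $x_i$ and $x_j$, so finitely many pairs $(\pi,z)$ per $(i,j)$ suffice. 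In the second case some class $[x]_L$ has $|supp([x]_L)|>k$; I witness this by enumerating candidate supports of size at most $k$ and, for each one $D$, exhibiting a permutation $\pi$ fixing $D$ pointwise together with a suffix $z$ such that $L(\pi \cdot xz) \neq L(xz)$. Any extension $L' \in \mc L^{\text{nom}}_A(n,k)$ of $L|_B$ would then force $A^*/\equiv_{L'}$ to violate the nominal Myhill-Nerode bounds. The witness set $B$ will be of size $n^{O(k)}/k^k$, the $1/k^k$ arising from Stirling applied to the orbit-type combinatorics (essentially $\binom{n}{k} \sim n^k/k!$) when one quantifies over the relevant subsets of names.

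The principal obstacle is Case 2 of the consistency bound: translating the semantic failure ``no set of size at most $k$ supports $[x]_L$'' into a finite, size-controlled list of concrete membership constraints on $L$. The definition of least support quantifies universally over the infinite group $G$, but the right handle is that $x$ itself involves only finitely many names, so equivariance of $\equiv_L$ collapses the infinite family of potential witnesses to a combinatorial check over subsets of those names together with a bounded choice of suffixes. Once both dimensions are in place, \autoref{thm:query complexity bounds} gives a product of the form $O(k \log n) \cdot n^{O(k)}/k^k$, and absorbing the $k \log n$ factor into the $n^{O(k)}$ numerator yields the claimed $n^{O(k)}/k^k$.
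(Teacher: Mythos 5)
Your proposal does not prove the statement in question. The statement to be proved is the nominal Myhill--Nerode theorem itself: the equivalence between (1) $A^*/\equiv_L$ having at most $n$ orbits and nominal dimension at most $k$, and (2) $L$ being recognized by a nominal DFA with at most $n$ orbits and nominal dimension at most $k$. What you have written is instead a plan for the downstream query-complexity bound (the $n^{O(k)}/k^k$ theorem), and it explicitly \emph{invokes} the nominal Myhill--Nerode theorem as a tool in both the Littlestone-dimension and consistency-dimension steps. With respect to the assigned statement this is circular: you assume the very equivalence you were asked to establish, and nothing in your outline supplies either implication.

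For comparison, the paper's argument is structural and short. For $(1)\Rightarrow(2)$ it constructs the syntactic automaton $M_L$ whose state set is literally $A^*/\equiv_L$ with transition $\delta_L([x]_L,a)=[xa]_L$; since this automaton recognizes $L$, the hypotheses on $A^*/\equiv_L$ transfer directly to the recognizing DFA. For $(2)\Rightarrow(1)$ it takes a reachable recognizing DFA $M$ and uses the fact that there is a surjective automaton homomorphism (an equivariant surjection on states) from $M$ onto $M_L$; then \autoref{lem:orbit of image under equivariant function} shows the number of orbits cannot increase under an equivariant surjection, and \autoref{lem:supports preserved under equivariant functions} shows the nominal dimension cannot increase either, so $A^*/\equiv_L$ inherits the bounds from $M$. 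If you want to prove the statement, you need some version of this two-directional argument (or an equivalent quotient construction); the combinatorial counting and witness-set machinery in your proposal belongs to a different theorem.
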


Note that this statement is more precise than the original version in \cite{bojanczyk-klin-lasota:automata-theory-in-nominal-sets}; in particular, the conditions on the number of orbits and the nominal dimension do not appear in the original theorem.
However, these conditions are easily derived from its proof. 
\ifappendix For completeness, a proof of the correspondence between the number of orbits and the nominal dimension is given in Appendix \ref{appendix:proof of nominal myhill-nerode}. \fi

\subsection{Littlestone dimension of nominal DFAs}

To prove bounds on the Littlestone dimension of nominal automata, we wish to bound the number of possible nominal automata. Since automata involve transition functions, we will need to understand products of nominal sets.
Nominality is easily seen to be preserved under Cartesian products:

\begin{proposition}\label{prop:dimension of product}
    The product of two nominal sets is nominal. 
    In particular, the nominal dimension of the product of two nominal sets $X \times Y$ is at most the sum of the nominal dimensions of $X$ and $Y$.
\end{proposition}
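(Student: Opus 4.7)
The plan is to equip $X \times Y$ with the pointwise action $\pi \cdot (x,y) = (\pi \cdot x, \pi \cdot y)$ (which is easily checked to be a group action) and then show that every element has a least support whose size is at most $|supp(x)| + |supp(y)|$, from which both assertions of the proposition follow.

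The argument rests on two complementary observations about supports in the product. First, if $D_x$ supports $x \in X$ and $D_y$ supports $y \in Y$, then $D_x \cup D_y$ supports $(x,y)$: any $\pi \in G$ fixing $D_x \cup D_y$ pointwise also fixes each of $D_x$ and $D_y$ pointwise, so $\pi \cdot x = x$ and $\pi \cdot y = y$, hence $\pi \cdot (x,y) = (x,y)$. Conversely, if $D$ supports $(x,y)$, then $D$ supports each of $x$ and $y$ individually, since $\pi \cdot (x,y) = (x,y)$ forces both coordinates to be fixed.

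To prove that every $(x,y)$ has a least support, I would invoke the intersection characterization noted in the excerpt: an element has a least support iff the intersection of any two of its finite supports is again a support. Given finite supports $D_1, D_2$ of $(x,y)$, the converse observation above shows that each $D_i$ supports $x$ and $y$ individually. Applying the intersection characterization in $X$ and in $Y$ (which holds since $X,Y$ are nominal), $D_1 \cap D_2$ supports each of $x$ and $y$; the first observation then gives that $D_1 \cap D_2$ supports $(x,y)$, verifying the intersection characterization in the product.

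For the dimension bound, the first observation yields that $supp(x) \cup supp(y)$ is a support of $(x,y)$, so $|supp((x,y))| \leq |supp(x)| + |supp(y)|$, and taking the supremum over all $(x,y) \in X \times Y$ gives the claimed bound on nominal dimension. I expect the only nontrivial step to be the passage from mere existence of some finite support (immediate via unions) to existence of a \emph{least} support, which is precisely where nominality of $X$ and $Y$ is genuinely used, through the intersection characterization.
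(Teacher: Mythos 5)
Your proof is correct, and it follows the same core idea as the paper's (the union $supp(x) \cup supp(y)$ is a finite support of $(x,y)$ of size at most $k+\ell$). However, your proposal is more careful than the paper's very terse argument: the paper only exhibits $supp(x) \cup supp(y)$ as a finite support of $(x,y)$ and does not explicitly argue that $(x,y)$ therefore has a \emph{least} support, which is what the definition of ``nominal'' requires. You fill this gap by invoking the intersection characterization in $X$ and $Y$, combined with your (correct) observation that a support of $(x,y)$ must support both coordinates separately, to show that the intersection of any two finite supports of $(x,y)$ is again a support. That is precisely what is needed to conclude $(x,y)$ has a least support, and then the containment of the least support in $supp(x)\cup supp(y)$ gives the dimension bound. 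So your route is not different in spirit, but it makes a subtle step explicit that the paper's one-liner leaves to the reader; nothing in your argument is superfluous, and it is a good instinct to worry about exactly the point you flagged.
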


\begin{proof}
    Suppose $X$ and $Y$ are nominal with nominal dimensions $k$ and $\ell$, respectively. 
    Let $(x,y) \in X \times Y$.
    Notice that $supp(x) \cup supp(y)$, which has size at most $k + \ell$, supports $(x,y) \in X \times Y$.
\end{proof}

On the other hand, in the most general setting, products of orbit-finite nominal sets need not be orbit-finite \cite[Example 2.5]{bojanczyk-klin-lasota:automata-theory-in-nominal-sets}.
However, in our context of $\Sym(\bb N)$, products of orbit-finite sets are known to be orbit-finite \cite[Section 10]{bojanczyk-klin-lasota:automata-theory-in-nominal-sets}. 
We will need an explicit bound on the number of orbits of products of orbit-finite sets, which is the purpose of the following discussion.

For a natural number $k$, define $\bb N^{(k)} := \{(a_1, \ldots, a_k) \mid a_i \neq a_j \text{ for } i \neq j\}$.
$\bb N^{(k)}$ is a single-orbit nominal set when equipped with the pointwise action of $G$.

\begin{lemma}[{\cite[Lemma 4.13]{bojanczyk-klin-lasota:automata-theory-in-nominal-sets}}] \label{lem:single orbit set is surjective image of N^(k)}
    Given a nominal set $X$ of nominal dimension $k$ that has exactly one orbit, there is an equivariant surjection $f_X : \bb N^{(k)} \to X$.
\end{lemma}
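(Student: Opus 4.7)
The plan is to pick any base point $x_0 \in X$, observe that its least support has size exactly $k$, and then define $f_X$ to send a tuple listing this support to $x_0$, and extend to all of $\bb N^{(k)}$ by equivariance.

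First I would use \autoref{lem:size of support preserved under action}: since $X$ consists of a single orbit, all elements of $X$ have least supports of the same size. Because the nominal dimension of $X$ is $k$, this common size must be $k$. So, fixing an arbitrary $x_0 \in X$, we can enumerate $supp(x_0) = \{a_1, \ldots, a_k\}$ as a tuple $\bar a = (a_1, \ldots, a_k) \in \bb N^{(k)}$. Since $\bb N^{(k)}$ is a single orbit of $G$, for every $\bar b = (b_1, \ldots, b_k) \in \bb N^{(k)}$ there exists $\pi \in G$ with $\pi \cdot \bar a = \bar b$. I would then define
\[
f_X(\bar b) := \pi \cdot x_0,
\]
for any such $\pi$.

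The main obstacle is well-definedness, since many $\pi \in G$ can send $\bar a$ to $\bar b$. Suppose $\pi_1 \cdot \bar a = \pi_2 \cdot \bar a = \bar b$. Then $\pi_2^{-1} \pi_1$ fixes each $a_i$ pointwise, i.e.\ it fixes $supp(x_0)$ pointwise. By definition of support, $(\pi_2^{-1}\pi_1) \cdot x_0 = x_0$, whence $\pi_1 \cdot x_0 = \pi_2 \cdot x_0$. This confirms that $f_X(\bar b)$ is independent of the choice of $\pi$.

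Finally I would verify surjectivity and equivariance, both of which are routine. For surjectivity: since $X$ is one orbit, every element has the form $\pi \cdot x_0$ for some $\pi \in G$, and then $\pi \cdot x_0 = f_X(\pi \cdot \bar a)$. For equivariance: given $\sigma \in G$ and $\bar b = \pi \cdot \bar a$, we have $\sigma \cdot \bar b = (\sigma \pi) \cdot \bar a$, so by the definition of $f_X$,
\[
f_X(\sigma \cdot \bar b) = (\sigma \pi) \cdot x_0 = \sigma \cdot (\pi \cdot x_0) = \sigma \cdot f_X(\bar b).
\]
This completes the construction.
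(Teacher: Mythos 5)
Your proof is correct. The paper cites this result from Boja\'nczyk--Klin--Lasota without giving its own proof, so there is no in-paper argument to compare against; but the construction you give---fixing a base point $x_0$, enumerating its least support of size exactly $k$ as a tuple $\bar a$, and sending $\pi \cdot \bar a \mapsto \pi \cdot x_0$---is the standard orbit-map argument and is exactly the one used in the cited source. Well-definedness via the fact that any two choices of $\pi$ differ by a permutation fixing $supp(x_0)$ pointwise, together with the transitivity of $\Sym(\bb N)$ on $\bb N^{(k)}$ and on the single orbit $X$, cleanly handles all the required checks.
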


\begin{definition}[{cf. \cite[Section 3]{moerman-etal:learning-nominal-automata}}]
    Given $k_1, \ldots, k_n \in \bb N$, let $f_{\bb N}(k_1, \ldots, k_n)$ denote the number of orbits of $\bb N^{(k_1)} \times \cdots \times \bb N^{(k_n)}$. 
\end{definition}

\begin{proposition}[{cf. \cite[Section 3]{moerman-etal:learning-nominal-automata}}]\label{prop:number of orbits of product}
    Let $X_i$ be a nominal set with $\ell_i$ orbits and nominal dimension $k_i$ for $i=1, \ldots, n$.
    Let $X := X_1 \times \cdots \times X_n$ and $\ell = \ell_1 \cdots \ell_n$. Then $X$ has at most $\ell f_{\bb N}(k_1, \ldots, k_n)$ many orbits.
\end{proposition}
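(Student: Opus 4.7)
The plan is to reduce the statement to counting orbits in products of the reference sets $\bb N^{(k)}$, by decomposing each $X_i$ into its orbits and applying the lemmas about surjective equivariant functions.

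First, I would write $X_i = O_{i,1} \sqcup \cdots \sqcup O_{i,\ell_i}$ as a disjoint union of orbits. Each $O_{i,j}$ is equivariant (being a union of orbits, here just one). Since Cartesian products distribute over disjoint unions,
\[
X \;=\; \bigsqcup_{(j_1,\ldots,j_n)} O_{1,j_1} \times \cdots \times O_{n,j_n},
\]
where the index runs over $\prod_i [\ell_i]$, giving exactly $\ell = \ell_1 \cdots \ell_n$ summands. Each summand is an equivariant subset of $X$ (under the pointwise action), and therefore the number of orbits of $X$ is the sum of the numbers of orbits of these summands. It therefore suffices to show that each single summand $O_{1,j_1} \times \cdots \times O_{n,j_n}$ has at most $f_{\bb N}(k_1,\ldots,k_n)$ orbits.

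Next, I would upgrade each single-orbit factor to a quotient of $\bb N^{(k_i)}$. Each $O_{i,j_i}$ has exactly one orbit and nominal dimension $k_i' \leq k_i$, so \autoref{lem:single orbit set is surjective image of N^(k)} yields an equivariant surjection $g_i : \bb N^{(k_i')} \to O_{i,j_i}$. The projection onto the first $k_i'$ coordinates is an equivariant surjection $\bb N^{(k_i)} \to \bb N^{(k_i')}$, and composing gives an equivariant surjection $h_i : \bb N^{(k_i)} \to O_{i,j_i}$. Taking the product, $h_1 \times \cdots \times h_n$ is an equivariant surjection
\[
\bb N^{(k_1)} \times \cdots \times \bb N^{(k_n)} \;\twoheadrightarrow\; O_{1,j_1} \times \cdots \times O_{n,j_n}.
\]

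Finally, I would invoke the corollary of \autoref{lem:orbit of image under equivariant function}: a surjective equivariant function cannot increase the number of orbits. Thus each summand has at most $f_{\bb N}(k_1,\ldots,k_n)$ orbits, and summing over the $\ell$ summands gives the claimed bound of $\ell f_{\bb N}(k_1,\ldots,k_n)$. I do not expect any real obstacle here: the only subtle point is handling orbits whose nominal dimension is strictly less than $k_i$, which is resolved by inserting the coordinate-projection $\bb N^{(k_i)} \twoheadrightarrow \bb N^{(k_i')}$ before applying \autoref{lem:single orbit set is surjective image of N^(k)}.
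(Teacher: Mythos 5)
Your proof is correct and follows essentially the same route as the paper's: decompose into products of single orbits, produce an equivariant surjection from $\bb N^{(k_1)} \times \cdots \times \bb N^{(k_n)}$ onto each product of orbits via \autoref{lem:single orbit set is surjective image of N^(k)}, and conclude with the corollary of \autoref{lem:orbit of image under equivariant function}. Your only addition --- inserting the coordinate projection $\bb N^{(k_i)} \twoheadrightarrow \bb N^{(k_i')}$ to handle orbits of dimension strictly less than $k_i$ --- is a detail the paper's proof glosses over, and it is handled correctly.
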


\ifappendix \autoref{prop:number of orbits of product} is stated but not proven in \cite{moerman-etal:learning-nominal-automata}. For completeness, we include a proof in Appendix \ref{appendix:proof of number of orbits of product}. \fi
In light of \autoref{prop:number of orbits of product}, we can find bounds on the number of orbits of a product of nominal sets by bounding the value of $f_{\bb N}$.
We do so for the case $n=2$:

\begin{proposition} \label{prop:bounds on fN(k1,k2)}
    Suppose (without loss of generality) that $k_1 \geq k_2$. 
    Then 
    \[
        \left(\frac{k_1}{e}\right)^{k_2} \leq \binom{k_1}{k_2} k_2! \leq f_{\bb N}(k_1, k_2) \leq (2k_1)^{k_2}.
    \]
    In particular, if $k_2$ is a constant, then $f_{\bb N}(k_1, k_2) = \Theta(k_1^{k_2})$.
\end{proposition}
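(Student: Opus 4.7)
The plan is to first establish an exact formula for $f_{\bb N}(k_1,k_2)$ by characterizing orbits via their equality patterns, and then estimate this formula from above and below to obtain the desired bounds.

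For the orbit characterization, I will argue that two pairs $((a_1,\ldots,a_{k_1}),(b_1,\ldots,b_{k_2}))$ and $((a_1',\ldots,a_{k_1}'),(b_1',\ldots,b_{k_2}'))$ lie in the same orbit under the pointwise action of $G$ if and only if they have the same ``coincidence pattern'': the set $\{(i,j) : a_i = b_j\}$ agrees. Indeed, if the patterns agree, one can build an explicit permutation of $\bb N$ matching one to the other, and conversely any $\pi \in G$ respects equalities. An orbit is therefore specified by a partial injection from $[k_2]$ to $[k_1]$ (recording which $b_j$'s coincide with some $a_i$, and with which one), since the $b_j$'s are pairwise distinct; values of the $b_j$'s that do not coincide with any $a_i$ are forced to be fresh distinct elements, so they contribute nothing extra to the orbit data. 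Counting partial injections by size $t = |\dom|$ gives
\[
f_{\bb N}(k_1,k_2) = \sum_{t=0}^{k_2} \binom{k_2}{t}\binom{k_1}{t} t!,
\]
assuming $k_1 \geq k_2$.

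Given this formula, the lower bound $\binom{k_1}{k_2}k_2! \leq f_{\bb N}(k_1,k_2)$ is immediate by dropping all terms except $t=k_2$. To extract the cleaner estimate $(k_1/e)^{k_2} \leq \binom{k_1}{k_2}k_2!$, I would combine the standard binomial bound $\binom{k_1}{k_2} \geq (k_1/k_2)^{k_2}$ with Stirling's estimate $k_2! \geq (k_2/e)^{k_2}$; multiplying these, the factors of $k_2$ cancel to leave $(k_1/e)^{k_2}$. For the upper bound, I would use $\binom{k_1}{t} t! \leq k_1^t$ termwise, giving
\[
f_{\bb N}(k_1,k_2) \leq \sum_{t=0}^{k_2}\binom{k_2}{t} k_1^t = (1+k_1)^{k_2} \leq (2k_1)^{k_2},
\]
where the last inequality uses $k_1 \geq 1$ (the trivial case $k_1 = 0$ forces $k_2 = 0$ and everything collapses). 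The asymptotic $f_{\bb N}(k_1,k_2) = \Theta(k_1^{k_2})$ for fixed $k_2$ then follows by comparing the two bounds.

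The main obstacle is really the orbit-counting step: one must verify carefully that the coincidence pattern (a partial injection $[k_2] \to [k_1]$) is a complete invariant, which requires explicitly producing a permutation of $\bb N$ that realizes any isomorphism of patterns. Once this combinatorial identification is in hand, the numerical bounds are routine applications of Stirling and the binomial theorem.
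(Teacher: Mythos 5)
Your proposal is correct and follows essentially the same route as the paper: both characterize orbits of $\bb N^{(k_1)} \times \bb N^{(k_2)}$ by the pattern of coincidences between coordinates (a partial injection), derive the same formula $f_{\bb N}(k_1,k_2) = \sum_{r=0}^{k_2}\binom{k_1}{r}\binom{k_2}{r}r!$, take the $r=k_2$ term for the lower bound, and bound $\binom{k_1}{r}r! \le k_1^r$ for the upper bound. The only cosmetic difference is that you close the upper bound via the binomial theorem $(1+k_1)^{k_2}$ while the paper bounds $k_1^r \le k_1^{k_2}$ and sums $\binom{k_2}{r}$ to get $k_1^{k_2}2^{k_2}$; both yield $(2k_1)^{k_2}$.
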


\ifappendix The proof of \autoref{prop:bounds on fN(k1,k2)} is given in Appendix \ref{appendix:proof of bounds on fN(k1,k2)}. \fi
\autoref{prop:bounds on fN(k1,k2)} will help us to calculate an explicit upper bound on the Littlestone dimension of nominal automata later, but we can first use it to substantiate a comment from the introduction about previously known query bounds.

The bound in \cite[Corollary 1]{moerman-etal:learning-nominal-automata} involves a $f_{\bb N}(p(n+m), pn(k+k\log k+1))$ factor, where $n$ is the number of orbits of the state set of the target automaton, $k$ is the nominal dimension of the target automaton, $p$ is the nominal dimension of the alphabet, and $m$ is the length of the longest counterexample.
Notice that $f_{\bb N}$ is non-decreasing in all coordinates. 
Therefore, $f_{\bb N}(p(n+m), pn(k+k\log k+1))$ is lower bounded by both $f_{\bb N}(pn, pnk)$ and $f_{\bb N}(m, nk)$.
Applying the lower bounds from \autoref{prop:bounds on fN(k1,k2)} yields the following:
\begin{remark} \label{remark:factorial bounds in moerman etal}
    The bound on the (EQ+MQ)-query complexity of nominal automata given in \cite[Corollary 1]{moerman-etal:learning-nominal-automata} is at least $(k^n n!)^p$, and at least $\min\left(\left(\frac{nk}{e}\right)^m, \left(\frac{m}{e}\right)^{nk}\right)$.
\end{remark}

We are now at a point where we can calculate a bound on the number of possible nominal DFAs and hence bound the Littlestone dimension. 
First, we state a general result on the number of single-orbit nominal sets:

\begin{proposition} \label{prop:number of single orbit nominal sets}
    The number of distinct (up to isomorphism) single-orbit nominal sets of nominal dimension at most $k$ is at most $2^{O(k^2)}$.
\end{proposition}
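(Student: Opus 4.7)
The plan is to classify single-orbit nominal sets of nominal dimension $k'$ (up to isomorphism) by conjugacy classes of subgroups of $S_{k'}$, then invoke a known bound on the number of such subgroups. Any single-orbit nominal set $X$ of nominal dimension exactly $k'$ is isomorphic, as a $G$-set, to a coset space $G/H$, where $H$ is the stabilizer of some $x \in X$. By replacing $x$ with $\pi \cdot x$ for a suitable $\pi \in G$ (which conjugates $H$ and hence preserves the isomorphism class of $G/H$), I may assume $supp(x) = \{1, \ldots, k'\}$. The condition that $\{1, \ldots, k'\}$ supports $x$ then translates to $\mathrm{Fix}(\{1, \ldots, k'\}) \subseteq H$, where $\mathrm{Fix}(D) := \{\pi \in G : \pi|_D = \mathrm{id}\}$.

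Next, I claim $H$ lies in the setwise stabilizer $\mathrm{Stab}(\{1, \ldots, k'\})$. For $\pi \in H$, \autoref{lem:size of support preserved under action} gives that $\pi(\{1, \ldots, k'\})$ supports $\pi \cdot x = x$; by the intersection characterization of least support stated after the definition, $\pi(\{1, \ldots, k'\}) \cap \{1, \ldots, k'\}$ also supports $x$, so minimality forces $\pi(\{1, \ldots, k'\}) = \{1, \ldots, k'\}$. Quotienting yields an injection from isomorphism classes of single-orbit nominal sets of dimension $k'$ into conjugacy classes of subgroups of $S_{k'}$, via $X \mapsto \bar H := H / \mathrm{Fix}(\{1, \ldots, k'\}) \leq \mathrm{Stab}(\{1, \ldots, k'\})/\mathrm{Fix}(\{1, \ldots, k'\}) \cong S_{k'}$. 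The same intersection argument applied to $\pi \mathrm{Fix}(\{1, \ldots, k'\}) \pi^{-1} = \mathrm{Fix}(\pi(\{1, \ldots, k'\}))$ shows that any $\pi \in G$ conjugating $H_1$ to $H_2$ must itself lie in $\mathrm{Stab}(\{1, \ldots, k'\})$, so that $G$-conjugacy between the $H_i$ reduces to $S_{k'}$-conjugacy between the $\bar H_i$.

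Finally, I would appeal to Pyber's theorem that $S_k$ has at most $2^{O(k^2)}$ subgroups; summing over $k' \leq k$ contributes only a polynomial factor and yields the claimed bound. The main obstacle is this last counting step: the $2^{O(k^2)}$ bound is a nontrivial structural result in finite group theory, and elementary Schreier--Sims / strong generating set arguments yield only $2^{O(k^2 \log k)}$, so obtaining the exponent stated in the proposition requires invoking the sharper result.
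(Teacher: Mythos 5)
Your proof is correct and reaches the bound by essentially the same route as the paper: reduce to conjugacy classes of subgroups of $S_{k'}$ for $k' \le k$, then invoke Pyber's theorem that $S_k$ has at most $2^{O(k^2)}$ subgroups (which is exactly what the paper cites, and which, as you correctly note, is the non-elementary step that cannot be replaced by a naive generating-set count). The only real difference is presentational: you derive the correspondence by hand via the orbit--stabilizer theorem, pinning $H = \mathrm{Stab}_G(x)$ between $\mathrm{Fix}(\{1,\ldots,k'\})$ and the setwise stabilizer $\mathrm{Stab}(\{1,\ldots,k'\})$ using \autoref{lem:size of support preserved under action} and the intersection characterization of least support, whereas the paper packages the same content by citing the $[k,S]^{ec}$ support-representation machinery of Boja\'nczyk--Klin--Lasota (their Propositions 9.15 and 9.16), and your isomorphism-invariance step (showing any conjugator must lie in $\mathrm{Stab}(\{1,\ldots,k'\})$) plays the role of the paper's \autoref{lem:characterization of G-set semantics}. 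Your version is somewhat more self-contained; the paper's is shorter because it leans on the imported framework.
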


\ifappendix The proof of \autoref{prop:number of single orbit nominal sets} is given in Appendix \ref{appendix:proof of number of single orbit nominal sets}. \fi
We note that it is a completely general result about nominal sets and may be of independent interest.
For us, it will allow us to bound the number of possible state sets.

\begin{lemma} \label{lem:number of nominal state sets}
    The number of possible state sets for a nominal DFA with $n$ orbits and nominal dimension $k$ is $2^{O(nk^2)}$.
\end{lemma}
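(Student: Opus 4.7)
The plan is to decompose an arbitrary state set into its constituent orbits and apply the previous counting result (\autoref{prop:number of single orbit nominal sets}) orbit-by-orbit. Since nominal DFA states are counted up to isomorphism (isomorphic state sets yield the same recognized languages, which is what matters for our eventual Littlestone bound), the goal is to count orbit-finite nominal sets with at most $n$ orbits and nominal dimension at most $k$, up to isomorphism.

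First, I would observe that any nominal set $Q$ partitions into the disjoint union of its orbits (by the definition of orbit partition), so a $Q$ with at most $n$ orbits is (isomorphic to) a disjoint union of at most $n$ single-orbit nominal sets $O_1, \ldots, O_r$ with $r \leq n$. Next, because the nominal dimension of $Q$ is the supremum of $|supp(q)|$ over $q \in Q$, each individual orbit $O_i$ has nominal dimension at most $k$, since its elements are elements of $Q$. Applying \autoref{prop:number of single orbit nominal sets}, there are at most $N := 2^{O(k^2)}$ choices (up to isomorphism) for each $O_i$.

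To assemble the state set, I choose a multiset of size at most $n$ from these $N$ possibilities. A crude but sufficient upper bound on the number of such multisets is $(N+1)^n$: one may equivalently pick an ordered $n$-tuple from $N$ single-orbit sets together with a ``null" symbol (used to indicate fewer than $n$ orbits). This gives
\[
    (N+1)^n \;=\; \bigl(2^{O(k^2)} + 1\bigr)^n \;=\; 2^{O(nk^2)},
\]
as required.

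There is no real obstacle here: the argument is a direct application of \autoref{prop:number of single orbit nominal sets} combined with the orbit decomposition. The only thing to verify carefully is that we are counting up to isomorphism in both places (so that the bound from \autoref{prop:number of single orbit nominal sets} applies cleanly to each orbit factor) and that we correctly handle the ``at most $n$" orbits (rather than exactly $n$), which the $+1$ in the multiset bound takes care of.
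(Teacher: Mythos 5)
Your proposal is correct and follows essentially the same approach as the paper: decompose the state set into its orbits, apply \autoref{prop:number of single orbit nominal sets} to bound each single-orbit factor by $2^{O(k^2)}$, and raise to the power $n$. The extra care you take with multisets and the ``at most $n$'' technicality is harmless but unnecessary for an upper bound, and the paper simply uses $\left(2^{O(k^2)}\right)^n$ directly.
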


\begin{proof}
    Since the state set $Q$ is an orbit-finite nominal set with $n$ orbits and nominal dimension $k$, we count the number of such sets. 
    We can view $Q$ as the disjoint union of $n$ single-orbit nominal sets, each with nominal dimension at most $k$, just by considering each orbit independently.
    By \autoref{prop:number of single orbit nominal sets}, the number of single-orbit nominal sets of nominal dimension $\leq k$ is at most $2^{O(k^2)}$.
    An upper bound on the number of nominal sets with $n$ orbits and nominal dimension $k$ can be found by raising this number to $n$, i.e., $\left(2^{O(k^2)}\right)^n = 2^{O(nk^2)}$.
\end{proof}

Now, we count the number of possible transition behaviors.
Fix an input alphabet $A$, where $A$ has $\ell$ orbits and nominal dimension $p$.

\begin{lemma} \label{lem:number of nominal transition functions}
    The number of possible transition behaviors for a nominal DFA with $n$ orbits and nominal dimension $k$ is at most 
    \[
    \left(n (k+p)!\right)^{O\left(nk^{p}\right)}.
    \]
\end{lemma}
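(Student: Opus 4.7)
The strategy is to count equivariant transition functions $\delta \colon Q \times A \to Q$ by choosing, on each orbit of $Q \times A$, the value of $\delta$ at an orbit representative, and controlling the number of valid choices via support bounds. First I would analyze the domain $Q \times A$: by \autoref{prop:dimension of product} it has nominal dimension at most $k+p$, and by \autoref{prop:number of orbits of product} it has at most $n \ell \cdot f_{\bb N}(k,p)$ orbits, which is $O(nk^p)$ once we invoke the bound $f_{\bb N}(k,p) = O(k^p)$ from \autoref{prop:bounds on fN(k1,k2)} (using that $\ell$ and $p$ are constants of the fixed alphabet $A$).

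The next step is to reduce counting equivariant functions to counting per orbit. Fix an orbit $\mathcal{O} = G \cdot x$ of $Q \times A$; any equivariant $\delta|_{\mathcal{O}} \colon \mathcal{O} \to Q$ is determined by the single value $\delta(x)$, and for this value to produce a well-defined equivariant extension, every $\pi \in G$ fixing $x$ must also fix $\delta(x)$. In particular this forces $supp(\delta(x)) \subseteq supp(x)$, and $|supp(x)| \leq k+p$. So an upper bound on the number of valid extensions of $\delta$ to $\mathcal{O}$ is the number of $q \in Q$ with $supp(q)$ contained in a fixed set $D$ with $|D| \leq k+p$. To count these, I partition $Q$ into its $n$ orbits and examine one orbit of nominal dimension $d \leq k$ with representative $q_0$. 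By \autoref{lem:size of support preserved under action}, an element $\pi \cdot q_0$ has support $\pi(supp(q_0))$, so it has support in $D$ exactly when $\pi$ sends the $d$-element set $supp(q_0)$ into $D$. Choosing the image set among $d$-subsets of $D$ and accounting for orderings yields at most
\[
\binom{|D|}{d} d! = \frac{|D|!}{(|D|-d)!} \leq (k+p)!
\]
elements per orbit, and summing across the $n$ orbits of $Q$ gives at most $n(k+p)!$ candidates. Multiplying this per-orbit bound over the $O(nk^p)$ orbits of $Q \times A$ yields the claimed total of $(n(k+p)!)^{O(nk^p)}$.

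The main obstacle is the justification in the middle step: showing that the correct object to count is elements of $Q$ whose stabilizer contains the stabilizer of $x$, and then arguing that bounding this by the (weaker) support-containment condition $supp(\delta(x)) \subseteq supp(x)$ still suffices. This relies on the standard fact that in a nominal set the stabilizer of $x$ always contains the pointwise stabilizer of $supp(x)$, so containment of stabilizers implies containment of least supports. Once this conceptual point is in hand, the rest of the argument is a mechanical assembly of the product-orbit bound from \autoref{prop:number of orbits of product} with the subset counting from Step 3; all the genuinely hard counting work is encapsulated in the prior propositions.
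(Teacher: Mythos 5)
Your proof is correct, and it arrives at the same per-orbit bound $n(k+p)!$ and the same final answer as the paper, via the same high-level decomposition: bound the number of orbits of $Q \times A$ by $O(nk^p)$, then control the choices per orbit, then take the product. The genuine difference is how the per-orbit count $n(k+p)!$ is obtained. The paper first chooses a target orbit $O_2 \subseteq Q$ (contributing the factor $n$), and then bounds the number of equivariant maps $O_1 \to O_2$ by $(k+p)!$ using the algebraic characterization of single-orbit nominal sets as $[k,S]^{ec}$ and the classification of equivariant functions between them in terms of injections (\autoref{prop:characterization of equivariant functions}). You instead count, for a representative $x$ of the domain orbit, the number of admissible values $q = \delta(x) \in Q$ directly: equivariance forces $supp(q) \subseteq supp(x)$, and within each orbit of $Q$ the number of elements with support inside a fixed set of size $\leq k+p$ is at most $\binom{|D|}{d}d! \leq (k+p)!$ by \autoref{lem:size of support preserved under action}. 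Your approach is more elementary in that it sidesteps the $[k,S]^{ec}$ machinery entirely; the paper's approach is slightly tighter in principle (it exactly identifies the equivariant maps, yielding $\frac{(k+p)!}{p!}$ rather than $(k+p)!$), but this does not affect the asymptotics. One small caveat: your argument gives only an upper bound on the number of $\delta(x)$ values consistent with the necessary condition $\mathrm{Stab}(x) \subseteq \mathrm{Stab}(\delta(x))$, not a characterization of which values actually extend — but since you only need an upper bound, this is harmless and you correctly flag it.
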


The proof of \autoref{lem:number of nominal transition functions} is given in Appendix \ref{appendix:proof of number of nominal transition functions}

Let $\mc L^{\text{nom}}_A(n,k)$ denote the set of $G$-languages over $A$ recognized by a nominal DFA with at most $n$ orbits and nominal dimension at most $k$.

\begin{proposition} \label{prop:littlestone dimension of nominal automata}
    The Littlestone dimension of $\mc L^{\text{nom}}_A(n,k)$ is at most 
    \[
    O\left(nk^{p} \left(\log n + k \log k \right)\right)
    \]
\end{proposition}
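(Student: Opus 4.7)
The plan is to bound the Littlestone dimension by bounding the cardinality of $\mc L^{\text{nom}}_A(n,k)$ and applying \autoref{remark:littlestone dimension of a finite class}. By the nominal Myhill-Nerode theorem (\autoref{thm:nominal myhill-nerode}), every language $L \in \mc L^{\text{nom}}_A(n,k)$ is recognized by some nominal DFA with at most $n$ orbits and nominal dimension at most $k$ (for instance the minimal one arising from $A^*/\!\equiv_L$), so it suffices to upper bound the number of such automata up to language-equivalence.

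To count nominal DFAs with at most $n$ orbits and nominal dimension at most $k$, I would multiply together the numbers of choices for each component: (i) the state set $Q$, bounded by $2^{O(nk^2)}$ via \autoref{lem:number of nominal state sets}; (ii) the transition function $\delta$, bounded by $\bigl(n(k+p)!\bigr)^{O(nk^p)}$ via \autoref{lem:number of nominal transition functions}; (iii) the initial state $q_0$, which must lie in a singleton orbit and hence has at most $n$ choices; and (iv) the equivariant accepting subset $F \subseteq Q$, which being a union of orbits contributes at most $2^n$ choices. Taking the product and then the logarithm yields
\begin{align*}
\log|\mc L^{\text{nom}}_A(n,k)| &\leq O(nk^2) + O(nk^p) \cdot \log\bigl(n(k+p)!\bigr) + \log n + n \\
&= O(nk^2) + O(nk^p) \cdot \bigl(\log n + O(k\log k)\bigr) + O(n),
\end{align*}
where I use Stirling's approximation together with the fact that $p$ is a fixed constant of the alphabet to estimate $\log((k+p)!) = O(k\log k)$. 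For $p \geq 1$ the term $O(nk^p)(\log n + k\log k)$ dominates, yielding
\[
\Ldim(\mc L^{\text{nom}}_A(n,k)) \leq \log|\mc L^{\text{nom}}_A(n,k)| = O\bigl(nk^p(\log n + k\log k)\bigr),
\]
as desired.

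The real work is simply assembling the counting lemmas already proven, so there is no serious obstacle. The only subtleties are confirming via Myhill-Nerode that restricting attention to automata of these parameters loses no languages in $\mc L^{\text{nom}}_A(n,k)$, and being careful about how the hidden constants depend on the fixed alphabet parameter $p$ when bounding $\log((k+p)!)$ and when absorbing the $O(nk^2)$ and $O(n)$ terms into the dominant contribution from the transition function count.
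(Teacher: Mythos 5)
Your proposal is correct and follows essentially the same approach as the paper: bound $|\mc L^{\text{nom}}_A(n,k)|$ by counting state sets, transition functions, initial states, and accepting sets via Lemmas \ref{lem:number of nominal state sets} and \ref{lem:number of nominal transition functions}, then apply \autoref{remark:littlestone dimension of a finite class} and simplify with Stirling's approximation. The only cosmetic difference is your explicit invocation of Myhill-Nerode, which is redundant since $\mc L^{\text{nom}}_A(n,k)$ is defined directly as the set of languages recognized by such automata.
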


\begin{proof}
    We bound the Littlestone dimension by bounding the size of $\mc L^{\text{nom}}_A(n,k)$. 
    To choose a nominal DFA with at most $n$ orbits and nominal dimension at most $k$, we must choose the state set, transition function, initial state, and set of accepting states. 
    By \autoref{lem:number of nominal state sets}, there are at most $2^{O(nk^2)}$ choices for the state set. 
    By \autoref{lem:number of nominal transition functions}, there are at most $\left(n (k+p)!\right)^{O(nk^{p})}$ choices for the transition function.
    There are at most $n$ choices for the initial state (since the initial state must be an orbit of its own) and $2^n$ choices for the accepting states (since the set of accepting states is a union of orbits).
    So in total, we can upper bound $|\mc L^{nom}_A(n,k)|$ by 
    \[
        |\mc L^{nom}_A(n,k)| \leq 2^{O\left(nk^2\right)} \left(n (k+p)!\right)^{O\left(nk^{p}\right)} n 2^n.
    \]
    Applying \autoref{remark:littlestone dimension of a finite class}, taking logs, using Stirling's approximation, and remembering that $p$ is constant, we find that 
    \begin{align*}
        \Ldim(\mc L^{nom}_A(n,k)) &\leq \log |\mc L^{nom}_A(n,k)| \\
        &\leq O\left(nk^2\right) + O\left(nk^{p}\right)(\log n + \log((k+p)!)) + \log n + n \\
        &\leq O\left(nk^{p} \left(\log n + (k+p)\log(k+p) \right)\right) \\
        &= O\left(nk^{p} \left(\log n + k \log k \right)\right).
    \end{align*}
\end{proof}

\subsection{Consistency dimension of nominal DFAs}

It remains to bound the consistency dimension.
To do this, we need to show that if $L$ is a language that is not recognized by a nominal DFA with at most $n$ orbits and nominal dimension at most $k$, then there is a set of strings $B$ of bounded size such that the restriction of $L$ to $B$ cannot be extended to any $G$-language recognized by such a nominal DFA. 
By the nominal Myhill-Nerode theorem, $L$ not being recognized by a nominal DFA with at most $n$ orbits and nominal dimension at most $k$ is equivalent to $A^* / \equiv_L$ either having greater than $n$ orbits or having nominal dimension greater than $k$.
We first address the case where $A^* / \equiv_L$ has large nominal dimension.

\begin{lemma} \label{lem:moving one element outside of least support moves x}
    Let $X$ be any nominal set, let $x \in X$, and let $D = supp(x)$. 
    Suppose that $\tau \in G$ fixes every element of $D$ except for one. 
    Then $\tau \cdot x \neq x$. 
\end{lemma}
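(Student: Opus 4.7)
The plan is to derive a contradiction from the assumption $\tau \cdot x = x$ by comparing the least supports of $x$ and $\tau \cdot x$. The key intermediate fact is that the least-support operation commutes with the group action: for any $\pi \in G$, $supp(\pi \cdot x) = \pi(supp(x))$. If we have this, the result follows quickly, since the hypothesis on $\tau$ forces $\tau(D) \neq D$.

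First I would prove the equivariance of least support. \autoref{lem:size of support preserved under action} already tells us that $\pi(D)$ supports $\pi \cdot x$, so it suffices to check minimality and uniqueness. If some proper subset $S \subsetneq \pi(supp(x))$ were also a support of $\pi \cdot x$, then applying \autoref{lem:size of support preserved under action} with $\pi^{-1}$ gives $\pi^{-1}(S) \subsetneq supp(x)$ supporting $x$, contradicting minimality of $supp(x)$. Hence $\pi(supp(x))$ satisfies clauses (1) and (2) of the definition of least support, so clause (3) (uniqueness) yields $supp(\pi \cdot x) = \pi(supp(x))$.

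Next I would analyze $\tau(D)$. Write $D = \{d_1, d_2, \ldots, d_k\}$ where $\tau$ fixes $d_2, \ldots, d_k$ and moves $d_1$, and let $e := \tau(d_1)$. By hypothesis $e \neq d_1$, and since $\tau$ is a bijection already mapping each $d_j$ ($j \geq 2$) to itself, we cannot have $e = d_j$ for $j \geq 2$ either. Thus $e \notin D$, and consequently $\tau(D) = (D \setminus \{d_1\}) \cup \{e\} \neq D$.

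Finally, suppose for contradiction $\tau \cdot x = x$. The first step gives $supp(\tau \cdot x) = \tau(D)$, while the assumption forces $supp(\tau \cdot x) = supp(x) = D$. This yields $\tau(D) = D$, contradicting the second step. The only real subtlety is the equivariance-of-support argument, which is not quite stated in the excerpt; the rest is an unpacking of the definition of least support together with the bijectivity of $\tau$.
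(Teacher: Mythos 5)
Your proof is correct, but it takes a different route from the paper's. The paper argues by contradiction directly against minimality: assuming $\tau \cdot x = x$, it shows that $D \setminus \{i\}$ supports $x$ (where $i$ is the moved element), using a transposition trick --- for an arbitrary $\sigma$ fixing $D \setminus \{i\}$, one compares $\sigma$ to $\tau$ via $(j\ k)\sigma$ where $j = \tau(i), k = \sigma(i)$, and shows $\sigma \cdot x = (j\ k) \cdot x = x$. You instead factor the argument through the cleaner intermediate statement that the least-support operation is equivariant, $supp(\pi \cdot x) = \pi(supp(x))$, which you correctly derive from \autoref{lem:size of support preserved under action} and the minimality/uniqueness clauses in the definition of least support (applying the lemma to $\pi^{-1}$ to rule out a smaller support of $\pi \cdot x$). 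Then the contradiction falls out immediately: $\tau \cdot x = x$ would force $\tau(D) = D$, but since $\tau$ moves exactly one element of $D$ to a point necessarily outside $D$ (by injectivity), $\tau(D) \neq D$. Your version is more modular and isolates a standard, reusable fact about nominal sets, at the cost of one extra lemma; the paper's version is more self-contained and works directly from the definition of least support without establishing equivariance of $supp(\cdot)$ along the way. Both are sound.
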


\ifappendix The proof of this lemma is given in Appendix \ref{appendix:proof of moving one element outside of least support moves x}. \fi

\begin{proposition} \label{prop:high nominal dimension of L witnessed by small set}
    Let $L$ be a $G$-language over alphabet $A$, and suppose that $A^* / \equiv_L$ has nominal dimension at least $k+1$. 
    Then there is a set $B \subseteq A^*$ of size $2(k+1)$ such that for any $G$-language $L'$, if $L|_B = L'|_B$, then $A^* / \equiv_{L'}$ also has nominal dimension at least $k+1$. 
\end{proposition}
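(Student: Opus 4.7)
The plan is to exhibit a specific element of $A^*/\equiv_L$ whose least support has size at least $k+1$, and then construct $B$ from witnesses forcing the same element, viewed under any extension $L'$, to still have a large least support. By assumption, pick $x \in A^*$ with $D := supp([x]_L)$ of cardinality at least $k+1$, and fix any $D' \subseteq D$ with $|D'| = k+1$.

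For each $a \in D'$, I would choose a fresh element $c_a \in \bb N \setminus D$, with the $c_a$'s pairwise distinct, and let $\tau_a \in G$ be the transposition $(a\ c_a)$. Since $\tau_a$ fixes every element of $D$ except $a$, applying \autoref{lem:moving one element outside of least support moves x} in the nominal set $A^*/\equiv_L$ yields $\tau_a \cdot [x]_L \neq [x]_L$, i.e.\ $x \not\equiv_L \tau_a \cdot x$. Choose $z_a \in A^*$ witnessing this, so that $L(x z_a) \neq L((\tau_a \cdot x) z_a)$, and set $B = \{x z_a,\, (\tau_a \cdot x) z_a : a \in D'\}$, which has size at most $2(k+1)$.

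For any $G$-language $L'$ with $L'|_B = L|_B$, the same inequalities give $\tau_a \cdot [x]_{L'} \neq [x]_{L'}$ for every $a \in D'$. Let $E = supp([x]_{L'})$. If $E \cap \{a, c_a\} = \emptyset$ for some $a \in D'$, then $\tau_a$ would fix $E$ pointwise and therefore fix $[x]_{L'}$, contradicting the inequality just derived. So $E$ meets each of the $k+1$ pairs $\{a, c_a\}$. These pairs are pairwise disjoint by construction: the $a$'s are distinct elements of $D$, while the $c_a$'s are distinct and lie outside $D$. Therefore $|E| \geq k+1$, and $A^*/\equiv_{L'}$ has nominal dimension at least $k+1$.

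The main subtlety is finding witnesses that work uniformly against all possible extensions $L'$, since the support $E$ is not known in advance. A naive approach using permutations that merely fix subsets of $D'$ could leave $E$ free to place its support outside $D'$ without generating any contradiction. The transposition construction addresses this by replacing each single-element constraint with a disjoint two-element pair $\{a, c_a\}$, so that any element of $E$ that ``blocks'' $\tau_a$ is counted at most once across the $k+1$ pairs, forcing $|E| \geq k+1$.
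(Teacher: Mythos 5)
Your proof is correct and takes essentially the same approach as the paper's: fix $x$ with $|supp([x]_L)| \geq k+1$, build transpositions that each move exactly one element of the least support, extract witnesses $z_a$, and observe that each of the $k+1$ pairwise-disjoint pairs must meet $supp([x]_{L'})$. The only cosmetic difference is that you pick arbitrary fresh elements $c_a$ where the paper takes $c_a = a + (\max(D)+1)$, and you argue the size bound directly rather than by contradiction.
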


\begin{proof}
    Since $A^* / \equiv_L$ has nominal dimension $\geq k+1$, there is some $x_0 \in A^*$ such that $|supp([x_0]_L)| \geq k+1$. 
    Let $D$ denote $supp([x_0]_L)$, and let $D_0$ be a subset of $D$ of size $k+1$. 
    Let $j = \max(D)+1$, and for each $i \in D_0$, let $\tau_i = (i \enspace i+j) \in G$ (the permutation that swaps $i$ and $i+j$). 
    Notice that $\tau_i$ fixes all but one element of $D$, and so by \autoref{lem:moving one element outside of least support moves x}, $[\tau_i \cdot x_0]_L = \tau_i \cdot [x_0]_L \neq [x_0]_L$.  
    Thus for each $i \in D_0$, there is some $z_i \in A^*$ such that $L((\tau_i \cdot x_0) z_i) \neq L(x_0 z_i)$. 
    Let
    \[
        B = \{(\tau_i \cdot x_0) z_i \mid i \in D_0\} \cup \{x_0 z_i \mid i \in D_0\},
    \]
    which has size $2(k+1)$.
    
    Now, let $L'$ be a $G$-language extending $L|_B$, and suppose for contradiction that $A^* / \equiv_{L'}$ has nominal dimension at most $k$. 
    That is, for every $w \in A^*$, $|supp([w]_{L'})| \leq k$. 
    In particular, $|supp([x_0]_{L'})| \leq k$.
    For each $i \in D_0$, $L'$ agrees with $L$ on $(\tau_i \cdot x_0) z_i$ and $x_0z_i$, so $L'((\tau_i \cdot x_0) z_i) \neq L'(x_0 z_i)$. 
    This shows that $\tau_i \cdot [x_0]_{L'} = [\tau_i \cdot x_0]_{L'} \neq [x_0]_{L'}$. 
    Since the only elements not fixed by $\tau_i$ are $i$ and $i+j$, it must be that at least one of $i$ and $i+j$ are in the least support of $[x_0]_{L'}$. 
    Thus for each $i \in D_0$, $supp([x_0]_{L'})$ contains at least one of $i$ and $i+j$.
    Since $j > \max(D)$, all values $i, i+j$ for $i \in D_0$ are distinct, and so $|supp([x_0]_{L'})| \geq |D_0| = k+1$, which contradicts the fact that $|supp([x_0]_{L'})| \leq k$.
\end{proof}

Next, we address the case where $A^* / \equiv_L$ has a large number of orbits.

\begin{lemma} \label{lem:short witnesses to distinct equiv_L classes}
    Let $L$ be a $G$-language such that $A^* / \equiv_L$ has $n$ orbits. 
    Then for every orbit $G \cdot [x]_L$ of $A^* / \equiv_L$, there is a string $x' \in A^*$ such that $|x'| < n$ such that $[x']_L \in G \cdot [x]_L$. 
\end{lemma}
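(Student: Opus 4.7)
The plan is to mimic the classical DFA argument that every reachable state is reachable by a string of length less than the number of states, but in the nominal setting we will only get that every orbit of states is reached by some string of length less than the number of orbits. The key observation is that the usual pigeonhole argument on prefixes still applies, because we only need to pigeonhole on the orbits of prefixes, not on the prefixes themselves.

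First I would fix an orbit $G \cdot [x]_L$ and choose $y \in A^*$ of minimum length with $[y]_L \in G \cdot [x]_L$; the goal is $|y| < n$. Suppose for contradiction $|y| = m \geq n$, and write $y = y_1 \cdots y_m$. Consider the $m+1$ prefixes $\epsilon, y_1, y_1 y_2, \ldots, y_1 \cdots y_m$ and the corresponding classes $[y_1 \cdots y_s]_L \in A^*/\equiv_L$ for $s = 0, \ldots, m$. Since $A^*/\equiv_L$ has only $n$ orbits and $m+1 \geq n+1$, by the pigeonhole principle there exist $0 \leq i < j \leq m$ such that $[y_1 \cdots y_i]_L$ and $[y_1 \cdots y_j]_L$ lie in the same orbit, so there is $\pi \in G$ with $\pi \cdot [y_1 \cdots y_i]_L = [y_1 \cdots y_j]_L$.

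The remaining step is the "shortcut" construction. Let $u = y_1 \cdots y_i$, $v = y_{i+1} \cdots y_j$, $w = y_{j+1} \cdots y_m$, so $y = uvw$ and $\pi \cdot u \equiv_L uv$. Since $\equiv_L$ is a right congruence (directly from its definition), appending $w$ on the right gives $(\pi \cdot u)\, w \equiv_L uvw = y$. Now applying $\pi^{-1}$ to both sides of this $\equiv_L$ relation (using that $\equiv_L$ is equivariant, hence closed under the $G$-action on $A^*$) yields $u \cdot (\pi^{-1} \cdot w) \equiv_L \pi^{-1} \cdot y$. Hence $[u(\pi^{-1} \cdot w)]_L = \pi^{-1} \cdot [y]_L$, which lies in the same orbit as $[y]_L$, namely $G \cdot [x]_L$. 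But $|u(\pi^{-1}\cdot w)| = i + (m-j) < m$, contradicting the minimality of $y$.

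The only subtle point is the equivariance manipulation in the last paragraph: one has to verify that the $G$-action on $A^*$ commutes appropriately with concatenation (which it does, since the action is letterwise, so $\pi^{-1} \cdot (uvw) = (\pi^{-1} \cdot u)(\pi^{-1} \cdot v)(\pi^{-1} \cdot w)$ and applying $\pi^{-1}$ to $(\pi \cdot u) w$ gives $u \cdot (\pi^{-1} \cdot w)$). Everything else is routine pigeonholing plus the right-congruence property of $\equiv_L$, so I expect this equivariance bookkeeping to be the main (mild) obstacle.
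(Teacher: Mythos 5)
Your proof is correct, and it shares the paper's core idea: pigeonhole on the \emph{orbits} of the $\equiv_L$-classes of prefixes of a minimal-length representative, then use the right-congruence property to shortcut to a strictly shorter string. The paper closes the argument slightly differently: it considers only the $n$ prefixes of length $\leq n-1$, observes that under the contradiction hypothesis none of them can lie in the target orbit, so they pigeonhole into only $n-1$ orbits; the resulting shortcut string then lies in the \emph{same} $\equiv_L$-class as $x$, with no conjugation needed. You instead pigeonhole all $m+1$ prefixes into all $n$ orbits and compensate at the end by applying $\pi^{-1}$ (using equivariance of $\equiv_L$ and the letterwise action on $A^*$), landing in the same orbit rather than the same class. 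Both closings are valid; yours is a touch more symmetric in that it does not need to separately invoke the contradiction hypothesis in the pigeonhole step, while the paper's avoids the final conjugation. The equivariance bookkeeping you flag as the mild subtlety is indeed fine: $\pi^{-1}\cdot((\pi\cdot u)w)=u(\pi^{-1}\cdot w)$ holds because $G$ acts letterwise, and $\equiv_L$ is equivariant by \cite[Lemma 3.4]{bojanczyk-klin-lasota:automata-theory-in-nominal-sets}.
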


\ifappendix The proof of \autoref{lem:short witnesses to distinct equiv_L classes} is given in Appendix \ref{appendix:proof of short witnesses to distinct equiv_L classes}. \fi

\begin{proposition} \label{prop:many orbits of L witnessed by small set}
    Let $L$ be a $G$-language over alphabet $A$, where $A$ has nominal dimension $p$, and suppose that $A^* / \equiv_L$ has at least $n+1$ orbits. 
    Then there is a set $B \subseteq A^*$ of size $2 \binom{n+1}{2} \binom{pn}{k} (3pn)^k$ such that for any $G$-language $L'$, if $L|_B = L'|_B$, then $A^* / \equiv_{L'}$ also has at least $n+1$ orbits.
\end{proposition}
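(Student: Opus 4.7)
The plan is to use \autoref{lem:short witnesses to distinct equiv_L classes} to obtain short representatives of orbits of $A^*/\equiv_L$, and then, for each pair of representatives, to include in $B$ enough distinguishing strings to prevent any $L'$ agreeing with $L$ on $B$ from gluing two representative classes into a common orbit of $A^*/\equiv_{L'}$. Concretely, first I would apply \autoref{lem:short witnesses to distinct equiv_L classes} to obtain $x_0, x_1, \ldots, x_n \in A^*$, each of length at most $n$, whose classes lie in $n+1$ distinct $G$-orbits of $A^*/\equiv_L$. Since each letter of $A$ has support of size at most $p$, each $supp(x_i)$ has size at most $pn$.

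For each pair $(i,j)$ with $i<j$, I would canonicalize the permutations $\pi \in G$ that could potentially witness an orbit merger $\pi \cdot [x_j]_{L'} = [x_i]_{L'}$, reducing them to a finite enumerable set. The main observation is that any $\sigma \in G$ fixing $supp(x_i)$ pointwise satisfies $\sigma \cdot x_i = x_i$, so $x_i \equiv_{L'} \pi x_j$ iff $x_i \equiv_{L'} (\sigma \pi) x_j$. This lets us normalize $\pi(supp(x_j))$ to lie in $supp(x_i)$ together with a fixed pool of fresh atoms. The parameter $k$ then limits the essential data of $\pi$: via the nominal dimension bound, the action of $\pi$ on $[x_j]_{L'}$ is controlled by its restriction to $supp([x_j]_{L'}) \subseteq supp(x_j)$, a subset of size at most $k$. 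The resulting canonical $\pi$ is parametrized by (i) a subset $D \subseteq supp(x_j)$ of size at most $k$ (at most $\binom{pn}{k}$ choices) and (ii) an injection from $D$ into the normalized codomain, whose size I expect to bound by $3pn$ (bounding $supp(x_i)$ together with sufficiently many fresh atoms for the image of $\pi$), giving at most $(3pn)^k$ options.

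For each pair $(i,j)$ and each canonical $\pi$, since $x_i$ and $x_j$ lie in distinct $G$-orbits of $A^*/\equiv_L$ and $\equiv_L$ is equivariant, $[x_i]_L$ and $[\pi x_j]_L$ are distinct equivalence classes, so there exists $z_{ij\pi}$ with $L(x_i z_{ij\pi}) \neq L((\pi x_j) z_{ij\pi})$; I would include both $x_i z_{ij\pi}$ and $(\pi x_j) z_{ij\pi}$ in $B$, yielding $|B| \leq 2 \binom{n+1}{2} \binom{pn}{k} (3pn)^k$. To verify the conclusion, if $A^*/\equiv_{L'}$ had at most $n$ orbits, pigeonhole on $[x_0]_{L'}, \ldots, [x_n]_{L'}$ forces a collision $\pi x_j \equiv_{L'} x_i$ for some $i<j$, canonicalization produces some $\pi'$ in our enumeration also satisfying $\pi' x_j \equiv_{L'} x_i$, and the distinguisher $z_{ij\pi'} \in B$ gives $L(x_i z_{ij\pi'}) \neq L((\pi' x_j) z_{ij\pi'})$, contradicting $L|_B = L'|_B$ together with the assumption $x_i \equiv_{L'} \pi' x_j$. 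The main obstacle, and the step that requires the most care, is the canonicalization: justifying that every merger permutation $\pi$ is equivalent, modulo pointwise stabilizers of the relevant supports, to an enumerated canonical form requires careful bookkeeping of how the supports of $x_i$, of $x_j$, and of the $\equiv_{L'}$-class $[x_j]_{L'}$ interact under the $G$-action, along with the choice of fresh atoms used to normalize $\pi$.
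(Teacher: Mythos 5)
Your proposal follows essentially the same route as the paper's proof: short orbit representatives via \autoref{lem:short witnesses to distinct equiv_L classes}, a finite canonical family of permutations parametrized by a size-$\leq k$ subset of the relevant support (using that $supp([x]_{L'}) \subseteq supp(x)$ has size at most $k$) together with an injection into a bounded codomain of size $O(pn)$, a distinguishing suffix for each pair and each canonical permutation, and a pigeonhole-plus-normalization contradiction. The only place you stay at a higher level is the canonicalization step, which the paper makes concrete by composing the hypothetical merging permutation with an explicit product of transpositions that relocates stray atoms into a fresh pool $D$ disjoint from all $supp(x_i)$; you correctly flag this bookkeeping as the crux, and the two arguments are otherwise the same up to the symmetric choice of which of $x_i,x_j$ is acted on.
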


\begin{proof}
    Since $A^* / \equiv_L$ has at least $n+1$ orbits, there are strings $x_0, \ldots, x_n$ such that $[x_i]_{L}$ all belong to distinct orbits. 
    That is, for every $\tau \in G$ and $0 \leq i < j \leq n$, $[\tau \cdot x_i]_L = \tau [x_i]_L \neq [x_j]_L$, and thus there is $z^\tau_{ij} \in A^*$ such that 
    \[
        L((\tau \cdot x_i) z^\tau_{ij}) \neq L(x_j z^\tau_{ij}).
    \]
    Moreover, by \autoref{lem:short witnesses to distinct equiv_L classes}, we may assume that $|x_i| \leq n$ for each $0 \leq i \leq n$. 

    For each $0 \leq i \leq n$, let $D_i = supp(x_i)$. 
    Since the nominal dimension of $A$ is $p$, we have that $|D_i| \leq p \cdot |x_i| \leq pn$. 
    Also, let $D$ be some subset of $\bb N$ such that $D$ is disjoint from $D_0, \ldots, D_n$, and $\displaystyle |D| = \max_{0 \leq i \leq n} |D_i| \leq pn$.

    Now, for each $0 \leq i < j \leq n$, let
    \[
        \Sigma_{ij}' := \{\sigma' : D'' \to D_i \cup D_j \cup D \ \mid \ D'' \subseteq D_i \text{ of size } k \text{ and $\sigma'$ is an injection}\}.
    \]
    Notice that $\Sigma_{ij}'$ has size at most $\binom{pn}{k} (3pn)^k$: to choose a $\sigma'$, we choose a subset of $D_i$ of size $k$ and a value from $D_i \cup D_j \cup D$ for each of the $k$ inputs.
    For each $\sigma' \in \Sigma_{ij}'$, let $\sigma \in G$ be an arbitrary but fixed extension of $\sigma'$ to a permutation of $\bb N$, and let $\Sigma_{ij}$ consist of all such $\sigma$ (so $|\Sigma_{ij}| = |\Sigma_{ij}'| \leq \binom{pn}{k} (3pn)^k$). 

    Now, let 
    \begin{align*}
        B = \ &\{(\sigma \cdot x_i) z^\sigma_{ij} \ \mid \ 0 \leq i < j \leq n, \ \sigma \in \Sigma_{ij}\} \cup \\
        &\{x_j z^\sigma_{ij} \ \mid \ 0 \leq i < j \leq n, \ \sigma \in \Sigma_{ij}\}.
    \end{align*}
    $B$ has size at most $2 \binom{n+1}{2} \binom{pn}{k} (3pn)^k$. 
    Let $L'$ be a $G$-language extending $L|_B$, and assume for contradiction that $A^* / \equiv_{L'}$ has at most $n$ orbits. 
    Thus there must be $0 \leq i < j \leq n$ such that $[x_i]_{L'}$ is in the same orbit as $[x_j]_{L'}$. 
    Let $\pi \in G$ such that $\pi \cdot [x_i]_{L'} = [x_j]_{L'}$. 
    $\pi$ does not have to be in $\Sigma_{ij}$, and so we cannot directly derive a contradiction. 
    Instead, we will alter $\pi$ in order to obtain a $\sigma \in \Sigma_{ij}$ such that $\sigma \cdot [x_i]_{L'} = [x_j]_{L'}$.

    Define $\pi' \in G$ in the following way: first, let $\pi(D_i)$ denote the image of $D_i$ under $\pi$.
    For each element $a \in \pi(D_i) \setminus (D_i \cup D_j \cup D)$, select a unique element $b_a \in D \setminus \pi(D_i)$.
    This is possible because $|D| \geq |D_i| = |\pi(D_i)|$, and so $|D \setminus \pi(D_i)| \geq |\pi(D_i) \setminus D| \geq |\pi(D_i) \setminus (D_i \cup D_j \cup D)|$. 
    Let $\pi'$ be the permutation that transposes each $a \in \pi(D_i) \setminus (D_i \cup D_j \cup D)$ with its corresponding $b_a \in D \setminus \pi(D_i)$, and fixes every other element of $\bb N$. 

    Notice that $\pi'|_{D_j} = \mathrm{id}_{D_j}$, and so $\pi' \cdot x_j = x_j$. 
    Also, if $a \in D_i$, then $\pi' \circ \pi(a) \in D_i \cup D_j \cup D$: if $\pi(a) \in D_i \cup D_j \cup D$, then by definition $\pi'$ does not affect $\pi(a)$ and so $\pi'(\pi(a)) = \pi(a) \in D_i \cup D_j \cup D$, whereas if $\pi(a) \in \pi(D_i) \setminus (D_i \cup D_j \cup D)$, $\pi'$ will transpose $\pi(a)$ with an element in $D$. 
    Let $D_i'$ denote $supp([x_i]_{L'})$.
    Since $A^* / \equiv_{L'}$ has nominal dimension at most $k$, $|D_i'| \leq k$.
    Also, by \autoref{lem:dimension of quotient at most dimension of original}, $D_i' \subseteq supp(x_i) = D_i$.
    Therefore, $(\pi' \circ \pi)|_{D_i'}$ is an injection from a subset of $D_i$ of size at most $k$ into $D_i \cup D_j \cup D$. 
    Therefore, there is some $\sigma \in \Sigma_{ij}$ such that
    \[
        \sigma|_{D_i'} = (\pi' \circ \pi)|_{D_i'}.
    \]
    We may then deduce that 
    \begin{align*}
        [\sigma \cdot x_i]_{L'} &= \sigma \cdot [x_i]_{L'} \\
        &= (\pi' \circ \pi) \cdot [x_i]_{L'} \\
        &= \pi' \cdot (\pi \cdot [x_i]_{L'}) \\
        &= \pi' \cdot [x_j]_{L'} \\
        &= [\pi' \cdot x_j]_{L'} \\
        &= [x_j]_{L'}.
    \end{align*}

    This means that for all $z \in A^*$, $L'((\sigma \cdot x_i) z) = L'(x_j z)$.
    In particular, we may choose $z = z^\sigma_{ij}$.
    However, since $(\sigma \cdot x_i) z^\sigma_{ij}$ and $x_j z^\sigma_{ij}$ are in $B$, it must be that
    \begin{align*}
        L((\sigma \cdot x_i) z^\sigma_{ij}) &= L'((\sigma \cdot x_i) z^\sigma_{ij}) \\
        &= L'(x_j z^\sigma_{ij}) \\
        &= L(x_j z^\sigma_{ij}),
    \end{align*}
    a contradiction!
    So we may finally conclude that $A^* / \equiv_{L'}$ has at least $n+1$ orbits.    
\end{proof}

\autoref{lem:short witnesses to distinct equiv_L classes} yields an interesting consequence, which we do not use in any of our proofs but may be of independent interest:

\begin{corollary} \label{cor:dimension of A^*/equiv_L bounded in terms of orbits}
    If $L$ is a $G$-language over an alphabet $A$ which has nominal dimension $p$ such that $A^* / \equiv_L$ has $n$ orbits, then $A^* / \equiv_L$ has nominal dimension at most $(n-1)p$.
\end{corollary}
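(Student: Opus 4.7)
[Proof sketch of \autoref{cor:dimension of A^*/equiv_L bounded in terms of orbits}]
The plan is to combine \autoref{lem:short witnesses to distinct equiv_L classes} with the fact that the quotient map does not increase least support sizes (\autoref{lem:dimension of quotient at most dimension of original}) and the invariance of support size under the $G$-action (\autoref{lem:size of support preserved under action}).

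First, fix an arbitrary element $[x]_L \in A^*/\equiv_L$; our goal is to show $|\mathrm{supp}([x]_L)| \leq (n-1)p$. By \autoref{lem:short witnesses to distinct equiv_L classes}, the orbit $G \cdot [x]_L$ contains a class of the form $[x']_L$ for some $x' \in A^*$ with $|x'| < n$, i.e., $|x'| \leq n-1$. By \autoref{lem:size of support preserved under action}, the least support of $[x']_L$ has the same size as that of $[x]_L$, so it suffices to bound $|\mathrm{supp}([x']_L)|$.

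Next, I would apply \autoref{lem:dimension of quotient at most dimension of original} to the quotient map $A^* \to A^*/\equiv_L$ (which is equivariant by \autoref{lem:quotient by equivariant equivalence relation is nominal}) to conclude $\mathrm{supp}([x']_L) \subseteq \mathrm{supp}(x')$. It then remains to bound $|\mathrm{supp}(x')|$ in terms of the length of $x'$ and the nominal dimension $p$ of $A$. Writing $x' = a_1 a_2 \cdots a_\ell$ with $\ell \leq n-1$, the pointwise action of $G$ on $A^*$ shows that $\bigcup_{i=1}^\ell \mathrm{supp}(a_i)$ is a support for $x'$ (any permutation fixing this union fixes each $a_i$ and hence fixes $x'$), so
\[
|\mathrm{supp}(x')| \leq \sum_{i=1}^\ell |\mathrm{supp}(a_i)| \leq \ell \cdot p \leq (n-1)p.
\]
Chaining the inequalities gives $|\mathrm{supp}([x]_L)| = |\mathrm{supp}([x']_L)| \leq |\mathrm{supp}(x')| \leq (n-1)p$, as required.

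There is no real obstacle here; the argument is a routine assembly of the supporting lemmas already in the paper. The only thing worth being careful about is the final step of bounding $|\mathrm{supp}(x')|$ by the sum of character supports, which uses the definition of the pointwise action on $A^*$ rather than invoking \autoref{prop:dimension of product} directly (though one could equivalently view $A^{\leq n-1}$ as a finite disjoint union of products $A^\ell$ of nominal dimension $\ell p$ and appeal to that proposition).
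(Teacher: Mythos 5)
Your proof is correct and follows essentially the same route as the paper: invoke \autoref{lem:short witnesses to distinct equiv_L classes} to move into the orbit of a short representative $x'$, use \autoref{lem:size of support preserved under action} and \autoref{lem:dimension of quotient at most dimension of original} to pass to $\mathrm{supp}(x')$, and then bound $|\mathrm{supp}(x')|$ by the union of the per-letter supports under the pointwise action. No substantive differences.
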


This result suggests that ultimately it is the number of orbits, and not the nominal dimension, of $A^* / \equiv_L$ which characterizes the complexity of $L$.
\ifappendix Its proof is given in Appendix \ref{appendix:proof of dimension of A^*/equiv_L bounded in terms of orbits}. \fi

\bigskip

We now have all the ingredients we need to bound the consistency dimension.

\begin{proposition} \label{prop:consistency dimension of nominal automata}
    The consistency dimension of $\mc L^{nom}_A(n,k)$ with respect to itself is at most $2 \binom{n+1}{2} \binom{pn}{k} (3pn)^k$.
\end{proposition}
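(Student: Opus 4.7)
The plan is to combine the two witnessing propositions from the preceding subsection (Proposition on high nominal dimension and Proposition on many orbits) via the nominal Myhill-Nerode theorem. I will work with the contrapositive characterization of consistency dimension: for every $G$-language $L \notin \mc L^{\text{nom}}_A(n,k)$, I must exhibit a set $B \subseteq A^*$ of the claimed size such that no restriction of any language in $\mc L^{\text{nom}}_A(n,k)$ equals $L|_B$.

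First, I would apply Theorem~\ref{thm:nominal myhill-nerode} to conclude that if $L \notin \mc L^{\text{nom}}_A(n,k)$, then the quotient $A^* / \equiv_L$ must fail at least one of the two defining conditions: either it has at least $n+1$ orbits, or it has nominal dimension at least $k+1$ (or both). This splits the argument into two natural cases.

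In the case of too-large nominal dimension, I would invoke Proposition~\ref{prop:high nominal dimension of L witnessed by small set} to obtain a set $B_1$ of size $2(k+1)$ with the property that \emph{any} $G$-language $L'$ agreeing with $L$ on $B_1$ still has $A^*/\equiv_{L'}$ of nominal dimension at least $k+1$, so no such $L'$ can lie in $\mc L^{\text{nom}}_A(n,k)$. In the case of too-many orbits, I would invoke Proposition~\ref{prop:many orbits of L witnessed by small set} to obtain a set $B_2$ of size $2\binom{n+1}{2}\binom{pn}{k}(3pn)^k$ with the analogous witnessing property. In either case, $B_1$ or $B_2$ certifies that $L$ is $m$-inconsistent with $\mc L^{\text{nom}}_A(n,k)$ for $m$ equal to the size of the witnessing set.

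Since the quantity $2\binom{n+1}{2}\binom{pn}{k}(3pn)^k$ dominates $2(k+1)$ in the relevant parameter regime (noting $\binom{n+1}{2} \ge 1$ and, whenever $k \le pn$, $(3pn)^k \ge k+1$), the larger of the two bounds is the one stated, yielding $\Cdim(\mc L^{\text{nom}}_A(n,k)) \leq 2\binom{n+1}{2}\binom{pn}{k}(3pn)^k$. There is no real obstacle here, as the two propositions do all the heavy combinatorial work; the step requiring the most care is simply confirming that the bound stated in the proposition is indeed an upper bound on both witness sizes, which reduces to an elementary inequality.
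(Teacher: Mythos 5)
Your decomposition into two cases (too many orbits, or nominal dimension too large) driven by the nominal Myhill--Nerode theorem, followed by the comparison of the two witness-set sizes, is exactly the paper's Cases 2 and 3, and that part is correct. However, there is a genuine gap at the very start of your argument: you write ``for every $G$-language $L \notin \mc L^{\text{nom}}_A(n,k)$,'' but the definition of consistency dimension quantifies over \emph{all} concepts $L : A^* \to \{0,1\}$, not only equivariant ones. If $L$ is not a $G$-language, the nominal Myhill--Nerode theorem does not apply to it, and neither do the two witnessing propositions you invoke (both are stated for $G$-languages $L$). Your case split therefore does not cover all the concepts the definition requires you to handle, and the argument as written proves something strictly weaker than the claimed bound.

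The fix is short and is exactly what the paper does as its Case 1: if $L$ is not equivariant, pick $x_0 \in A^*$ and $\pi \in G$ with $L(x_0) \neq L(\pi \cdot x_0)$, and take $B = \{x_0, \pi \cdot x_0\}$; no extension of $L|_B$ can be equivariant, hence none can be nominal regular, so $L' \notin \mc L^{\text{nom}}_A(n,k)$ for every such $L'$. This witness has size $2$, which is certainly dominated by the stated bound. One further point of care, which you glide over: the two witnessing propositions only rule out $G$-languages $L'$ agreeing with $L$ on $B$, whereas $n$-inconsistency requires ruling out \emph{arbitrary} $L'$ extending $L|_B$. This is harmless because any $L' \in \mc L^{\text{nom}}_A(n,k)$ is recognized by a nominal DFA and is therefore automatically a $G$-language, but that sentence needs to appear in the proof (as it does in the paper) to make the logic airtight.
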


\begin{proof}
    Let $L : A^* \to \{0,1\}$, and suppose that $L \notin \mc L^{nom}_A(n,k)$, i.e., $L$ is not recognized by any nominal DFA with at most $n$ orbits and nominal dimension at most $k$.
    We must find a set $B$ of at most $2 \binom{n+1}{2} \binom{pn}{k} (3pn)^k$ strings such that any function extending $L|_B$ is not in $\mc L^{nom}_A(n,k)$.

    \textbf{Case 1:} $L$ is not equivariant.
    Then there is $x_0 \in A^*$ and $\pi \in G$ such that $L(x_0) \neq L(\pi \cdot x_0)$.
    Set $B = \{x_0, \pi \cdot x_0\}$.
    Any function $L'$ extending $L|_B$ cannot be equivariant, and therefore cannot be nominal regular, so $L' \notin \mc L^{nom}_A(n,k)$. 

    \textbf{Case 2:} $L$ is equivariant, and is recognized by some nominal DFA with at most $n$ orbits. 
    By \autoref{thm:nominal myhill-nerode}, $A^* / \equiv_L$ has at most $n$ orbits. 
    Thus the nominal dimension of $A^* / \equiv_L$ must be at least $k+1$, or else another application of \autoref{thm:nominal myhill-nerode} would imply that $L$ is recognized by a nominal DFA with at most $n$ orbits and nominal dimension at most $k$, contradicting the assumption that $L \notin \mc L^{nom}_A(n,k)$. 
    Let $B \subseteq A^*$ of size $2(k+1)$ be as given by \autoref{prop:high nominal dimension of L witnessed by small set}.
    
    Now, let $L' : A^* \to \{0,1\}$ extend $L|_B$, and suppose for contradiction that $L' \in \mc L^{nom}_A(n,k)$. 
    Since $L'$ is recognized by a nominal DFA, it must be a $G$-language, so by the definition of $B$, $A^* / \equiv_{L'}$ has nominal dimension at least $k+1$.
    However, by \autoref{thm:nominal myhill-nerode} and the fact that $L'$ is recognized by a nominal DFA with at most $n$ orbits and nominal dimension at most $k$, $A^* / \equiv_{L'}$ has nominal dimension at most $k$, a contradiction!
    So $L' \notin \mc L^{nom}_A(n,k)$.

    \textbf{Case 3:} $L$ is equivariant, and $L$ is not recognized by any nominal DFAs with at most $n$ orbits.
    By \autoref{thm:nominal myhill-nerode}, $A^* / \equiv_L$ has at least $n+1$ orbits.
    Let $B \subseteq A^*$ of size $2 \binom{n+1}{2} \binom{pn}{k} (3pn)^k$ be as given by \autoref{prop:many orbits of L witnessed by small set}.

    Now, let $L' : A^* \to \{0,1\}$ extend $L|_B$, and suppose for contradiction that $L' \in \mc L^{nom}_A(n,k)$. 
    Since $L'$ is recognized by a nominal DFA, it must be a $G$-language, so by the definition of $B$, $A^* / \equiv_{L'}$ has at least $n+1$ orbits.
    However, by \autoref{thm:nominal myhill-nerode} and the fact that $L'$ is recognized by a nominal DFA with at most $n$ orbits and nominal dimension at most $k$, $A^* / \equiv_{L'}$ has at most $n$ orbits, a contradiction!
    So $L' \notin \mc L^{nom}_A(n,k)$.
    
    In all three cases, we found a set $B$ of size at most $2 \binom{n+1}{2} \binom{pn}{k} (3pn)^k$ such that any language extending $L|_B$ cannot be in $\mc L^{nom}_A(n,k)$. 
    We can then conclude that the consistency dimension of $\mc L^{nom}_A(n,k)$ with respect to itself is at most $2 \binom{n+1}{2} \binom{pn}{k} (3pn)^k$.
\end{proof}

\subsection{Learning bound for nominal DFAs}

We can now use the bounds we proved for Littlestone dimension and consistency dimension to prove our main result on nominal DFAs:

\nominalLCfixedalphabet

\begin{proof}
    By \autoref{prop:littlestone dimension of nominal automata}, the Littlestone dimension of $\mc L^{nom}_A(n,k)$ is at most $O\left(nk^{p} \left(\log n + k \log k \right)\right)$. 
    By \autoref{prop:consistency dimension of nominal automata}, the consistency dimension of $\mc L^{nom}_A(n,k)$ with respect to itself is at most $2 \binom{n+1}{2} \binom{pn}{k} (3pn)^k$.
    This is upper bounded by $n(n+1) \frac{(e \cdot pn)^k}{k^k} (3pn)^k$, which is in turn at most $\frac{n^{O(k)}}{k^k}$ (since $p$ is a constant).
    Applying \autoref{thm:query complexity bounds}, the query complexity of $\mc L^{nom}_A(n,k)$ with queries from $\mc L^{nom}_A(n,k)$ is at most $O\left(nk^{p} \left(\log n + k \log k \right) \frac{n^{O(k)}}{k^k}\right) = \frac{n^{O(k)}}{k^k}$.
\end{proof}

\section{Conclusion} \label{section:conclusion}

In this paper, we derive query learning bounds for advice DFAs and nominal DFAs; the bounds for advice DFAs are the first known bounds for the setting, while the bounds for nominal DFAs improve upon prior results, at the cost of not making any computational guarantees about the learning algorithm.

The bounds for advice DFAs are derived in almost the exact same manner as the bounds for DFAs given by Chase and Freitag \cite{chase-freitag:bounds-in-query-learning}. 
However, to the complexity of the setting of nominal DFAs requires a much deeper analysis of the structure of nominal sets and nominal DFAs.
This leads to several auxillary results which may be of independent interest; namely, \autoref{prop:bounds on fN(k1,k2)}, which analyzes the asymptotics of the number of orbits of products of nominal sets; \autoref{prop:number of single orbit nominal sets}, which counts the number of single-orbit nominal sets of a given dimension; and \autoref{cor:dimension of A^*/equiv_L bounded in terms of orbits}, which bounds the nominal dimension of $A^* / \equiv_L$ in terms of the number of orbits of $A^* / \equiv_L$ and the nominal dimension of $A$.

One natural extension of our work is to adapt the results for nominal automata to other well-behaved symmetries beside the equality symmetry, such as the total order symmetry, as described in \cite{bojanczyk-klin-lasota:automata-theory-in-nominal-sets}. 
Our approach is general enough that it should apply to these symmetries; however, there are many details that would need to be verified. 
It would also be natural to explore what the approach of Chase and Freitag yields for other previously-studied forms of automata that admit versions of the Myhill-Nerode theorem. 
For example, Bollig et al. \cite{bollig-etal:learning-NFA} study query learning of NFAs via \textit{residual finite-state automata}, which crucially utilize the Myhill-Nerode correspondence to find canonical NFA representations of regular languages. 
A final potential direction for future work would be to consider the generalization of nominal automata to $\omega$-languages; i.e., nominal DFAs with a B\"uchi-style acceptance criteria. 
Angluin and Fisman have adapted the $L^*$ algorithm to the classical setting of $\omega$-regular languages \cite{angluin-fisman:learning-regular-omega-languages}, and Chase and Freitag apply their method in the same setting \cite{chase-freitag:bounds-in-query-learning}.
To our knowledge, no work has been done studying the learnability of $\omega$-nominal regular languages, so it would be of interest to find results using both Angluin's method and Chase and Freitag's method. 

%

%
%
\bibliographystyle{splncs04}
\bibliography{khz-bib.bib}

\ifappendix

\newpage
\appendix

\section{Omitted proofs from \autoref{section:complexity of advice DFAs}} \label{appendix:advice DFAs omitted proofs}

\subsection{A witness to the tightness of the bounds in \autoref{thm:advice myhill-nerode with bounds}} \label{appendix:witness to tightness of advice myhill-nerode bounds}

\begin{example}
    Define $L : \{0,1\}^* \to \{0,1\}$ by 
    \[
      L(w) = \begin{cases} 1 & (w \text{ has an even number of 0's} \land |w| \neq 2) \lor (|w| = 3) \\
                           0 & \text{otherwise} \end{cases}
    \]
  \end{example}

  Notice that $\equiv_{L,m}$ has at most two classes for every $m$---one for strings with an even number of 0's and one for strings with an odd number of 0's.
  However, it is not recognized by any DFA $M$ with advice with 3 states. 
  To see this, let $M$ accept $L$.
  The runs of strings 00 and 01 both end in reject states of $M$, but since they are in separate $\equiv_{L,2}$-classes, they must end in distinct states. 
  So there are at least two reject states.
  Similarly, the runs of 000 and 001 both end in distinct accept states. 
  Thus there must be at least 4 states in $M$.
  
  Replacing ``$w$ has an even number of 0's'' with any regular language that has at most $k$ equivalence classes in the Myhill-Nerode relation (for example, ``the number of 0's in $w$ is divisible by $k$''), we obtain a language $L_k$ such that $\equiv_{L_k,m}$ has at most $k$ classes for every $m$, but any advice DFA accepting $L_k$ must have at least $2k$ states.

  \section{Omitted proofs from \autoref{section:complexity of nominal DFAs}} \label{appendix:nominal DFAs omitted proofs}

  \subsection{Proof of \autoref{lem:induced equivariant equivalence relations}} \label{appendix:proof of induced equivariant equivalence relations}

  \begin{proof}
    Define $\equiv_X$ by $x_1 \equiv_X x_2$ if and only if $F(x_1) \equiv_Y F(x_2)$.
    A standard argument confirms that this is indeed an equivalence relation, equivariance follows from the fact that $F$ is equivariant.

    Next, define $f : X / \equiv_X \to \equiv_Y$ by $f([x]_{\equiv_X}) = [F(x)]_{\equiv_Y}$.
    This is well-defined: if $x_1 \equiv_X x_2$, then 
    \begin{align*}
        f([x_1]_{\equiv_X}) &= [F(x_1)]_{\equiv_Y} \\
        &= [F(x_2)]_{\equiv_Y} & \text{since } x_1 \equiv_X x_2 \Rightarrow F(x_1) \equiv_Y F(x_2) \\
        &= f([x_2]_{\equiv_X})
    \end{align*}
    $f$ is equivariant since if $\pi \in G$, then $f(\pi \cdot [x]_{\equiv_X}) = f([\pi \cdot x]_{\equiv_X}) = [F(\pi \cdot x)]_{\equiv_Y} = [\pi \cdot F(x)]_{\equiv_Y} = \pi \cdot [F(x)]_{\equiv_Y} = \pi \cdot f([x]_{\equiv_X})$.
    $f$ is injective since for $x_1, x_2 \in X$,
    \begin{align*}
        f([x_1]_{\equiv_X}) &= f([x_2]_{\equiv_X}) \\
        \Rightarrow [F(x_1)]_{\equiv_Y} &= [F(x_2)]_{\equiv_Y} \\
        \Rightarrow F(x_1) &\equiv_Y F(x_2) \\
        \Rightarrow x_1 &\equiv_X x_2 \\
        \Rightarrow [x_1]_{\equiv_X} &= [x_2]_{\equiv_X}.
    \end{align*}
    Finally, if $F$ is surjective, then given $[y]_{\equiv_Y} \in Y / \equiv_Y$, there is $x \in X$ such that $F(x) = y$ and hence $f([x]_{\equiv_X}) = [y]_{\equiv_Y}$, so $f$ is surjective.
\end{proof}

\subsection{Proof of \autoref{thm:nominal myhill-nerode}} \label{appendix:proof of nominal myhill-nerode}

\begin{definition}[Reachable Nominal DFA]
    A nominal DFA $M$ is said to be \emph{reachable} if for every state $q$ in $M$, there is $x \in A^*$ such that the run of $M$ on $x$ ends in state $q$.
\end{definition}

\begin{definition}[Syntactic Automaton]
    Fix an orbit-finite nominal alphabet $A$, and let $L : A^* \to \{0,1\}$ be a $G$-language. The \emph{syntactic automaton} of $L$, denoted $M_L$ is specified as follows: 
    \begin{itemize}
        \item the state set is the set $A^* / \equiv_L$;
        \item the transition function is $\delta_L : A^* / \equiv_L \times \ A \to A^* / \equiv_L$\\ defined by 
        \[
            \delta_L([x]_L, a) = [xa]_L,
        \]
        \item the initial state is $[\epsilon]_L$;
        \item the set of accepting states is $\{[x]_L \mid x \in L\}$.
    \end{itemize}
\end{definition}

\begin{lemma}[{\cite[Lemma 3.6; Proposition 5.1]{bojanczyk-klin-lasota:automata-theory-in-nominal-sets}}]
    The syntactic automaton of a $G$-language is a reachable nominal DFA.
\end{lemma}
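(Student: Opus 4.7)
The plan is to verify each component in the definition of a nominal DFA for $M_L$, then establish reachability. The key observation is that the claim that $\equiv_L$ is an equivariant equivalence relation on $A^*$ (cited from Bojanczyk--Klin--Lasota) has already been invoked, so Lemma~\ref{lem:quotient by equivariant equivalence relation is nominal} immediately gives that $A^* / \equiv_L$ is a nominal set with action $\pi \cdot [x]_L = [\pi \cdot x]_L$ and with quotient map $x \mapsto [x]_L$ surjective and equivariant.

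Next I would verify the four pieces of data for $M_L$. First, the transition function $\delta_L([x]_L, a) = [xa]_L$ is well-defined: if $[x]_L = [x']_L$, then for any $z \in A^*$ we have $L(xz) = L(x'z)$, and specializing to strings of the form $az'$ gives $L(xa z') = L(x' a z')$, so $[xa]_L = [x'a]_L$. For equivariance of $\delta_L$, for $\pi \in G$ I would compute
\[
\delta_L(\pi \cdot ([x]_L, a)) = \delta_L([\pi \cdot x]_L, \pi \cdot a) = [(\pi \cdot x)(\pi \cdot a)]_L = [\pi \cdot (xa)]_L = \pi \cdot [xa]_L,
\]
using that the action of $G$ on $A^*$ is defined componentwise. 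Second, the initial state $[\epsilon]_L$ has singleton orbit since $\pi \cdot \epsilon = \epsilon$ for every $\pi \in G$, so $\pi \cdot [\epsilon]_L = [\epsilon]_L$. Third, the set of accepting states $F = \{[x]_L : L(x) = 1\}$ is equivariant: since $L$ is a $G$-language, the set $\{x : L(x) = 1\}$ is an equivariant subset of $A^*$, so for any $[x]_L \in F$ and $\pi \in G$, we have $L(\pi \cdot x) = 1$, hence $\pi \cdot [x]_L = [\pi \cdot x]_L \in F$.

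For reachability, I would observe by a straightforward induction on $|x|$ that the run of $M_L$ on an input $x = x_1 x_2 \cdots x_n$ passes through the states $[\epsilon]_L, [x_1]_L, [x_1 x_2]_L, \ldots, [x_1 \cdots x_n]_L$, ending at $[x]_L$. Since every equivalence class $[x]_L$ has some representative $x \in A^*$, this shows every state of $M_L$ is reached by some input string.

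I do not expect any step to be a substantial obstacle; each verification is either a direct unwinding of definitions or an immediate appeal to the cited lemma on quotients by equivariant equivalence relations. The one mild subtlety worth flagging is that the statement only asserts $M_L$ is a nominal DFA, not an orbit-finite one, so I would not need to prove $A^*/\equiv_L$ is orbit-finite here (that is the content of the forward direction of the nominal Myhill--Nerode theorem and would require additional hypotheses on $L$). If the ambient definition of nominal DFA being used does require orbit-finiteness of $Q$, I would note that in this lemma the statement should be read as producing a structure satisfying all other defining properties, with orbit-finiteness handled separately when $L$ is known to be nominal regular.
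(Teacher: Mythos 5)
Your proposal is correct. Note that the paper does not actually prove this lemma---it is imported verbatim from Boja\'nczyk--Klin--Lasota \cite{bojanczyk-klin-lasota:automata-theory-in-nominal-sets}---so there is no in-paper argument to compare against; your direct verification (nominality of $A^*/\equiv_L$ via \autoref{lem:quotient by equivariant equivalence relation is nominal}, well-definedness and equivariance of $\delta_L$, singleton orbit of $[\epsilon]_L$, equivariance of $F$, and reachability by the observation that the run on $x$ ends at $[x]_L$) is exactly the standard one. You are also right to flag orbit-finiteness as the one genuine subtlety: under the paper's definition a nominal DFA must have an orbit-finite state set, and $A^*/\equiv_L$ need not be orbit-finite for an arbitrary $G$-language, so the lemma as stated should be read with your caveat; in the paper's only use of it (inside the proof of \autoref{thm:nominal myhill-nerode}), orbit-finiteness of $A^*/\equiv_L$ is supplied by the hypotheses of each direction, so nothing breaks.
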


By \cite[Lemma 3.7]{bojanczyk-klin-lasota:automata-theory-in-nominal-sets}, a $G$-language is always recognized by its syntactic automaton.

\begin{definition}[Automaton Homomorphism]
    Let $M = (Q, \delta, q_0, F)$ and $M' = (Q', \delta', q_0', F')$ be two nominal DFAs over the same alphabet $A$. 
    An \emph{automaton homomorphism} from $M$ to $M'$ is an equivariant function $f : Q \to Q'$ such that:
    \begin{itemize}
        \item $f(q_0) = q_0'$;
        \item $q \in F \iff f(q) \in F'$ for every $q \in Q$; and 
        \item $f(\delta(q,a)) = \delta'(f(q), a)$ for every $q \in Q, a \in A$.
    \end{itemize}
\end{definition}

If there exists an automaton homomorphism from $M$ to $M'$, then $M$ and $M'$ recognize the same language: for any string $x$, the run of $M$ on $x$ ends in state $q$ if and only if the run of $M'$ on $x$ ends in the state $f(q)$, and $q \in F$ if and only if $f(q) \in F'$, so $M$ accepts $x$ if and only if $M'$ accepts $x$.

\begin{lemma}[{\cite[Lemma 3.7]{bojanczyk-klin-lasota:automata-theory-in-nominal-sets}}] \label{lem:syntactic automaton is homomorphic image}
    Let $L$ be a $G$-language.
    For any reachable nominal DFA $M$ that recognizes $L$, there is a surjective automaton homomorphism $f$ from $M$ to $M_L$.
\end{lemma}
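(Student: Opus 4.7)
The plan is to construct the homomorphism $f \colon Q \to A^*/\!\equiv_L$ via the canonical ``pick a witnessing string'' recipe. Given $q \in Q$, reachability provides at least one $x \in A^*$ whose run in $M$ ends in $q$; define $f(q) := [x]_L$. The bulk of the proof is then verifying that this definition is independent of the chosen $x$ and that it satisfies all four conditions (equivariance, initial state, acceptance, commutation with transitions) in the definition of automaton homomorphism, plus surjectivity.

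First I would verify well-definedness. Suppose two strings $x, y$ both lead $M$ to the same state $q$. Then for every $z \in A^*$, the runs of $M$ on $xz$ and $yz$ agree from state $q$ onward, so they accept on the same $z$'s; since $M$ recognizes $L$, this gives $L(xz) = L(yz)$ for all $z$, i.e.\ $x \equiv_L y$. Hence $[x]_L = [y]_L$ and $f(q)$ is unambiguously defined. Surjectivity is then immediate: for any class $[x]_L$, run $M$ on $x$ to obtain a state $q$ with $f(q) = [x]_L$.

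Next I would check the four homomorphism axioms. For the initial state, $\varepsilon$ reaches $q_0$, so $f(q_0) = [\varepsilon]_L$, which is the initial state of $M_L$. For transitions, if $x$ witnesses $f(q) = [x]_L$, then $xa$ reaches $\delta(q,a)$, so
\[
f(\delta(q,a)) = [xa]_L = \delta_L([x]_L, a) = \delta_L(f(q), a).
\]
For accepting states, $q \in F$ iff $M$ accepts any (equivalently, some) $x$ reaching $q$, iff $x \in L$, iff $[x]_L$ is accepting in $M_L$ by the definition of the syntactic automaton's accepting set. For equivariance, fix $\pi \in G$ and $q \in Q$ with witness $x$. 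The hypothesis that $q_0$ has singleton orbit gives $\pi \cdot q_0 = q_0$, and equivariance of $\delta$ together with an induction on $|x|$ yields that the run of $M$ on $\pi \cdot x$ ends at $\pi \cdot q$; combined with equivariance of the quotient action (\autoref{lem:quotient by equivariant equivalence relation is nominal}) we conclude $f(\pi \cdot q) = [\pi \cdot x]_L = \pi \cdot [x]_L = \pi \cdot f(q)$.

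The main obstacle is the well-definedness step, since it is the one place where the hypothesis ``$M$ recognizes $L$'' is actually used, and everything else is then a formal consequence of reachability, equivariance of $\delta$, fixedness of $q_0$, and the explicit definition of $M_L$. Once well-definedness is in hand, the equivariance check is the only other spot that is not entirely mechanical, and it reduces cleanly to the inductive claim that $\pi$ acts on runs componentwise, which in turn uses only the two standing assumptions on $q_0$ and $\delta$.
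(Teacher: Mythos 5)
Your proof is correct. The paper does not actually prove this lemma---it is cited verbatim from Boja\'nczyk, Klin, and Lasota---but your argument is the standard one used there: define $f(q)=[x]_L$ for any witnessing string $x$, with well-definedness following from the fact that $M$ recognizes $L$, and the homomorphism axioms (including equivariance, via the inductive claim that $\pi$ maps the run on $x$ to the run on $\pi\cdot x$) all checked correctly.
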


We can now prove bounds in the nominal Myhill-Nerode theorem:

\begin{proof}[Proof of \autoref{thm:nominal myhill-nerode}]
    (1. $\Rightarrow$ 2.) Suppose that $A^* / \equiv_L$ has at most $n$ orbits and nominal dimension at most $k$.
    Then $M_L$ is a nominal DFA that recognizes $L$, and its state set is $A^* / \equiv_L$ which by assumption has at most $n$ orbits and nominal dimension at most $k$.

    (2. $\Rightarrow$ 1.) Suppose that $L$ is recognized by a nominal DFA $M$ with at most $n$ orbits and nominal dimension at most $k$. 
    We may assume that $M$ is reachable, so by \autoref{lem:syntactic automaton is homomorphic image}, there is a surjective automaton homomorphism $f$ from $M$ to the syntactic automaton $M_L$. 
    By \autoref{lem:orbit of image under equivariant function} and \autoref{lem:supports preserved under equivariant functions}, the state set of $M_L$ has at most $n$ orbits and nominal dimension $k$. 
    But the state set of $M_L$ is exactly $A^* / \equiv_L$, and so we obtain the desired bounds.
\end{proof}

\subsection{Proof of \autoref{prop:number of orbits of product}} \label{appendix:proof of number of orbits of product}

\begin{proof}
    Let $(x_1, \ldots, x_n) \in X$.
    Its orbit $G \cdot (x_1, \ldots, x_n)$ must be contained in the product $(G \cdot x_1) \times \cdots \times (G \cdot x_n)$.
    Hence, it is enough to bound the number of orbits of any given product $O_1 \times \ldots \times O_n$, where each $O_i$ is an orbit of $X_i$, and multiply by the number of possible products.
    Fix orbits $O_1, \ldots, O_n$ of $X_1, \ldots, X_n$, respectively. 
    By \autoref{lem:single orbit set is surjective image of N^(k)}, there are equivariant surjections $f_{O_i} : \bb N^{(k_i)} \to O_i$. 
    Taking the product gives us the equivariant surjection
    \[
        \bb N^{(k_1)} \times \cdots \times \bb N^{(k_n)} \xrightarrow{f_{O_1} \times \cdots \times f_{O_n}} O_1 \times \cdots \times O_n.
    \]
    By \autoref{lem:orbit of image under equivariant function}, $O_1 \times \cdots \times O_n$ has at most as many orbits as $\bb N^{(k_1)} \times \cdots \times \bb N^{(k_n)}$, which has $f_{\bb N}(k_1, \ldots, k_n)$ orbits.
    Now, there were $\ell = \ell_1 \ell_2 \cdots \ell_n$ ways to choose the orbits $O_1, \ldots, O_n$, so in total, $X$ can have at most $\ell f_{\bb N}(k_1, \ldots, k_n)$ orbits.
\end{proof}

\subsection{Proof of \autoref{prop:bounds on fN(k1,k2)}} \label{appendix:proof of bounds on fN(k1,k2)}

\begin{proof}
    Consider tuples $\bar a = (a_1, \ldots, a_{k_1}) \in \bb N^{(k_1)}$ and $\bar b = (b_1, \ldots, b_{k_2}) \in \bb N^{(k_2)}$. 
    Notice that the orbit of the pair $(\bar a, \bar b)$ is exactly determined by the collection of indices $i,j$ such that $a_i = b_j$.
    So to choose an orbit, we can first choose the number of indices $0 \leq r \leq k_2$ that $\bar a$ and $\bar b$ coincide on.
    Then we need to choose $r$ indices $i_1, \ldots, i_r$ of $\bar a$, $r$ indices of $j_1, \ldots, j_r$ of $\bar b$, and a bijection between $\{i_1, \ldots, i_r\}$ and $\{j_1, \ldots, j_r\}$ in order to determine the indices that $\bar a$ and $\bar b$ coincide on. 
    There are $\binom{k_1}{r}\binom{k_2}{r}r!$ ways to choose these. 
    
    Thus the total number of orbits is 
    \[
    f_{\bb N}(k_1, k_2) = \sum_{r=0}^{k_2} \binom{k_1}{r}\binom{k_2}{r}r!,
    \]

    This is lower bounded by $\binom{k_1}{k_2} k_2!$.
    Since $k_2! \geq \left(\frac{k_2}{e}\right)^{k_2}$ and $\binom{k_1}{k_2} \geq \left(\frac{k_1}{k_2}\right)^{k_2}$ for any value of $k_1, k_2$, we have that
    \[
        f_{\bb N}(k_1, k_2) \geq \binom{k_1}{k_2} k_2! \geq \left(\frac{k_1}{k_2}\right)^{k_2} \left(\frac{k_2}{e}\right)^{k_2} = \left(\frac{k_1}{e}\right)^{k_2}.
    \]

    On the other hand, since $\binom{n}{k} \leq \frac{n^k}{k!}$ for any $n$ and $k$,
    \begin{align*}
        \sum_{r=0}^{k_2} \binom{k_1}{r}\binom{k_2}{r}r! &\leq \sum_{r=0}^{k_2} \frac{k_1^r}{r!} \binom{k_2}{r} r! \\
        &= \sum_{r=0}^{k_2} k_1^r \binom{k_2}{r} \\
        &\leq \sum_{r=0}^{k_2} k_1^{k_2} \binom{k_2}{r} \\
        &= k_1^{k_2} \sum_{r=0}^{k_2} \binom{k_2}{r} \\
        &= k_1^{k_2} 2^{k_2} = (2k_1)^{k_2}
    \end{align*}
\end{proof}

\subsection{Proof of \autoref{prop:number of single orbit nominal sets}} \label{appendix:proof of number of single orbit nominal sets}

The following definitions and propositions are adapted from \cite[Section 9]{bojanczyk-klin-lasota:automata-theory-in-nominal-sets}, and will allow us to bound the number of possible nominal sets.

\begin{definition}[{cf. \cite[Definition 9.11]{bojanczyk-klin-lasota:automata-theory-in-nominal-sets}}]
    A \emph{support representation} is a pair $(k,S)$, where $k \in \bb N$, and $S$ is a subgroup of $\Sym([k])$. 
\end{definition}

\begin{definition}[{cf. \cite[Definition 9.14]{bojanczyk-klin-lasota:automata-theory-in-nominal-sets}}] \label{def:G-set semantics}
    Given a support representation $(k,S)$, the \emph{semantics} of $(k,S)$, denoted $[k,S]^{ec}$, is the set $\bb N^{(k)} / \equiv_S$, where $\equiv_S$ is defined as
    \[
    (a_1, \ldots, a_k) \equiv_S (b_1, \ldots, b_k) \iff \\ \exists \tau \in S \ \forall i \in [k], \ a_{\tau(i)} = b_i .
    \]
    There is a natural action of $G$ on $[k,S]^{ec}$ defined by 
    \[
    \pi \cdot [(a_1, \ldots, a_k)]_S = [(\pi(a_1), \ldots, \pi(a_k))]_S.
    \]
\end{definition}

An \emph{isomorphism} of nominal sets is an equivariant bijection between two nominal sets.

\begin{proposition}[{cf. \cite[Proposition 9.15]{bojanczyk-klin-lasota:automata-theory-in-nominal-sets}}] \label{prop:characterization of single-orbit nominal sets}
    For any support representation $(k,S)$, $[k,S]^{ec}$ is a single-orbit nominal set of nominal dimension $k$, and every single-orbit nominal set $X$ of nominal dimension $k$ is isomorphic to $[k,S]^{ec}$ for some $S \leq \Sym([k])$. 
\end{proposition}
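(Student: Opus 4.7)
The plan is to handle the two halves of the statement separately. For the ``if'' direction, I would fix $(k,S)$ and verify three things about $[k,S]^{ec}$: (i) the $G$-action descends to the quotient (routine, since $\equiv_S$ only relates tuples that are $S$-permutations of one another, while $G$ acts on entries, so the two actions commute); (ii) it has a single orbit (given two classes $[(a_1,\ldots,a_k)]_S$ and $[(b_1,\ldots,b_k)]_S$, both representative tuples lie in $\bb N^{(k)}$, so the partial bijection $a_i \mapsto b_i$ extends to some $\pi \in G$); and (iii) the nominal dimension is exactly $k$. For (iii), I would show that $\{a_1,\ldots,a_k\}$ supports the class trivially, and that the class is not fixed by any $\pi \in G$ that sends some $a_j$ to a fresh element $c \notin \{a_1,\ldots,a_k\}$: the resulting tuple has $c$ in position $j$, and no $\equiv_S$-equivalent tuple contains $c$ because $\equiv_S$ preserves the set of entries. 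Together with the intersection characterization of least support, this forces $\mathrm{supp}([(a_1,\ldots,a_k)]_S) = \{a_1,\ldots,a_k\}$.

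For the ``only if'' direction, let $X$ be single-orbit of nominal dimension $k$. Fix $x_0 \in X$ with $|\mathrm{supp}(x_0)| = k$ and enumerate $\mathrm{supp}(x_0) = \{d_1,\ldots,d_k\}$. Define
\[
S \;=\; \{\tau \in \Sym([k]) \;:\; \exists\,\pi \in G,\ \pi(d_i) = d_{\tau(i)}\ \forall i,\ \text{and }\pi \cdot x_0 = x_0\}.
\]
I would check that $S$ is a subgroup of $\Sym([k])$ by composing and inverting witnessing permutations. Then I would define $\phi : \bb N^{(k)} \to X$ by $\phi(a_1,\ldots,a_k) = \pi \cdot x_0$, where $\pi$ is any element of $G$ with $\pi(d_i) = a_i$ for all $i$. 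Well-definedness holds because two such $\pi, \pi'$ differ by something fixing $\mathrm{supp}(x_0)$ pointwise, which then fixes $x_0$ by definition of support. Equivariance of $\phi$ is immediate ($\sigma \circ \pi$ witnesses $\sigma$ acting on the tuple), and surjectivity follows from single-orbitness applied to $x_0$.

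The substantive step is to show that $\phi$ factors through $\equiv_S$ as a \emph{bijection} onto $X$. Given tuples $\bar a, \bar b$ with witnessing $\pi, \pi'$, the equality $\phi(\bar a) = \phi(\bar b)$ is equivalent to $(\pi')^{-1}\pi \cdot x_0 = x_0$. Using the corollary to \autoref{lem:size of support preserved under action}, any stabilizer of $x_0$ must setwise preserve $\mathrm{supp}(x_0)$, so $(\pi')^{-1}\pi$ induces some $\tau \in \Sym([k])$ with $d_{\tau(i)} = (\pi')^{-1}(a_i)$, i.e.\ $b_{\tau(i)} = a_i$, and $\tau \in S$ by construction. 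Conversely, the existence of such a $\tau \in S$ lets one write $\pi = \pi' \circ \rho$ for a $\rho$ fixing $x_0$. Matching this with \autoref{def:G-set semantics} (after noting that $\tau \in S \iff \tau^{-1} \in S$ since $S$ is a subgroup) gives the desired equivalence, so $\phi$ induces an equivariant bijection $[k,S]^{ec} \to X$, which is an isomorphism of nominal sets.

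The main obstacle I expect is the bookkeeping in this last step: one has to juggle two permutations of $\bb N$, their restrictions to $\mathrm{supp}(x_0)$, and the induced permutations of $[k]$, and line the direction conventions up with the $a_{\tau(i)} = b_i$ convention in \autoref{def:G-set semantics}. The rest of the argument is comparatively mechanical.
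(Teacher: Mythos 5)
Your argument is correct. Note that the paper does not actually prove this proposition: it is imported verbatim (as ``cf.\ Proposition 9.15'') from Boja\'nczyk--Klin--Lasota, so there is no in-paper proof to compare against. Your proof is essentially the standard orbit--stabilizer argument used in that reference: the forward direction's key observation that $\equiv_S$ preserves the entry set (so moving a support element off the tuple moves the class) correctly pins down the least support, and in the converse direction the group $S$ you define is exactly the image in $\Sym([k])$ of the stabilizer of $x_0$ restricted to $\mathrm{supp}(x_0)$, with the map $\phi$ descending to an isomorphism. The one step worth writing out explicitly is that a permutation fixing $x_0$ preserves $\mathrm{supp}(x_0)$ \emph{setwise}; this needs $\mathrm{supp}(\pi\cdot x)=\pi(\mathrm{supp}(x))$ rather than just the cardinality statement in the stated corollary to \autoref{lem:size of support preserved under action}, but that equality follows from the lemma by applying it to both $\pi$ and $\pi^{-1}$.
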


\begin{proposition}[{cf. \cite[Proposition 9.16]{bojanczyk-klin-lasota:automata-theory-in-nominal-sets}}] \label{prop:characterization of equivariant functions}
    Let $X = [k,S]^{ec}$ and $Y = [\ell,T]^{ec}$ be single-orbit nominal sets. 
    Let 
    \begin{align*}
        U = \{u : \ &[\ell] \to [k] \mid u \text{ is injective and } \\ 
        &\forall \sigma \in S  \ \exists \tau \in T, \ \sigma \circ u = u \circ \tau\}.
    \end{align*}
    Equivariant functions from $X$ to $Y$ are in bijective correspondence with $U / \equiv_T$ (where $\equiv_T$ is as in \autoref{def:G-set semantics}).
\end{proposition}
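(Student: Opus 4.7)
The plan is to construct the bijection explicitly. Given $u \in U$, I would define $f_u : X \to Y$ on representatives by
\[
f_u([(a_1, \ldots, a_k)]_S) := [(a_{u(1)}, \ldots, a_{u(\ell)})]_T.
\]
Injectivity of $u$ ensures the output tuple lies in $\bb N^{(\ell)}$, and well-definedness on $\equiv_S$-classes uses precisely the defining condition of $U$: if $(a) \equiv_S (b)$ via $\sigma \in S$, then the $\tau \in T$ produced by $\sigma \circ u = u \circ \tau$ witnesses $(a_{u(i)})_i \equiv_T (b_{u(i)})_i$, since $b_{u(j)} = a_{\sigma(u(j))} = a_{u(\tau(j))}$. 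Equivariance of $f_u$ is immediate because the $G$-action commutes with coordinate selection. Also, the assignment $u \mapsto f_u$ factors through $\equiv_T$: if $u \circ \tau = u'$ for some $\tau \in T$, then $(a_{u(1)}, \ldots, a_{u(\ell)}) \equiv_T (a_{u'(1)}, \ldots, a_{u'(\ell)})$ by definition, so $f_u = f_{u'}$.

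For injectivity of the induced map on $U / \equiv_T$, I would evaluate at the distinguished element $x_0 := [(1, 2, \ldots, k)]_S$; this yields $f_u(x_0) = [(u(1), \ldots, u(\ell))]_T$, so $f_u = f_{u'}$ immediately unpacks to $u \equiv_T u'$ via the definition of $\equiv_T$.

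The substantive step is surjectivity. Given equivariant $f : X \to Y$, write $f(x_0) = [(b_1, \ldots, b_\ell)]_T$. I would first establish the key support calculation: the least support of $[(b_1, \ldots, b_\ell)]_T$ in $Y$ equals $\{b_1, \ldots, b_\ell\}$. The set clearly supports the class (any $\pi$ fixing it pointwise fixes the tuple), while no proper subset does, because a transposition moving an omitted $b_i$ to a fresh element outside $\{b_1, \ldots, b_\ell\}$ fixes the remaining $b_j$'s but alters the set of entries appearing in every representative, hence alters the class. A parallel argument shows $\mathrm{supp}(x_0) = \{1, \ldots, k\}$, and combining with \autoref{lem:supports preserved under equivariant functions} gives $\{b_1, \ldots, b_\ell\} \subseteq \{1, \ldots, k\}$. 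This lets me honestly define $u : [\ell] \to [k]$ by $u(i) := b_i$. To verify $u \in U$: for $\sigma \in S$, extend $\sigma$ to $\hat\sigma \in G$ by the identity on $\bb N \setminus [k]$. Then $\hat\sigma \cdot x_0 = x_0$ since $(1, \ldots, k) \equiv_S (\sigma(1), \ldots, \sigma(k))$, so equivariance of $f$ gives $\hat\sigma \cdot f(x_0) = f(x_0)$; comparing both sides as $\equiv_T$-classes produces a $\tau \in T$ with $\sigma \circ u = u \circ \tau$. Finally, $f_u(x_0) = f(x_0)$ by construction, and since $X$ consists of a single orbit, equivariance of both $f$ and $f_u$ forces $f_u = f$.

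The main obstacle is the support calculation inside the surjectivity step: pinning down that every representative of $f(x_0)$ has all entries in $\{1, \ldots, k\}$, not merely in $\bb N$. Without this, $u$ would only be an injection $[\ell] \to \bb N$ and would fail to land in $U$; the rest of the argument is routine unpacking of the definitions of $\equiv_S$, $\equiv_T$, and equivariance.
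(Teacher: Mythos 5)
Your proof is correct. Note that the paper does not actually prove this proposition --- it is imported verbatim from Boja\'nczyk--Klin--Lasota (Proposition 9.16 there), so there is no in-paper argument to compare against; your write-up supplies the standard proof of that cited result. The one genuinely delicate point, which you correctly isolate and handle, is the support computation in the surjectivity step: the set of entries of a tuple is an invariant of its $\equiv_T$-class (since $\tau$ permutes coordinates), which is what makes $\{b_1,\ldots,b_\ell\}$ the \emph{least} support of $f(x_0)$ and, via \autoref{lem:supports preserved under equivariant functions}, forces it inside $[k]$ so that $u$ lands in $U$. The remaining verifications (well-definedness via $\sigma\circ u = u\circ\tau$, factoring through $\equiv_T$, injectivity by evaluating at $[(1,\ldots,k)]_S$, and extending $\sigma\in S$ by the identity to get the membership condition for $u$) all unpack correctly against \autoref{def:G-set semantics}.
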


\begin{lemma} \label{lem:characterization of G-set semantics}
    $[k,S]^{ec}$ is determined, up to isomorphism, by $k$ and the conjugacy class of $S$ in $\Sym([k])$. 
\end{lemma}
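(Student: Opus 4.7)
The plan is to prove the direction that the lemma asserts: if $S, S' \leq \Sym([k])$ are conjugate, then $[k,S]^{ec}$ and $[k,S']^{ec}$ are isomorphic as nominal sets. Fix $\pi \in \Sym([k])$ with $S' = \pi S \pi^{-1}$. I would define the candidate isomorphism $\phi : [k,S]^{ec} \to [k,S']^{ec}$ by permuting coordinates:
\[
\phi([(a_1, \ldots, a_k)]_S) = [(a_{\pi^{-1}(1)}, \ldots, a_{\pi^{-1}(k)})]_{S'}.
\]

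First I would verify that $\phi$ is well-defined, i.e.\ respects the two equivalence relations. Suppose $(a_1,\ldots,a_k) \equiv_S (b_1,\ldots,b_k)$, so there is $\tau \in S$ with $a_{\tau(i)} = b_i$ for every $i \in [k]$. I would show that $\tau' := \pi \tau \pi^{-1}$ witnesses $(a_{\pi^{-1}(1)},\ldots,a_{\pi^{-1}(k)}) \equiv_{S'} (b_{\pi^{-1}(1)},\ldots,b_{\pi^{-1}(k)})$: substituting $j = \pi(i)$ in the identity $a_{\tau(i)} = b_i$ and using $\pi^{-1}\tau'\pi = \tau$ gives $a_{\pi^{-1}(\tau'(j))} = b_{\pi^{-1}(j)}$ for all $j$, and $\tau' \in S'$ by conjugacy.

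Next I would check equivariance: for $\sigma \in G = \Sym(\bb N)$, $\sigma$ acts pointwise on tuples, so applying $\sigma$ commutes with the coordinate-permutation by $\pi^{-1}$. Finally, to see that $\phi$ is a bijection, I would write down the inverse by the same recipe but with $\pi$ in place of $\pi^{-1}$, using the fact that $S = \pi^{-1} S' \pi$; well-definedness and equivariance go through symmetrically, and $\psi \circ \phi$ and $\phi \circ \psi$ are each the identity by direct computation.

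The only real subtlety is bookkeeping: one must be careful whether to insert $\pi$ or $\pi^{-1}$ in the definition of $\phi$ so that the witness $\tau'$ for well-definedness lands in $S'$ rather than some other conjugate of $S$. Once this is set up correctly, every verification reduces to a one-line index substitution, and no further structural fact about nominal sets beyond the definitions in the preceding propositions is needed.
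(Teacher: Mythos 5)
Your proof of the forward direction (conjugate subgroups give isomorphic nominal sets) is correct and is essentially the same approach as the paper: an explicit coordinate-permutation map descending to the quotients. In fact you handle the index bookkeeping more carefully than the paper does. The paper takes $\rho$ with $\rho S \rho^{-1} = T$ and defines $F^\rho(a_1,\ldots,a_k) = (a_{\rho(1)},\ldots,a_{\rho(k)})$, claiming the induced relation is $\equiv_S$; but a direct computation shows $F^\rho(a) \equiv_T F^\rho(b)$ iff $a \equiv_{\rho T \rho^{-1}} b$, so this actually requires $\rho T \rho^{-1} = S$ (equivalently, one should use $F^{\rho^{-1}}$). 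This is harmless since conjugacy is symmetric, but your version with $\pi S \pi^{-1} = S'$ and coordinates permuted by $\pi^{-1}$ gets the orientation right, and your witness $\tau' = \pi\tau\pi^{-1} \in S'$ lands in the correct group. You also verify the claim directly rather than routing through Lemma~\ref{lem:induced equivariant equivalence relations}, which is a minor simplification.

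The one gap is that you have only proved one direction. The paper's proof also establishes the converse: if $[k,S]^{ec} \cong [\ell,T]^{ec}$ then $k = \ell$ and $S, T$ are conjugate. This uses the machinery of Proposition~\ref{prop:characterization of equivariant functions}, whose proof produces a bijection $u : [\ell] \to [k]$ with $uS = Tu$; in particular $k = \ell$ and $u$ conjugates $S$ to $T$. Whether the converse is strictly part of the assertion ``determined, up to isomorphism, by $k$ and the conjugacy class'' is a matter of reading (the forward direction alone says the conjugacy class determines the iso class, and that is all that is used downstream in Proposition~\ref{prop:number of single orbit nominal sets} to bound the number of iso classes by the number of conjugacy classes), but the paper's proof does include both directions, and your closing remark that ``no further structural fact about nominal sets beyond the definitions\ldots is needed'' is true only of the direction you prove; the converse does lean on Proposition~\ref{prop:characterization of equivariant functions}.
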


\begin{proof}
    Let $X = [k,S]^{ec}$ and $Y = [\ell,T]^{ec}$ be single-orbit nominal sets.
    Suppose that $k = \ell$ and that $S,T$ are conjugate in $\Sym([k])$. 
    That is, there is a permutation $\rho : [k] \to [k]$ such that $\rho S \rho^{-1} = T$. 
    Define $F^\rho : \bb N^{(k)} \to \bb N^{(k)}$ by $F^\rho((a_1, \ldots, a_k)) = (a_{\rho(1)}, \ldots, a_{\rho(k)})$ (i.e., reorder the input using $\rho$). 
    Notice that $F^\rho$ is an equivariant bijection.
    By \autoref{lem:induced equivariant equivalence relations}, $F^\rho$ and $\equiv_T$ induce an equivalence relation on $\bb N^{(k)}$.
    Since $\rho S \rho^{-1} = T$, the induced equivalence relation is actually $\equiv_S$. 
    Then the induced function $f$ is an equivariant bijection between $X$ and $Y$.
    
    In the other direction, suppose there is an isomorphism $f : X \to Y$.
    The proof of \autoref{prop:characterization of equivariant functions} gives us a bijection $u : [\ell] \to [k]$ such that $uS = Tu$. 
    In particular, $k=\ell$ and $u$ is a permutation in $\Sym([k])$ that witnesses that $S$ and $T$ are conjugate. 
\end{proof}

With this machinery in hand, we can give the proof of \autoref{prop:number of single orbit nominal sets}:

\begin{proof}
    Let $X = [k',S]^{ec}$ be a single-orbit nominal set with nominal dimension at most $k$. 
    By \autoref{lem:characterization of G-set semantics}, $X$ is determined up to isomorphism by the value of $k'$ and the conjugacy class of $S$ in $\Sym([k'])$. 
    Since the nominal dimension of $X$ is $k$, $k' \leq k$, and so there are $k$ choices for $k'$. It remains to count the number of conjugacy classes of subgroups of $\Sym([k])$. 
    The number of subgroups of $\Sym([k])$ is $2^{\Theta(k^2)}$ \cite[Theorem 4.2]{pyber:asymptotic-results-permutation-groups}, which certainly gives an upper bound for the number of conjugacy classes of subgroups. 
    Additionally, any subgroup has at most $k!$ conjugates (one for each permutation in $\Sym([k])$), so the number of conjugacy classes is also at least $2^{O(k^2)}/k! = 2^{O(k^2)}$.
    
    Thus the number of single-orbit nominal sets of nominal dimension at most $k$ is at most $k 2^{O(k^2)} = 2^{O(k^2)}$. 
\end{proof}

\subsection{Proof of \autoref{lem:number of nominal transition functions}} \label{appendix:proof of number of nominal transition functions}

\begin{proof}
    We count the number of equivariant functions $\delta : Q \times A \to Q$, where $Q$ has $n$ orbits and nominal dimension $k$. 
    Since $\delta$ is equivariant, by \autoref{lem:orbit of image under equivariant function}, a single orbit of $Q \times A$ must map into a single orbit of $Q$, so we can first choose a target orbit of $Q$ for each orbit of $Q \times A$.
    By \autoref{prop:number of orbits of product}, $Q \times A$ has at most $n \ell f_{\bb N}(k,p)$ orbits.
    Since $p$ is fixed, by \autoref{prop:bounds on fN(k1,k2)}, for any $k$, $f_{\bb N}(k,p) = O(k^{p})$.
    Furthermore, since $\ell$ is also fixed, $Q \times A$ has at most $O(nk^{p})$ orbits.
    Thus there are $n^{O(nk^{p})}$ many ways to choose the target orbits of each orbit of $Q \times A$. 
    
    Once we have chosen a target orbit of $Q$ for each orbit of $Q \times A$, we must choose an equivariant function from each orbit $O_1$ of $Q \times A$ to one orbit $O_2$ of $Q$. 
    By \autoref{prop:dimension of product}, the nominal dimension of $Q \times A$ is at most $k+p$, and so $O_1$ also has nominal dimension at most $k+p$.
    Similarly, $O_2$ has nominal dimension at most $k$. 
    Therefore $O_1 = [k',S]^{ec}$ where $k' \leq k+p$, and $O_2 = [\ell,T]^{ec}$ where $\ell \leq k$.
    By \autoref{prop:characterization of equivariant functions}, the number of equivariant functions from $O_1$ to $O_2$ is upper bounded by the number of injections $[\ell] \to [k]$, which is in turn at most $\frac{(k+p)!}{p!} \leq (k+p)!$.
    
    We must choose one such equivariant function for each orbit of $Q \times A$, so the total number of choices of all of these functions is at most $\left((k+p)!\right)^{O(nk^{p})}$. 
    Once we have done this, we have chosen a transition behavior $\delta : Q \times A \to Q$.
    
    Hence the total number of possible transition behaviors is upper bounded by 
    \[
    n^{O(nk^{p})} \cdot \left((k+p)!\right)^{nk^{p}}= \left(n(k+p)!\right)^{O(nk^{p})}.
    \]
\end{proof}

\subsection{Proof of \autoref{lem:moving one element outside of least support moves x}} \label{appendix:proof of moving one element outside of least support moves x}

\begin{proof}
    Suppose for contradiction that $\tau \cdot x = x$. 
    Let $i \in D$ be the only element of $D$ such that $\tau(i) \neq i$, and notice that $\tau(i) \notin D$ since $\tau$ fixes every element of $D$ other than $i$.
    We claim that $D \setminus \{i\}$ supports $x$. 
    To see this, let $\sigma \in G$ such that $\sigma$ fixes every element of $D \setminus i$. 
    We need to show that $\sigma \cdot x = x$.
    We may assume that $\sigma(i) \neq i$, since otherwise $\sigma$ fixes every element of $D$, and since $D$ supports $x$, we would have $\sigma \cdot x = x$. 
    Additionally, $\sigma(i) \notin D$ since $\sigma$ fixes every element of $D$ other than $i$. 
    Now, let $j = \tau(i)$ and $k = \sigma(i)$, and consider the permutation $(j \ k) \sigma$. 
    We have that $(j \ k) \sigma|_D = \tau|_D$, as $\sigma$ and $\tau$ both fix every element of $D \setminus \{i\}$, and $\tau(i) = j = (j \ k)(k) = (j / k)\sigma(i)$. 
    Since $D$ supports $x$, we can deduce that $(j \ k) \sigma \cdot x = \tau \cdot x = x$. 
    Then $\sigma \cdot x = (j \ k) \cdot x$. 
    Since $j,k \notin D$, $(j \ k)$ fixes every element of $D$, and so $(j \ k) \cdot x = x$. 
    This shows that $\sigma \cdot x = x$, and since $\sigma$ was an arbitrary permutation that fixed every element of $D \setminus \{i\}$, we may conclude that $D \setminus \{i\}$ supports $x$.
    This is a contradiction, since $D$ is the \textit{least} support of $x$, and hence $\tau \cdot x \neq x$.
\end{proof}

\subsection{Proof of \autoref{lem:short witnesses to distinct equiv_L classes}} \label{appendix:proof of short witnesses to distinct equiv_L classes}

\begin{proof}
    Suppose for contradiction that there is some orbit $G \cdot [x]_L$ of $A^* / \equiv_L$ such that the orbit of every string of length strictly less than $n$ is not in $G \cdot [x]_L$. 
    We may assume that the representative of the orbit $x$ is of minimal length $m \geq n$. 
    Write $x = a_1 \cdots a_m$. 

    Consider the collection of prefixes of $x$ of length up to $n-1$, i.e., the set $\{\epsilon, a_1, a_1 a_2, \ldots, a_1 \cdots a_{n-1}\}$.
    Since these all have length strictly less than $n$, they must belong to one of the $(n-1)$-many orbits that are not in $G \cdot [x]_L$. 
    As there are $n$ strings in the collection, there must be $0 \leq i < j \leq n-1$ such that $[a_1 \cdots a_i]_L$ and $[a_1 \cdots a_j]_L$ belong to the same orbit (here, $a_1 \cdots a_0$ denotes the empty string). 
    That is, there is $\tau \in G$ such that $\tau \cdot (a_1 \cdots a_i) \equiv_L a_1 \cdots a_j$. 
    Notice that $\equiv_L$ is preserved under appending common suffixes; i.e., if $x \equiv_L y$, then for any $z \in A^*$, $xz \equiv_L yz$. Thus $[\tau \cdot (a_1 \cdots a_i)] a_{j+1} \cdots a_m \equiv_L a_1 \cdots a_j a_{j+1} \cdots a_m = x$. 
    However, notice that $[\tau \cdot (a_1 \cdots a_i)] a_{j+1} \cdots a_m$ is strictly shorter than $x$, and its $\equiv_L$-class is in the same orbit of $A^* / \equiv_L$ as $[x]_L$ (more than that, is in the same $\equiv_L$-class), which contradicts the assumption that $x$ was of minimal length.

    Thus for every orbit of $A^* / \equiv_L$, there must be a string of length strictly less than $n$ that is in the orbit.
\end{proof}

\subsection{Proof of \autoref{cor:dimension of A^*/equiv_L bounded in terms of orbits}} \label{appendix:proof of dimension of A^*/equiv_L bounded in terms of orbits}

\begin{proof}
    Let $[x]_L \in A^* / \equiv_L$. 
    By \autoref{lem:short witnesses to distinct equiv_L classes}, there is some $x' \in A^*$ with $|x'| < n$ and $\pi \in G$ such that $\pi \cdot [x]_L = [x']_L$. 
    Then \autoref{lem:size of support preserved under action} and \autoref{lem:dimension of quotient at most dimension of original} tell us that 
    \[
        |supp([x]_L)| = |supp(\pi \cdot [x]_L)| = |supp([x']_L)| \leq |supp(x')|.
    \]
    Each of the letters of $x'$ is supported by a set of size at most $p$.
    Since $G$ acts on $x'$ coordinate-wise, the least support of $x'$ is contained in the union of the least supports of all the letters of $x'$, which is a set of size at most $(n-1)p$.
    Hence every element of $A^* / \equiv_L$ is supported by a set of size at most $(n-1)p$, and so the nominal dimension of $A^* / \equiv_L$ is at most $(n-1)p$.
\end{proof}

\fi

\end{document}